\title{Calculation of a power price equilibrium\thanks{We thank the Oxford-Man Institute for providing historical prices used to calibrate our model, and ELEXON for providing historical data about the Balancing Mechanism used to determine physical characteristics of the power plants connected to the UK power grid. }} 
\author{Miha Troha\thanks{Mathematical Institute, Oxford University, Andrew Wiles Building, Radcliffe Observatory Quarter, Woodstock Road, Oxford OX2 6GG, United Kingdom, \email{troha@maths.ox.ac.uk}. This author was supported through grants from the Slovene human resources development and scholarship fund, and the Oxford-Man Institute.} \and Raphael Hauser\thanks{Mathematical Institute, Oxford University, Andrew Wiles Building, Radcliffe Observatory Quarter, Woodstock Road, Oxford OX2 6GG, United Kingdom, \email{hauser@maths.ox.ac.uk}. Associate Professor in Numerical Mathematics, and Tanaka Fellow in Applied Mathematics at Pembroke College, Oxford. This author was supported through grant EP/H02686X/1 from the Engineering and Physical Sciences Research Council of the UK.}}
\tikzset{decorate sep/.style 2 args=
{decorate,decoration={shape backgrounds,shape=circle,shape size=#1,shape sep=#2}}}
\begin{document}
\maketitle
\begin{abstract}
In this paper we propose a tractable quadratic programming formulation
for calculating the equilibrium term structure of electricity prices.
We rely on a theoretical model described in \cite{troha2014theexistence},
but extend it so that it reflects actually traded electricity contracts,
transaction costs and liquidity considerations. Our numerical simulations
examine the properties of the term structure and its dependence on
various parameters of the model. The proposed quadratic programming
formulation is applied to calculate the equilibrium term structure
of electricity prices in the UK power grid consisting of a few hundred
power plants. The impact of ramp up and ramp down constraints are
also studied. 
\end{abstract}
\begin{keywords}{\footnotesize term structure, quadratic programming,
game theory, mean-variance, optimization, KKT conditions.}\end{keywords}

\section{Introduction}

Since the deregulation of the electricity markets in the 1990s, the
modeling of electricity prices has attracted a lot of attention. A
first approach, nowadays called the non-structural approach, attempted
to model electricity prices by using the standard techniques that
had already been developed for modeling the prices of other financial
securities. The seminal paper \cite{lucia2000electricity} studied
the suitability of one and multi-factor Ornstein-Uhlenbeck processes
to model the spot price. As pointed out in this work, Gaussian distributions
alone can not be used for modeling spikes, and thus the Ornstein-Uhlenbeck
process was combined with a pure jump-process in \cite{hambly2009modelling}.
An extension to more general Levy processes has been described in
\cite{meyer-brandis2008multifactor}. However, as argued in \cite{benth2009theinformation},
such approaches cannot be directly used for pricing electricity derivatives,
because they neglect the non-storable nature of power. They all rely
on the non-arbitrage principle and thus implicitly assume that a buy-and-hold
strategy is possible. Such an assumption is clearly not realistic
for the electricity market. Thus, other models that try to capture
some of the physical properties of electricity have been proposed.

The seminal work \cite{barlow2002adiffusion} uses the supply and
demand stack to calculate the electricity prices. The model was extended
in \cite{howison2009stochastic} and \cite{carmona2013electricity},
where the supply stack was modeled as a function of the underlying
fuels (e.g. gas, coal, oil...) used to produce the electricity. A
closed form solution for electricity forward contracts, and for spark
and dark spread options was derived. These models capture some of
the physical properties of the power markets and in the literature
they are referred to as structural approaches. However, as argued
in \cite{robinson2005mathmodel}, models that include the ramp up/down
constraints of power plants and long term contracts are needed in
order to understand and prevent catastrophic events and market manipulation
as happened in California in 2001. 

This is the motivation behind a third, game theoretic approach, that
models the physical properties and decisions of market participants
more closely. The seminal work in this line of research was produced
by \cite{bessembinder2002equilibrium}, where a dependency between
a forward and a spot price in a two-stage market is studied. The market
consists of one producer and one consumer, who each aim to optimize
a mean-variance utility function. \cite{cavallo2005electricity} used
this work to study the impact of derivatives in the power market.
\cite{buhler2009valuation}, and \cite{buhler2009riskpremia} have
extended it to a multi-stage setting and derived a formula for a dynamic
equilibrium. The mean-variance objective function was replaced by
a general convex risk measure in \cite{demaeredaertrycke2012liquidity}.
\cite{troha2014theexistence} extended the work of \cite{buhler2009riskpremia}
to a setting with more than one producer and consumer, who optimize
their mean-variance objective functions. In contrast to other game
theoretic models, capacity and ramp up/down constraints of power plants
are included. By modeling the profit of power plants as a difference
between the power price and fuel costs together with emissions obligations,
this work also incorporates ideas from the structural approach. As
in \cite{clewlow1999amultifactor} and in \cite{clewlow1999valuing},
the model is consistent with observable fuel and emission prices.
However, \cite{troha2014theexistence} focuses only on the description
of the model and on the proof of the existence and uniqueness of the
solution. In this paper, we extend the model of \cite{troha2014theexistence}
to include transaction costs, liquidity constraints and actually traded
electricity contracts as well as propose a tractable quadratic programming
formulation to solve it numerically. The numerical results show the
dependence of the term structure of electricity prices on various
parameters of the model. Moreover, the algorithm is applied to a realistic
setting by incorporating the entire UK power grid, consisting of a
few hundred power plants.

From our model, it might not be immediately clear why market participants
do not execute all trades at the beginning of the planning horizon.
One reason for spreading out the trading activity is the availability
of contracts and their liquidity: contracts with a delivery far into
the future are much less liquid than, for example, day ahead contracts.
Liquidity is included in our model through increased costs of trading
(see Subsection \ref{sub:Costs-of-trading}). Another reason are delayed
cash flows: when trading is done through future contracts, some players
might be interested in entering into a position later and thus delay
the associated cash flows. In this paper we mainly postulate the use
of forward contracts. A simple extension to future contracts (assuming
a constant interest rate) is presented in Subsection \ref{sub:Future-contracts}.
A third reason for delayed trading is the exploitation of a trend-following
effect in the term structure: due to risk aversion of most players,
the term structure of forward or future contracts with fixed delivery
date is usually slightly upward sloping. A fourth reason are transaction
costs: when large trades, such as hedging of a whole power plant,
must be executed, they must be spread over a longer period of time
to decrease transaction costs and market impact. 

Our paper is organized as follows: To keep the paper self contained,
we introduce the main components of the term structure power price
model presented in \cite{troha2014theexistence} in Section \ref{sec:Problem-description}.
A quadratic programming formulation that can be used to calculate
the price equilibrium is described in Section \ref{sec:Algorithm}.
In Section \ref{sec:Extensions} we continue with various realistic
extensions of the model. In Section \ref{sec:Simulations}, we illustrate
the forecasting power of our model through numerical experiments.
In the first part of this section we study the term structure of electricity
prices on a simple example. In the second part we calculate the equilibrium
electricity price while modeling the entire system of UK power plants.
We conclude the paper in Section \ref{sec:Conclusions}.

\section{Organization of the UK Electricity Market\label{sec:Organization-of-market}}

Electricity supply and demand must match continuously in real-time,
which makes electricity grids very difficult to manage. At a high
level scale, the UK electricity market can be divided into two different
modes of operation: The first mode deals with electricity contracts
with more distant delivery periods (ranging from seasons to one hour).
The second mode is performed by the system operator (in the UK called
the National Grid) responsible for the micromanagement of the grid
in the last hour before the delivery of electricity. 

In the first mode electricity is traded through forward or future
contracts. Forward contracts start to be traded up to four years before
delivery. This period is usually referred to as a liquid period. In
the beginning of a liquid period only seasonal contracts are available.
Seasonal forward contracts are agreements between a buyer and a seller
that the seller will deliver a certain fixed amount of electricity
in every half hour during the season of interest at a price decided
upon today. As we get closer to the delivery, electricity contracts
with greater granularity appear. We can find quarterly, monthly, weekly,
daily, and intra day contracts. Intra day contracts can cover blocks
of twelve hours, four hours, two hours, one hour and the smallest
granularity is half an hour (see the APX power exchange%
\footnote{http://www.apxgroup.com/trading-clearing/apx-power-uk%
} for details). Furthermore, various combinations of the above, such
as month-ahead peak contracts, which cover all half hours between
7am and 7pm in the following month, are traded. One could use such
contracts to hedge the production of a solar power plant, for example.
As we can see there exist a huge variety of different contracts. Some
of them are traded through an exchange, while others are traded over-the-counter
(OTC). In the Wholesale Market Report%
\footnote{Report for July 2014 is available at http://www.energy-uk.org.uk/publication/finish/5-research-and-reports/1152-wholesale-market-report-july-2014.html.%
} from the Energy UK, we can see that a large majority of contracts
is traded OTC. When choosing the right contract to trade, one also
has to take into consideration liquidity and transaction costs. The
Wholesale Market Reports from Energy UK show that contracts with a
delivery after two or more years tend to be very illiquid. Baseload
contracts are much more liquid than peak (i.e. 7am to 7pm) or off-peak
(i.e. 7pm to 7am) contracts. The bid-ask spread for different types
of contracts between years 2008 and 2011 is available in the Ofgem
report%
\footnote{https://www.ofgem.gov.uk/ofgem-publications/39661/summer-2011-assessment.pdf%
}. Besides trading electricity through future and forward contracts,
a significant amount of electricity is also traded through day-ahead
auctions. According to the Wholesale Market Report from Energy UK,
roughly between 15\% and 20\% of power is traded through day-ahead
auctions offered by APX%
\footnote{http://www.apxgroup.com/trading-clearing/auction/%
} and N2EX%
\footnote{https://www.n2ex.com/%
}. Since the auctions are not explicitly included in our model, we
will not study them in further detail. An optimal bidding strategy
for the auction market is discussed in \cite{anderson2002necessary}
and \cite{anderson2005varepsilonoptimal}.

One hour before the delivery, at the event called a Gate Closure,
trading of future and forward contracts ceases. This is when the system
operator takes over the management of the power grid. All market participants
inform the system operator about their positions in future and forward
contracts. They also submit their bids and offers. A bid is a pair
of volume and a price which tells the system operator at what price
a producer can increase production. Similarly, the offer tell the
system operator what compensation the producer is willing to accept
from the operator if asked to decrease the production. Consumers with
a flexible consumption also submit bids and offers. Equipped with
this information the system operator first compares the demand forecast
with the submitted number of traded contracts and adjusts the difference
by accepting some of the bids or offers. While doing so, the system
operator must also take into account the capacity constraints of the
transmission lines in the network. After the delivery of electricity,
the system operator compares the actual physical production/consumption
of electricity by each market participant with the contracted volumes
adjusted by the accepted bids and offers and calculates the imbalance
volume as the difference between the two. If the imbalance volume
helped the system operator match supply and demand, then a fair, market
index price, is used to calculate the imbalance cash flow to/from
each player. On the other hand, if the imbalance volume hampered the
system operator in matching supply and demand, then a worse price%
\footnote{For a detailed calculation of this price see \\
http://www.elexon.co.uk/wp-content/uploads/2014/06/imbalance\_pricing\_guidance\_v7.0.pdf.%
} is applied to calculate the cash flow. Roughly speaking, this price
is calculated as the average of the most expensive 500 MWh of accepted
bids or offers adjusted by transmission losses.

The two modes of operation described above are very different. The
first mode of operation can be considered as a competitive market
without a central agent, while the second is controlled by the system
operator who is responsible for matching the electricity supply and
demand by choosing the cheapest actions. The first mode of operation
can be seen as relatively independent of the second mode and thus,
this work focuses on the first mode of the operation only. A coupling
of both modes is something that we would like to investigate in the
future.

\section{Problem description\label{sec:Problem-description}}

In this section we provide a detailed description of a model that
we use for the purpose of modeling the term structure of electricity
prices. The model belongs to a class of game theoretic equilibrium
models. Market participants are divided into consumers and producers.
A set of consumers is denoted by $C$ and has cardinality $0<\left|C\right|<\infty$.
Similarly, a set of producers is denoted by $P$ and has cardinality
$0<\left|P\right|<\infty$. Each producer owns a portfolio of power
plants that can have different characteristics such as capacity, ramp-up
and ramp-down constraints, efficiency, and fuel type. The set of all
fuel types is denoted by $L$. Sets $R^{p,l}$ denote all power plants
owned by producer $p\in P$ that run on fuel $l\in L$. A set $R^{p,l}$
may be empty since each producer typically does not own all possible
types of power plants. Moreover, this allows us to include non physical
traders such as banks or speculators, who do not own any electricity
generation facilities and are without a physical demand for electricity,
as producers $p\in P$ with $R^{p,l}=\left\{ \right\} $ for all $l\in L$. 

As we will see in Subsection \ref{sub:The-hypothetical-market}, it
is useful to introduce another player named the hypothetical market
agent besides producers and consumers. The hypothetical market agent
plays the role of the electricity market and ensures that the term
structure of the electricity price is such that the market clearing
condition is satisfied for all electricity forward contracts.

We are interested in delivery times $T_{j}$, $j\in J=\left\{ 1,...,T'\right\} $,
where power for each delivery time $T_{j}$ can be traded through
numerous forward contracts at times $t_{i}$, $i\in I_{j}$. The electricity
price at time $t_{i}$ for delivery at time $T_{j}$ is denoted by
$\Pi\left(t_{i},T_{j}\right)$. Since contracts with trading time
later than delivery time do not exist, we require $t_{\max\left\{ I_{j}\right\} }=T_{j}$
for all $j\in J$. The number of all forward contracts, i.e. $\sum_{j\in J}\left|I_{j}\right|$,
is denoted by $N$. Uncertainty is modeled by a filtered probability
space $\left(\Omega,\mathcal{F},\mathbb{F}=\left\{ \mathcal{F}_{t},t\in I\right\} ,\mathbb{P}\right)$,
where $I=\cup_{j\in J}I_{j}$. The $\sigma$-algebra $\mathcal{F}_{t}$
represents information available at time $t$.

The exogenous variables that appear in our model are (a) aggregate
power demand $D\left(T_{j}\right)$ for each delivery period $j\in J$,
(b) prices of fuel forward contracts $G_{l}\left(t_{i},T_{j}\right)$
for each fuel $l\in L$, delivery period $j\in J$, and trading period
$i\in I_{j}$, and (c) prices of emissions forward contracts $G_{em}\left(t_{i},T_{j}\right)$,
$j\in J$, $i\in I_{j}$. Electricity prices and all exogenous variables
are assumed to be adapted to the filtration $\left\{ \mathcal{F}_{t}\right\} _{t\in I}$
and have finite second moments.

Let $v_{k}\in\mathbb{R}^{n_{k}}$, $n_{k}\in\mathbb{N}$, $k\in K$,
and $K=\left\{ 1,...,\left|K\right|\right\} $ be given vectors. For
convenience, we define a vector concatenation operator as
\[
\left|\right|_{k\in K}v_{k}=\left[v_{1}^{\top},...,v_{\left|K\right|}^{\top}\right]^{\top}.
\]

\subsection{Producer}

Each producer $p\in P$ participates in the electricity, fuel, and
emission markets. Forward as well as spot contracts are available
on all markets. Electricity prices, fuel prices, and emission prices
are denoted by $\Pi\left(t_{i},T_{j}\right)$, $G_{l}\left(t_{i},T_{j}\right)$
where $l\in L$, and $G_{em}\left(t_{i},T_{j}\right)$, respectively.
To simplify the notation, we introduce
\begin{itemize}
\item electricity price vectors $\Pi\left(T_{j}\right)=\left|\right|_{i\in I_{j}}\Pi\left(t_{i},T_{j}\right)$,
and $\Pi=\left|\right|_{j\in J}e^{-\hat{r}T_{j}}\Pi\left(T_{j}\right)$,
where $\hat{r}\in\mathbb{R}$ is a constant interest rate, 
\item fuel price vectors $G\left(t_{i},T_{j}\right)=\left|\right|_{l\in L}G_{l}\left(t_{i},T_{j}\right)$,
$G\left(T_{j}\right)=\left|\right|_{i\in I_{j}}G\left(t_{i},T_{j}\right)$,
and\linebreak{}
$G=\left|\right|_{j\in J}e^{-\hat{r}T_{j}}G\left(T_{j}\right)$, 
\item emission price vector $G_{em}\left(T_{j}\right)=\left|\right|_{i\in I_{j}}G_{em}\left(t_{i},T_{j}\right)$,
and $G_{em}=\left|\right|_{j\in J}e^{-\hat{r}T_{j}}G_{em}\left(T_{j}\right)$. 
\end{itemize}
A producer may participate in the market by buying and selling forward
and spot contracts. The number of electricity forward contracts that
producer $p\in P$ buys at trading time $t_{i}$, $i\in I_{j}$ for
delivery at time $T_{j}$, $j\in J$ is denoted by $V_{p}\left(t_{i},T_{j}\right)$.
Similarly, the number of fuel and emission forward contracts that
producer $p\in P$ buys at trading time $t_{i}$, $i\in I_{j}$ for
delivery at time $T_{j}$, $j\in J$ is denoted by $F_{p,l}\left(t_{i},T_{j}\right)$,
$l\in L$ and $O_{p}\left(t_{i},T_{j}\right)$, respectively. Producers
own a generally non-empty portfolio of power plants. The actual production
of electricity from power plant $r\in R^{p,l}$ at delivery time $T_{j}$,
$j\in J$ is denoted by $W_{p,l,r}\left(T_{j}\right)$.

The notation is greatly simplified if the decision variables are concatenated
into 
\begin{itemize}
\item electricity trading vectors $V_{p}\left(T_{j}\right)=\left|\right|_{i\in I_{j}}V_{p}\left(t_{i},T_{j}\right)$
and $V_{p}=\left|\right|_{j\in J}V_{p}\left(T_{j}\right)$, 
\item fuel trading vectors $F_{p}\left(t_{i},T_{j}\right)=\left|\right|_{l\in L}F_{p,l}\left(t_{i},T_{j}\right)$,
$F_{p}\left(T_{j}\right)=\left|\right|_{i\in I_{j}}F_{p}\left(t_{i},T_{j}\right)$,
and\linebreak{}
 $F_{p}=\left|\right|_{j\in J}F_{p}\left(T_{j}\right)$, 
\item emission trading vectors $O_{p}\left(T_{j}\right)=\left|\right|_{i\in I_{j}}O_{p}\left(t_{i},T_{j}\right)$
and $O_{p}=\left|\right|_{j\in J}O_{p}\left(T_{j}\right)$, 
\item electricity production vectors $W_{p,l}\left(T_{j}\right)=\left|\right|_{r\in R^{p,l}}W_{p,l,r}\left(T_{j}\right)$,
$W_{p}\left(T_{j}\right)=\left|\right|_{l\in L}W_{p,l}\left(T_{j}\right)$,
and $W_{p}=\left|\right|_{j\in J}W_{p}\left(T_{j}\right)$, 
\end{itemize}
and finally $v_{p}=\left[V_{p}^{\top},F_{p}^{\top},O_{p}^{\top},W_{p}^{\top}\right]^{\top}$.
Similarly, we concatenate all price vectors as 
\[
\pi_{p}=\left[\Pi^{\top},G^{\top},G_{em}^{\top},\underset{\dim\left(W_{p}\right)}{\underbrace{0,...,0}}\right]^{\top},
\]
where the number of zeros matches the dimension of the vector $W_{p}$. 

Producer $p\in P$ is not able to arbitrarily choose her decision
variables since there are some inequality and equality constraints
that limit her feasible set. The change in production of each power
plant from one delivery period to next is limited by the ramp up and
ramp down constraints. For each $j\in\left\{ 1,...,T'-1\right\} $,
where $T'$ denotes the last delivery period, $l\in L$ and $r\in R^{p,l}$
these constraints can be expressed as 
\begin{equation}
\triangle\overline{W}_{min}^{p,l,r}\leq W_{p,l,r}\left(T_{j+1}\right)-W_{p,l,r}\left(T_{j}\right)\leq\triangle\overline{W}_{max}^{p,l,r},\label{eq:pb0}
\end{equation}
where $\triangle\overline{W}_{max}^{p,l,r}$ and $\triangle\overline{W}_{min}^{p,l,r}$
represent maximum rates for ramping up and down, respectively. The
ramping rates highly depend on the type of the power plant. Some gas
power plants can increase production from zero to the maximum in just
a few minutes, while the same action may take days or weeks for a
nuclear power plant. 

The capacity constraints for each power plant $r\in R^{p,l}$ can
be expressed as
\begin{equation}
0\leq W_{p,l,r}\left(T_{j}\right)\leq\overline{W}_{max}^{p,l,r},\label{eq:pb1}
\end{equation}
where $\overline{W}_{max}^{p,l,r}$ denotes the maximum production. 

Additionally, we also bound the the number of electricity contracts
that each player is allowed to trade as 
\begin{equation}
-V_{trade}\leq V_{p}\left(t_{i},T_{j}\right)\leq V_{trade}\label{eq:pb2}
\end{equation}
for some large $V_{trade}>0$. Trading of an infinite number of contracts
would clearly lead to a bankruptcy of one of the counterparties involved
and must thus be prevented. In \cite{troha2014theexistence} it was
shown, that if $V_{trade}$ is chosen to be large enough, then Constraint
(\ref{eq:pb2}) has no impact on the optimal solution and can be eliminated
from the problem. 

There are also equality constraints that connect power plant production
with electricity, fuel, and emission trading. For each $j\in J$ the
electricity sold in the forward and spot market together must equal
the actually produced electricity, i.e.
\begin{equation}
-\sum_{i\in I_{j}}V_{p}\left(t_{i},T_{j}\right)=\sum_{l\in L}\sum_{r\in R^{p,l}}W_{p,l,r}\left(T_{j}\right).\label{eq:pb2-1}
\end{equation}
Each producer $p\in P$ has to make sure that a sufficient amount
of fuel $l\in L$ has been bought to cover the electricity production
for each delivery period $j\in J$ . Such constraint can be expressed
as 
\begin{equation}
\sum_{r\in R^{p,l}}W_{p,l,r}\left(T_{j}\right)c^{p,l,r}=\sum_{i\in I_{j}}F_{p,l}\left(t_{i},T_{j}\right)\label{eq:pb3-1}
\end{equation}
where $c^{p,l,r}>0$ is the efficiency of power plant $r\in R^{p,l}$. 

The carbon emission obligation constraint can be written as
\begin{equation}
\sum_{j\in J}\sum_{i\in I_{j}}O\left(t_{i},T_{j}\right)=\sum_{j\in J}\sum_{l\in L}\sum_{r\in R^{p,l}}W_{p,l,r}\left(T_{j}\right)g^{p,l,r},\label{eq:pb4}
\end{equation}
where $g^{p,l,r}>0$ denotes the carbon emission intensity factor
for power plant $r\in R^{p,l}$. This constraint ensures that enough
emission certificates have been bought to cover the electricity production
over the whole planning horizon. 

Any producers' goal is to maximize their expected profit subject to
a risk budget. In this work we assume that the risk budget is expressed
in a mean-variance framework. As mentioned in the introduction, the
main argument that supports this decision is that delta hedging, which
is the most widely used hedging strategy, can be captured in this
framework. 

The profit $P_{p}\left(v_{p},\pi_{p}\right)$ of producer $p\in P$
can be calculated as
\begin{equation}
P_{p}\left(v_{p},\pi_{p}\right)=\sum_{j\in J}e^{-\hat{r}T_{j}}\left(\sum_{i\in I_{j}}P_{p}^{t_{i},T_{j}}\left(v_{p},\pi_{p}\right)\right)
\end{equation}
 where the profit $P_{p}^{t_{i},T_{j}}\left(v_{p},\pi_{p}\right)$
for each $i\in I_{j}$ and $j\in J$ can be calculated as

\[
P_{p}^{t_{i},T_{j}}\left(v_{p},\pi_{p}\right)=-\Pi\left(t_{i},T_{j}\right)V_{p}\left(t_{i},T_{j}\right)-O_{p}\left(t_{i},T_{j}\right)G_{em}\left(t_{i},T_{j}\right)-\sum_{l\in L}G_{l}\left(t_{i},T_{j}\right)F_{p,l}\left(t_{i},T_{j}\right).
\]
Under a mean-variance optimization framework, producers are interested
in the mean-variance utility

\[
\begin{array}{rcl}
\Psi_{p}\left(v_{p}\right) & = & \mathbb{E}^{\mathbb{P}}\left[P_{p}\left(v_{p},\pi_{p}\right)\right]-\frac{\lambda_{p}}{2}\text{Var}^{\mathbb{P}}\left[P_{p}\left(v_{p},\pi_{p}\right)\right]\\
\\
 & = & -\mathbb{E}^{\mathbb{P}}\left[\pi_{p}\right]^{\top}v_{p}-\frac{1}{2}\lambda_{p}v_{p}^{\top}Q_{p}v_{p},
\end{array}
\]
where $\lambda_{p}>0$ is their risk preference parameter and $Q_{p}:=\mathbb{E}^{\mathbb{P}}\left[\left(\pi_{p}-\mathbb{E}^{\mathbb{P}}\left[\pi_{p}\right]\right)\left(\pi_{p}-\mathbb{E}^{\mathbb{P}}\left[\pi_{p}\right]\right)^{\top}\right]$
an ``extended'' covariance matrix. Their objective is to solve the
following optimization problem
\begin{equation}
\begin{array}{rl}
\Phi_{p}=\underset{v_{p}}{\text{max }} & \Psi_{p}\left(v_{p}\right)\end{array}\label{eq:opt_prod}
\end{equation}
subject to (\ref{eq:pb0}), (\ref{eq:pb1}), (\ref{eq:pb2}), (\ref{eq:pb2-1}),
(\ref{eq:pb3-1}), and (\ref{eq:pb4}).

\subsection{Consumer}

We make the assumption that demand is completely inelastic and that
each consumer $c\in C$ is responsible for satisfying a proportion
$p_{c}\in\left[0,1\right]$ of the total demand $D\left(T_{j}\right)$
at time $T_{j}$, $j\in J$. Since $p_{c}$ is a proportion, we clearly
have that $\sum_{c\in C}p_{c}=1.$ 

A number of electricity forward contracts consumer $c\in C$ buys
at trading time $t_{i}$, $i\in I_{j}$ for delivery at time $T_{j}$,
$j\in J$ is denoted by $V_{c}\left(t_{i},T_{j}\right)$. Similarly
as for producers, we can simplify the notation by introducing electricity
trading vectors $V_{c}\left(T_{j}\right)=\left|\right|_{i\in I_{j}}V_{c}\left(t_{i},T_{j}\right)$
and $V_{c}=\left|\right|_{j\in J}V_{c}\left(T_{j}\right)$. 

We bound the the number of electricity contracts that each player
is allowed to trade as 
\begin{equation}
-V_{trade}\leq V_{p}\left(t_{i},T_{j}\right)\leq V_{trade}\label{eq:pb2-2}
\end{equation}
for some large $V_{trade}>0$. Trading of an infinite number of contracts
would clearly lead to a bankruptcy of one of the counterparties involved
and must thus be prevented. In \cite{troha2014theexistence} it was
shown, that if $V_{trade}$ is chosen large enough, then Constraint
(\ref{eq:pb2-2}) has no impact on the optimal solution and can be
eliminated from the problem. 

Consumers are responsible for satisfying the electricity demand of
end users. The electricity demand is expected to be satisfied for
each $T_{j}$, i.e.
\begin{equation}
\sum_{i\in I_{j}}V_{c}\left(t_{i},T_{j}\right)=p_{c}D\left(T_{j}\right).\label{eq:cb2}
\end{equation}
At the time of calculating the optimal decisions, consumers assume
that they know the future realization of demand $D\left(T_{j}\right)$
precisely. If the knowledge about the future realization of the demand
changes, then players can take recourse actions by recalculating their
optimal decisions with the updated demand forecast. Consumers may
assume that they will be able to execute the recourse actions, because
it is the job of the system operator to ensure that a sufficient amount
of electricity is available on the market.

Consumers would like to maximize their profit subject to a risk budget.
Similar to the model we introduced for producers, we assume that the
risk budget can be expressed in a mean-variance framework. The profit
of consumer $c\in C$ can be calculated as
\begin{equation}
P_{c}\left(V_{c},\Pi\right)=\sum_{j\in J}e^{-\hat{r}T_{j}}\left(\sum_{i\in I_{j}}-\Pi\left(t_{i},T_{j}\right)V_{c}\left(t_{i},T_{j}\right)+s_{c}p_{c}D\left(T_{j}\right)\right),\label{eq:23}
\end{equation}
where $\hat{r}\in\mathbb{R}$ denotes a constant interest rate and
$s_{c}\in\mathbb{R}$ denotes a contractually fixed price that consumer
$c\in C$ receives for selling the electricity further to end users
(e.g. households, businesses etc.). Note that the contractually fixed
price $s_{c}$ only affects the optimal objective value of consumer
$c\in C$, but not also her optimal solution. Since we are primarily
interested in optimal solutions, we simplify the notation and set
$s_{c}=0$. The correct optimal value can always be calculated via
post-processing when an optimal solution is already known. This may
be needed for risk management purposes.

Under a mean-variance optimization framework consumers are interested
in the mean-variance utility

\[
\begin{array}{rcl}
\Psi_{c}\left(V_{c}\right) & = & \mathbb{E}^{\mathbb{P}}\left[P_{c}\left(V_{c},\Pi\right)\right]-\frac{\lambda_{p}}{2}\text{Var}^{\mathbb{P}}\left[P_{c}\left(V_{c},\Pi\right)\right]\\
\\
 & = & -\mathbb{E}^{\mathbb{P}}\left[\Pi\right]^{\top}V_{c}-\frac{\lambda}{2}V_{c}^{\top}Q_{c}V_{c},
\end{array}
\]
where $\lambda_{c}>0$ is their risk preference and $Q_{c}:=\mathbb{E}^{\mathbb{P}}\left[\left(\Pi-\mathbb{E}^{\mathbb{P}}\left[\Pi\right]\right)\left(\Pi-\mathbb{E}^{\mathbb{P}}\left[\Pi\right]\right)^{\top}\right]$
a covariance matrix. Their objective is to solve the following optimization
problem
\begin{equation}
\Phi_{c}=\underset{V_{c}}{\text{max }}\Psi_{c}\left(V_{c}\right)\label{eq:opt_con}
\end{equation}
subject to (\ref{eq:pb2-2}) and (\ref{eq:cb2}).

\subsection{Matrix notation}

The analysis of the problem is greatly simplified if a more compact
notation is introduced.

Equality constraints of producer $p\in P$ can be expressed as
\[
A_{p}v_{p}=0
\]
and inequality constraints as 
\[
B_{p}v_{p}\leq b_{p}
\]
for some $A_{p}\in\mathbb{R}^{\left|J\right|\left(\left|L\right|+1\right)+1\times\dim v_{p}}$,
$B_{p}\in\mathbb{R}^{n_{p}\times\dim v_{p}}$ and $b_{p}\in\mathbb{R}^{n_{p}}$,
where $n_{p}$ denotes the number of the inequality constraints of
producer $p\in P$. Define a feasible set 
\[
S_{p}:=\left\{ v_{p}:A_{p}v_{p}=a_{p}\text{ \text{and }}B_{p}v_{p}\leq b_{p}\right\} .
\]
It is useful to investigate the inner structure of the matrices. By
considering equality constraints (\ref{eq:pb2-1}), (\ref{eq:pb3-1}),
and (\ref{eq:pb4}) we can see that 
\begin{equation}
A_{p}=\left[\begin{array}{ccc}
\hat{A}_{1} & 0 & \hat{A}_{3,p}\\
0 & \hat{A}_{2} & \hat{A}_{4,p}
\end{array}\right]\label{eq:Ap}
\end{equation}
where $\hat{A}_{1}\in\mathbb{R}^{\left|J\right|\times N},\hat{A}_{2}\in\mathbb{R}^{\left(\left|J\right|\left|L\right|+1\right)\times N\left(\left|L\right|+1\right)},\hat{A}_{3,p}\in\mathbb{R}^{\left|J\right|\times\dim W_{p}},\hat{A}_{4,p}\in\mathbb{R}^{\left(\left|J\right|\left|L\right|+1\right)\times\dim W_{p}}$.
One can see that matrices $\hat{A}_{1}$ and $\hat{A}_{2}$ are independent
of producer $p\in P$ and matrices $\hat{A}_{3,p}$ and $\hat{A}_{4,p}$
depend on producer $p\in P$. One can further investigate the structure
of $\hat{A}_{1}$ and see 
\begin{equation}
\hat{A}_{1}=\left[\begin{array}{ccc}
1_{1} &  & 0\\
 & \ddots\\
0 &  & 1_{\left|J\right|}
\end{array}\right],\label{eq:A1}
\end{equation}
where $1_{j}$, $j\in J$ is a row vector of ones of length $\left|I_{j}\right|$.
Similarly,
\begin{equation}
\hat{A}_{2}=\left[\begin{array}{cccc}
\hat{A}_{1} & \cdots & 0 & 0\\
\vdots & \ddots & \vdots & \vdots\\
0 & \cdots & \hat{A}_{1} & 0\\
0 & \cdots & 0 & 1_{\left|N\right|}
\end{array}\right],\label{eq:A2}
\end{equation}
where the number of rows in the block notation above is $\left|L\right|+1$.
The first $\left|L\right|$ rows correspond to (\ref{eq:pb3-1}) and
the last row corresponds to (\ref{eq:pb4}). 

The profit of producer $p\in P$ can be written as
\[
P_{p}\left(v_{p},\pi_{p}\right)=-\pi_{p}^{\top}v_{p}.
\]
In a compact notation, the mean-variance utility of producer $p\in P$
can be calculated as
\[
\begin{array}{rcl}
\Psi_{p}\left(v_{p},\mathbb{E}^{\mathbb{P}}\left[\Pi\right]\right) & = & \mathbb{E}^{\mathbb{P}}\left[-\pi_{p}^{\top}v_{p}-\frac{1}{2}\lambda_{p}v_{p}^{\top}\left(\pi_{p}-\mathbb{E}^{\mathbb{P}}\left[\pi_{p}\right]\right)\left(\pi_{p}-\mathbb{E}^{\mathbb{P}}\left[\pi_{p}\right]\right)^{\top}v_{p}\right]\\
\\
 & = & -\mathbb{E}^{\mathbb{P}}\left[\pi_{p}\right]^{\top}v_{p}-\frac{1}{2}\lambda_{p}v_{p}^{\top}Q_{p}v_{p},
\end{array}
\]
where
\begin{equation}
Q_{p}:=\mathbb{E}^{\mathbb{P}}\left[\left(\pi_{p}-\mathbb{E}^{\mathbb{P}}\left[\pi_{p}\right]\right)\left(\pi_{p}-\mathbb{E}^{\mathbb{P}}\left[\pi_{p}\right]\right)^{\top}\right].
\end{equation}
The inner structure of matrix $Q_{p}$ is the following 
\begin{equation}
Q_{p}=\left[\begin{array}{ccc}
\hat{Q}_{1} & \hat{Q}_{2} & 0\\
\hat{Q}_{2}^{\top} & \hat{Q}_{3} & 0\\
0 & 0 & 0
\end{array}\right]
\end{equation}
where $\hat{Q}_{1}\in\mathbb{R}^{N\times N},\hat{Q}_{2}\in\mathbb{R}^{N\times\left(\dim B_{p}+\dim O_{p}\right)}=\mathbb{R}^{N\times N\left(\left|L\right|+1\right)},\hat{Q}_{3}\in\mathbb{R}^{N\left(\left|L\right|+1\right)\times N\left(\left|L\right|+1\right)}$.
One can see that $\hat{Q}_{1}$, $\hat{Q}_{2}$, and $\hat{Q}_{3}$
do not depend on producer $p\in P$. The size of the larger matrix
$Q_{p}$ depends on producer $p\in P$, because different producers
have different number of power plants. 

Producer $p\in P$ attempts to solve the following optimization problem
\[
\Phi_{p}\left(\mathbb{E}^{\mathbb{P}}\left[\Pi\right]\right)=\underset{v_{p}\in S_{p}}{\text{max }}-\mathbb{E}^{\mathbb{P}}\left[\pi_{p}\right]^{\top}v_{p}-\frac{1}{2}\lambda_{p}v_{p}^{\top}Q_{p}v_{p}.
\]

The equality constraints of consumer $c\in C$ can be expressed as
\[
A_{c}V_{c}=a_{c}
\]
and the inequality constraints as 
\[
B_{c}V_{c}\leq b_{c}
\]
where $A_{c}=\hat{A}_{1}$, $B_{c}\in\mathbb{R}^{2N\times N}$, $a_{c}\in\mathbb{R}^{\left|J\right|}$
and $b_{c}\in\mathbb{R}^{2N}$. Define a feasible set
\[
S_{c}:=\left\{ V_{c}\in\mathbb{R}^{N}:A_{c}V_{c}=a_{c}\text{ \text{and }}B_{c}V_{c}\leq b_{c}\right\} .
\]

The profit of consumer $c\in C$ can be written as
\[
P_{c}\left(V_{c},\Pi\right)=-\Pi^{\top}V_{c}.
\]
In a compact notation, the mean-variance utility of a consumer $c\in C$
can be calculated as
\[
\begin{array}{rcl}
\Psi_{c}\left(V_{c},\mathbb{E}^{\mathbb{P}}\left[\Pi\right]\right) & = & \mathbb{E}^{\mathbb{P}}\left[-\Pi^{\top}V_{c}-\frac{1}{2}\lambda_{c}V_{c}^{\top}\left(\Pi-\mathbb{E}^{\mathbb{P}}\left[\Pi\right]\right)\left(\Pi-\mathbb{E}^{\mathbb{P}}\left[\Pi\right]\right)^{\top}V_{c}\right]\\
\\
 & = & -\mathbb{E}^{\mathbb{P}}\left[\Pi\right]^{\top}V_{c}-\frac{\lambda}{2}V_{c}^{\top}Q_{c}V_{c},
\end{array}
\]
where
\begin{equation}
Q_{c}:=\mathbb{E}^{\mathbb{P}}\left[\left(\Pi-\mathbb{E}^{\mathbb{P}}\left[\Pi\right]\right)\left(\Pi-\mathbb{E}^{\mathbb{P}}\left[\Pi\right]\right)^{\top}\right].
\end{equation}
 Moreover, note that $Q_{c}=\hat{Q}_{1}$ for all $c\in C$. We set
$s_{c}=0$, w.l.o.g. Consumer $c\in C$ attempts to solve the following
optimization problem
\[
\Phi_{c}\left(\mathbb{E}^{\mathbb{P}}\left[\Pi\right]\right)=\underset{V_{c}\in S_{c}}{\text{max }}-\mathbb{E}^{\mathbb{P}}\left[\Pi\right]^{\top}V_{c}-\frac{\lambda}{2}V_{c}^{\top}Q_{c}V_{c}.
\]

\subsection{The hypothetical market agent\label{sub:The-hypothetical-market}}

Given the price vectors of electricity $\Pi$, fuel $G$, and emissions
$G_{em}$, each producer $p\in P$ and each consumer $c\in C$ can
calculate their optimal electricity trading vectors $V_{p}$ and $V_{c}$
by solving (\ref{eq:opt_prod}) and (\ref{eq:opt_con}), respectively.
However, the players are not necessary able to execute their calculated
optimal trading strategies because they may not find the counterparty
to trade with. In reality each contract consists of a buyer and a
seller, which imposes an additional constraint (also called the market
clearing constraint) that matches the number of short and long electricity
contracts for each $i\in I_{j}$ and $j\in J$ as follows,
\begin{equation}
\sum_{c\in C}V_{c}\left(t_{i},T_{j}\right)+\sum_{p\in P}V_{p}\left(t_{i},T_{j}\right)=0.\label{eq:mk}
\end{equation}
The electricity market is responsible for satisfying this constraint
by matching buyers with sellers. The matching is done through sharing
of the price and order book information among all market participants.
If at the current price there are more long contract than short contracts,
it means that the current price is too low and asks will start to
be submitted at higher prices. The converse occurs, if there are more
short contracts than long contracts. Eventually, the electricity price
at which the number of long and short contracts matches is found.
At such a price the constraint (\ref{eq:mk}) is satisfied ``naturally''
without explicitly requiring the players to satisfy it. They do so
because it is in their best interest, i.e. it maximizes their mean-variance
objective functions.

The question is how to formulate such an equilibrium constraint in
an optimization framework. A naive approach of writing the market
clearing constraint as an ordinary constraint forces the players to
satisfy it regardless of the price. We need a mechanism that models
the matching of buyers and sellers as it is performed by the electricity
market. For this purpose, we introduce a hypothetical market agent
who is allowed to slowly change electricity prices to ensure that
(\ref{eq:mk}) is satisfied.

Let the hypothetical market agent have the following profit function
\begin{equation}
\begin{array}{rcl}
P_{M}\left(\Pi,V\right) & = & \sum_{j\in J}e^{-\hat{r}T_{j}}\left[\sum_{i\in I_{j}}\Pi\left(t_{i},T_{j}\right)\left(\sum_{c\in C}V_{c}\left(t_{i},T_{j}\right)+\sum_{p\in P}V_{p}\left(t_{i},T_{j}\right)\right)\right]\\
\\
 & = & \mathbb{E}^{\mathbb{P}}\left[\Pi\right]^{\top}\left(\sum_{c\in C}V_{c}+\sum_{p\in P}V_{p}\right)
\end{array}\label{eq:market-1}
\end{equation}
and the expected profit
\begin{equation}
\Psi_{M}\left(\mathbb{E}^{\mathbb{P}}\left[\Pi\right],V\right)=\mathbb{E}^{\mathbb{P}}\left[P_{M}\left(V,\Pi\right)\right],\label{eq:utility_market}
\end{equation}
where $V=\left[V_{P}^{\top},V_{C}^{\top}\right]^{\top}$, $V_{P}=\left|\right|_{p\in P}V_{p}$,
and $V_{C}=\left|\right|_{c\in C}V_{c}$ and let the hypothetical
market agent attempts to solve
\begin{equation}
\Phi_{M}\left(V\right)=\underset{\mathbb{E}^{\mathbb{P}}\left[\Pi\right]}{\text{max }}\Psi_{M}\left(\mathbb{E}^{\mathbb{P}}\left[\Pi\right],V\right).\label{eq:SO}
\end{equation}
The KKT conditions for (\ref{eq:SO}) in the matrix notation read
\begin{equation}
\sum_{c\in C}V_{c}+\sum_{p\in P}V_{p}=0,\label{eq:KKT_hyp_mark}
\end{equation}
which is exactly the same as (\ref{eq:mk}). Note, that the equivalence
of (\ref{eq:mk}) and (\ref{eq:SO}) is a theoretical result that
has to be applied with caution in an algorithmic framework. Formulation
(\ref{eq:SO}) is clearly unstable since only a small mismatch in
the market clearing constraint sends the prices to $\pm\infty$. Thus,
a stable formulation of the hypothetical market agent must be found.
Let us now analyze the hypothetical market agent with the following,
slightly altered, optimization problem
\begin{equation}
\begin{array}{rl}
\underset{\mathbb{E}^{\mathbb{P}}\left[\Pi\right]}{\text{max }} & \Psi_{M}\left(\mathbb{E}^{\mathbb{P}}\left[\Pi\right],V\right)\\
\\
\text{s.t.} & \sum_{c\in C}V_{c}+\sum_{p\in P}V_{p}=0\\
\\
 & \mu_{M}=0,
\end{array}\label{eq:market_new}
\end{equation}
where $\mu_{M}$ denotes the dual variables of the equality constraint
in (\ref{eq:market_new}). It is trivial to check that the optimality
conditions for (\ref{eq:market_new}) correspond to (\ref{eq:mk}).
Formulation (\ref{eq:market_new}) is clearly stable, because the
market clearing constraint is satisfied precisely. The equality constraint
on the dual variables makes sure that the optimal solution remains
the same if the market clearing constraint is removed after the calculation
of the optimal solution. Formulation (\ref{eq:market_new}) is used
as a definition of the hypothetical market agent in the rest of this
work.

We can see that, by affecting the expected electricity price, the
hypothetical agent changes the electricity price process. It is not
immediately clear how to construct such a stochastic process or that
such a stochastic process exists at all. We refer the reader to \cite{troha2014theexistence},
where a constructive proof of the existence is given. The proof is
based on the Doob decomposition theorem, where we allow the hypothetical
market agent to control an integrable predictable term of the process,
while keeping the martingale term of the process intact.

For the further argumentation we define $v_{P}=\left|\right|_{p\in P}v_{p}$
and $v=\left[v_{P}^{\top},V_{C}^{\top}\right]^{\top}$.

\subsection{Nash equilibrium\label{sec:Analysis}}

We are interested in finding a Nash equilibrium defined as

\begin{definition}\label{Nash-Equilibrium-(NE)}Nash Equilibrium
(NE)

Decisions $v^{*}$ and $\mathbb{E}^{\mathbb{P}}\left[\Pi\right]^{*}$
constitute a Nash equilibrium if
\begin{enumerate}
\item For every producer $p\in P$, $v_{p}^{*}$ is a strategy such that
\begin{equation}
\Psi_{p}\left(v_{p},\mathbb{E}^{\mathbb{P}}\left[\Pi\right]^{*}\right)\leq\Psi_{p}\left(v_{p}^{*},\mathbb{E}^{\mathbb{P}}\left[\Pi\right]^{*}\right)
\end{equation}
for all $v_{p}\in S_{p}$;
\item For every consumer $c\in C$, $V_{c}^{*}$ is a strategy such that
\begin{equation}
\Psi_{c}\left(V_{c},\mathbb{E}^{\mathbb{P}}\left[\Pi\right]^{*}\right)\leq\Psi_{c}\left(V_{c}^{*},\mathbb{E}^{\mathbb{P}}\left[\Pi\right]^{*}\right)
\end{equation}
for all $V_{c}\in S_{c}$;
\item Price vector $\mathbb{E}^{\mathbb{P}}\left[\Pi\right]^{*}$ maximizes
the objective function of the hypothetical market agent, i.e. 
\begin{equation}
\Psi_{M}\left(\mathbb{E}^{\mathbb{P}}\left[\Pi\right],v^{*}\right)\leq\Psi_{M}\left(\mathbb{E}^{\mathbb{P}}\left[\Pi\right]^{*},v^{*}\right)\label{eq:market}
\end{equation}
for all $\mathbb{E}^{\mathbb{P}}\left[\Pi\right]\in S_{M}$.
\end{enumerate}
\end{definition}

From Definition (\ref{Nash-Equilibrium-(NE)}), it is not clear whether
a NE for our problem exists and whether it is unique. This problem
was thoroughly investigated in \cite{troha2014theexistence}. Roughly
speaking, it was shown that if the demand of the end users can be
covered by the available system of power plants, then a NE exists.
Moreover, if the power plants are similar enough (if there are no
big gaps in the efficiency of the power plants), then one can show
that the NE is also unique. On the other hand, if power plants are
similar enough, then the expected equilibrium price of each electricity
contract might be an interval instead of a single point. 

In this paper we focus on the numerical calculation of the NE under
the assumption of the existence of solution. For this paper, we assume
the following a slightly stricter condition.

\begin{assumption}\label{ass: as1} For all $p\in P$, the exists
vector $v_{p}$ such that $A_{p}v_{p}=a_{p}$ a.s. and $B_{p}v_{p}<b_{p}$
a.s., for all $c\in C$, there exists vector $V_{c}$ such that $A_{c}V_{c}=a_{c}$
a.s. and $B_{c}V_{c}<b_{c}$ a.s., and the vectors $V_{p}$ and $V_{c}$
can be chosen so that (\ref{eq:KKT_hyp_mark}) is satisfied.\end{assumption}

\section{Quadratic programming formulation\label{sec:Algorithm}}

The traditional approach to solving equilibrium optimization problems
is through shadow prices (see \cite{demaeredaertrycke2012liquidity}
for example). However, this approach is only valid when no inequality
constraints are present. Shadow prices depend on the set of active
constraints and thus one can only use this approach when the active
set is known. In inequality constrainted optimization, the active
set is usually not know in advance and thus a different approach is
needed. The proposed formulation below can be seen as an extension
of the shadow price concept to inequality constrained optimization
problems. 

A naive approach for solving inequality constrained equilibrium optimization
problem would be to choose an expected price vector $\mathbb{E}^{\mathbb{P}}\left[\Pi\right]$
and then calculate optimal solutions for each producer $p\in P$ and
each consumer $c\in C$. If at such price $\left\Vert \sum_{c\in C}V_{c}+\sum_{p\in P}V_{p}\right\Vert $
is close to zero, then the solution is found and $\mathbb{E}^{\mathbb{P}}\left[\Pi\right]$
is an equilibrium expected price vector. Otherwise, we have to adjust
the expected price vector and repeat the procedure. We can see that
such an algorithm is costly, because it requires to solve a large
optimization problem (i.e. to calculate the optimal solutions of each
producer and each consumer) multiple times. In the section below,
we show that we can do much better than the naive approach. Using
the reformulation we propose, the large optimization problem must
be solved only once. 

Necessary and sufficient conditions for all $v_{k}$, $k\in P\cup C$
and $\mathbb{E}^{\mathbb{P}}\left[\Pi\right]$ to constitute a NE
are the following, due to the fact that Assumption \ref{ass: as1}
implies the Slater condition, 
\begin{equation}
\begin{array}{rcl}
-\mathbb{E}^{\mathbb{P}}\left[\pi_{k}\right]^{\top}-\lambda_{k}Q_{k}v_{k}-B_{k}^{\top}\eta_{k}-A_{k}^{\top}\mu_{k} & = & 0\\
\\
\eta_{k}^{\top}\left(B_{k}v_{k}-b_{k}\right) & = & 0\\
\\
B_{k}v_{k}-b_{k} & \leq & 0\\
\\
A_{k}v_{k}-a_{k} & = & 0\\
\\
\eta_{k} & \geq & 0\\
\\
\sum_{c\in C}V_{c}+\sum_{p\in P}V_{p} & = & 0.
\end{array}\label{eq:KKT_all-2}
\end{equation}
The last equation corresponds to the KKT conditions of the hypothetical
market agent. 

We can now interpret (\ref{eq:KKT_all-2}) as the KKT conditions of
one large optimization problem that includes the new definition (\ref{eq:market_new})
of the hypothetical market agent. To see this, we join all decision
variables into one vector $x:=\left[v^{\top},\mathbb{E}^{\mathbb{P}}\left[\Pi\right]^{\top}\right]^{\top}$
and rewrite
\begin{itemize}
\item the equality constraints as $Ax=a$ with $a:=\left[a_{p_{1}}^{\top},...,a_{p_{P}}^{\top},a_{c_{1}}^{\top},...,a_{c_{C}}^{\top},\underbrace{0,...,0}_{N}\right]^{\top}$where
the number of ending zeros is equal to $N$, and 
\[
A:=\left[\begin{array}{ccccccc}
A_{p_{1}} & 0 &  &  &  &  & 0\\
0 & \ddots & 0 &  &  &  & \vdots\\
 & 0 & A_{p_{P}} & 0 &  &  & 0\\
 &  & 0 & A_{c_{1}} & 0 &  & 0\\
 &  &  & 0 & \ddots & 0 & \vdots\\
 &  &  &  & 0 & A_{c_{C}} & 0\\
M_{p_{1}} & \cdots & M_{p_{P}} & I & \cdots & I & 0
\end{array}\right],
\]
where $M_{p}\in\mathbb{R}^{N}\times\mathbb{R}^{\dim v_{p}}$ is a
matrix defined as 
\[
M_{p}=\left[\text{diag}\left(\underset{N}{\underbrace{1,...,1}}\right)\left|\begin{array}{ccc}
0 & \cdots & 0\\
\vdots & \ddots & \vdots\\
0 & \cdots & 0
\end{array}\right.\right],
\]

\item the inequality constraints as $Bx\leq b$ with $b:=\left[b_{p_{1}}^{\top},...,b_{p_{P}}^{\top},b_{c_{1}}^{\top},...,b_{c_{C}}^{\top}\right]^{\top}$,
and
\[
B:=\left[\begin{array}{ccccccc}
B_{p_{1}} & 0 &  &  &  &  & 0\\
0 & \ddots & 0 &  &  &  & \vdots\\
 & 0 & B_{p_{P}} & 0 &  &  & 0\\
 &  & 0 & B_{c_{1}} & 0 &  & 0\\
 &  &  & 0 & \ddots & 0 & \vdots\\
 &  &  &  & 0 & B_{c_{C}} & 0
\end{array}\right],
\]

\item the objective function as $-\pi^{\top}x-\frac{1}{2}x^{\top}Qx$ with
$\pi:=\left[\mathbb{E}^{\mathbb{P}}\left[\pi_{0,p_{1}}\right]^{\top},...,\mathbb{E}^{\mathbb{P}}\left[\pi_{0,p_{P}}\right]^{\top},\underset{\left(\left|C\right|+1\right)N}{\underbrace{0,...,0}}\right]^{\top}$
where $\pi_{0,p}$ is $\pi_{p}$ with elements of $\Pi$ set to zero,
and
\begin{equation}
Q:=\left[\begin{array}{ccccccc}
\lambda_{p_{1}}Q_{p_{1}} & 0 &  &  &  &  & M_{p_{1}}^{\top}\\
0 & \ddots & 0 &  &  &  & \vdots\\
 & 0 & \lambda_{p_{P}}Q_{p_{P}} & 0 &  &  & M_{p_{P}}^{\top}\\
 &  & 0 & \lambda_{c_{1}}Q_{c_{1}} & 0 &  & I\\
 &  &  & 0 & \ddots & 0 & \vdots\\
 &  &  &  & 0 & \lambda_{c_{C}}Q_{c_{C}} & I\\
M_{p_{1}} & \cdots & M_{p_{P}} & I & \cdots & I & 0
\end{array}\right],\label{eq:Q-1}
\end{equation}

\item the dual variables as $\eta:=\left[\eta_{p_{1}}^{\top},...,\eta_{p_{P}}^{\top},\eta_{c_{1}}^{\top},...,\eta_{c_{C}}^{\top}\right]$
and $\mu:=\left[\mu_{p_{1}}^{\top},...,\mu_{p_{P}}^{\top},\mu_{c_{1}}^{\top},...,\mu_{c_{C}}^{\top},\mu_{M}^{\top}\right]$. 
\end{itemize}
\begin{lemma}\label{prop:pos_semi_def}$Q\succeq0$ for all $x$
that satisfy the market clearing constraint, i.e. $\sum_{p\in P}V_{p}+\sum_{c\in C}V_{c}=0$.
\end{lemma}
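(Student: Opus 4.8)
The plan is to compute the quadratic form $x^\top Q x$ explicitly, exploiting the block structure of $Q$ in \eqref{eq:Q-1}, and show that on the subspace defined by the market clearing constraint the cross terms involving $\mathbb{E}^{\mathbb{P}}[\Pi]$ vanish, leaving only a sum of manifestly nonnegative terms. Writing $x = [v_{p_1}^\top,\dots,v_{p_P}^\top, V_{c_1}^\top,\dots,V_{c_C}^\top,\mathbb{E}^{\mathbb{P}}[\Pi]^\top]^\top$, the block form gives
\[
x^\top Q x \;=\; \sum_{p\in P}\lambda_p v_p^\top Q_p v_p \;+\; \sum_{c\in C}\lambda_c V_c^\top Q_c V_c \;+\; 2\,\mathbb{E}^{\mathbb{P}}[\Pi]^\top\!\Big(\sum_{p\in P}M_p v_p + \sum_{c\in C} V_c\Big).
\]
Now observe that $M_p v_p = V_p$ by the definition of $M_p$ (it simply projects $v_p = [V_p^\top,F_p^\top,O_p^\top,W_p^\top]^\top$ onto its electricity-trading component $V_p$), so the last term equals $2\,\mathbb{E}^{\mathbb{P}}[\Pi]^\top\big(\sum_{p\in P}V_p + \sum_{c\in C}V_c\big)$, which is exactly twice the inner product of the price vector with the market-clearing residual. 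Under the hypothesis $\sum_{p\in P}V_p+\sum_{c\in C}V_c = 0$ this cross term is identically zero.

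It therefore remains to argue that $\sum_{p\in P}\lambda_p v_p^\top Q_p v_p + \sum_{c\in C}\lambda_c V_c^\top Q_c V_c \ge 0$. This is immediate: each $\lambda_p>0$ and each $\lambda_c>0$ by assumption, and each $Q_p = \mathbb{E}^{\mathbb{P}}[(\pi_p - \mathbb{E}^{\mathbb{P}}[\pi_p])(\pi_p-\mathbb{E}^{\mathbb{P}}[\pi_p])^\top]$, $Q_c = \mathbb{E}^{\mathbb{P}}[(\Pi-\mathbb{E}^{\mathbb{P}}[\Pi])(\Pi-\mathbb{E}^{\mathbb{P}}[\Pi])^\top]$ is a covariance matrix, hence positive semidefinite (for any conformable $u$, $u^\top Q_p u = \mathbb{E}^{\mathbb{P}}[(u^\top(\pi_p-\mathbb{E}^{\mathbb{P}}[\pi_p]))^2]\ge 0$). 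Summing nonnegative terms gives $x^\top Q x \ge 0$ for every $x$ in the affine/linear set cut out by the market clearing constraint, which is the claim.

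The only step requiring any care — and the one I would state explicitly rather than gloss over — is the bookkeeping identity $M_p v_p = V_p$ and the correct accounting of the factor $2$ coming from the two symmetric off-diagonal blocks $M_p^\top$ (resp. $I$) in the last block-column and $M_p$ (resp. $I$) in the last block-row of $Q$; once that is pinned down the cancellation of the cross term is automatic and the rest is the standard fact that covariance matrices are positive semidefinite. I would also remark, for completeness, that $Q$ itself is \emph{not} positive semidefinite on all of $\mathbb{R}^{\dim x}$ — the zero block in the price-price position together with the nonzero coupling blocks makes it indefinite in general — so the restriction to the market-clearing set in the statement is essential and not merely a convenience.
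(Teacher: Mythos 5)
Your proposal is correct and follows essentially the same route as the paper's own proof: expand $x^{\top}Qx$ into the block-diagonal part plus the cross term $2\,\mathbb{E}^{\mathbb{P}}\left[\Pi\right]^{\top}\left(\sum_{p\in P}V_{p}+\sum_{c\in C}V_{c}\right)$, kill the cross term with the market clearing constraint, and invoke positive semidefiniteness of the covariance matrices $Q_{p}$, $Q_{c}$ together with $\lambda_{p},\lambda_{c}>0$. Your added remarks on the identity $M_{p}v_{p}=V_{p}$ and on the indefiniteness of $Q$ off the market-clearing set are accurate and make explicit what the paper leaves implicit.
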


\begin{proof}We have 
\[
\begin{array}{rcl}
x^{\top}Qx & = & v^{\top}\left[\begin{array}{cccccc}
\lambda_{p_{1}}Q_{p_{1}} & 0\\
0 & \ddots & 0\\
 & 0 & \lambda_{p_{P}}Q_{p_{P}} & 0\\
 &  & 0 & \lambda_{c_{1}}Q_{c_{1}} & 0\\
 &  &  & 0 & \ddots & 0\\
 &  &  &  & 0 & \lambda_{c_{C}}Q_{c_{C}}
\end{array}\right]v\\
\\
 &  & +2\mathbb{E}^{\mathbb{P}}\left[\Pi\right]^{\top}\left(\sum_{c\in C}V_{c}+\sum_{p\in P}V_{p}\right)\\
\\
 & = & \sum_{p\in P}\lambda_{p}v_{p}^{\top}Q_{p}v_{p}+\sum_{c\in C}\lambda_{c}v_{c}^{\top}Q_{c}v_{c}\\
\\
 & \geq & 0,
\end{array}
\]
since $\lambda_{p}>0$, $\lambda_{c}>0$, $Q_{c}\succ0$, $Q_{p}\succeq0$
for all $p\in P$ and $c\in C$. \end{proof}

In this setting we can reformulate the KKT conditions (\ref{eq:KKT_all-2})
as follows,
\begin{equation}
\begin{array}{rcl}
-\pi-Qx-B^{\top}\eta-A^{\top}\mu & = & 0\\
\\
\eta^{\top}\left(Bx-b\right) & = & 0\\
\\
Bx-b & \leq & 0\\
\\
Ax-a & = & 0\\
\\
\eta & \geq & 0\\
\\
\mu_{M} & = & 0.
\end{array}\label{eq:KKT_all-3}
\end{equation}

\begin{proposition}\label{prop:KKT-cond}The KKT conditions (\ref{eq:KKT_all-3})
are equivalent to the KKT conditions (\ref{eq:KKT_all-2}). \end{proposition}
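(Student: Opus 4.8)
The plan is to show that the two systems have the same solution sets by unpacking the block structure of $A$, $B$, $Q$, and $\pi$ and verifying that each line of \eqref{eq:KKT_all-3} decomposes into exactly the corresponding lines of \eqref{eq:KKT_all-2}. Since the feasibility, complementarity, and sign conditions are trivially block-diagonal, the only line requiring genuine work is the stationarity condition $-\pi - Qx - B^\top\eta - A^\top\mu = 0$, which must be resolved into the per-player stationarity equations together with the market-clearing equation.

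First I would write $x = \left[v^\top, \mathbb{E}^{\mathbb{P}}[\Pi]^\top\right]^\top$ and partition the stationarity equation according to the rows of $Q$ and $A$. The rows of $Q$ and $A$ indexed by a producer $p$ give, using $Q_{p}v_{p} = \lambda_p Q_p v_p$ in the diagonal block, the cross term $M_p^\top \mathbb{E}^{\mathbb{P}}[\Pi]$ from the last column of $Q$, and the blocks $A_p^\top \mu_p$, $B_p^\top\eta_p$: the key observation is that $M_p^\top \mathbb{E}^{\mathbb{P}}[\Pi]$ contributes precisely the electricity-price component $\mathbb{E}^{\mathbb{P}}[\Pi]$ into the $V_p$-coordinates of $v_p$ and zeros elsewhere, so that $\pi_{0,p} + M_p^\top \mathbb{E}^{\mathbb{P}}[\Pi] = \mathbb{E}^{\mathbb{P}}[\pi_p]$ by the definition of $\pi_{0,p}$. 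This recovers the producer stationarity line of \eqref{eq:KKT_all-2}; the consumer rows are handled identically with $M_c = I$. Finally, the last $N$ rows of $Q$ and $A$ (the hypothetical-market-agent block) are zero except for the $M_p$, $I$ entries multiplying the $v$-part, and the $\mathbb{E}^{\mathbb{P}}[\Pi]$-column of $Q$ in those rows is zero; hence that row of the stationarity equation reads $\sum_{p\in P} M_p v_p + \sum_{c\in C} V_c = 0$, i.e. $\sum_{p\in P} V_p + \sum_{c\in C} V_c = 0$, which is the market-clearing equation. The condition $\mu_M = 0$ appears verbatim in both systems.

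The converse direction is immediate: given a solution of \eqref{eq:KKT_all-2}, assembling the per-player multipliers into $\eta$, $\mu$ and appending $\mu_M = 0$ reconstructs a solution of \eqref{eq:KKT_all-3} by running the same block identities in reverse. I would also remark that by Lemma \ref{prop:pos_semi_def} the matrix $Q$ is positive semidefinite on the market-clearing set, and since Assumption \ref{ass: as1} supplies a Slater point, \eqref{eq:KKT_all-3} are not merely formal KKT conditions but are necessary and sufficient for the combined problem — though for the \emph{equivalence} claimed in the proposition this convexity remark is not strictly needed.

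The main obstacle is purely bookkeeping: one must be careful that the cross-blocks $M_p^\top$ in the last column of $Q$ and the blocks $M_p$, $I$ in the last row of $A$ are not double-counting the coupling between prices and trades. The resolution is that the price variable $\mathbb{E}^{\mathbb{P}}[\Pi]$ enters the producer/consumer objectives only linearly (through $-\mathbb{E}^{\mathbb{P}}[\pi_k]^\top v_k$), so it belongs in the $A^\top\mu$-type coupling at the level of the individual problems but is promoted to a $Q$-cross-term only because we have made $\mathbb{E}^{\mathbb{P}}[\Pi]$ a decision variable of the aggregated problem; checking that the off-diagonal placement in \eqref{eq:Q-1} together with the definition of $\pi_{0,p}$ exactly reproduces $\mathbb{E}^{\mathbb{P}}[\pi_k]$ is the one place where a sign or an index slip would break the argument.
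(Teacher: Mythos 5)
Your proposal is correct and follows essentially the same block-decomposition argument as the paper's proof: the identity $\pi_{0,p}+M_{p}^{\top}\mathbb{E}^{\mathbb{P}}\left[\Pi\right]=\mathbb{E}^{\mathbb{P}}\left[\pi_{p}\right]$ recovers the per-player stationarity conditions, the last $N$ rows of the stationarity equation yield the market-clearing constraint, and the feasibility, complementarity and sign conditions decompose block-diagonally. The one point worth making explicit (as the paper does when it writes ``where $\mu_{M}=0$ was used'') is that $\mu_{M}=0$ is not merely an appended condition but is what eliminates the $M_{p}^{\top}\mu_{M}$ contribution of $A^{\top}\mu$ in the producer and consumer rows, so that the aggregated stationarity reduces exactly to $-\mathbb{E}^{\mathbb{P}}\left[\pi_{k}\right]-\lambda_{k}Q_{k}v_{k}-B_{k}^{\top}\eta_{k}-A_{k}^{\top}\mu_{k}=0$.
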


\begin{proof}We will consider each equation separately. Let us start
with $Ax-a=0$. Writing the equation in matrix form
\[
\left[\begin{array}{ccccccc}
A_{p_{1}} & 0 &  &  &  &  & 0\\
0 & \ddots & 0 &  &  &  & \vdots\\
 & 0 & A_{p_{P}} & 0 &  &  & 0\\
 &  & 0 & A_{c_{1}} & 0 &  & 0\\
 &  &  & 0 & \ddots & 0 & \vdots\\
 &  &  &  & 0 & A_{c_{C}} & 0\\
M_{p_{1}} & \cdots & M_{p_{P}} & I & \cdots & I & 0
\end{array}\right]\left[\begin{array}{c}
v_{p_{1}}\\
\vdots\\
v_{p_{P}}\\
v_{c_{1}}\\
\vdots\\
v_{c_{C}}\\
\mathbb{E}^{\mathbb{P}}\left[\Pi\right]
\end{array}\right]-\left[\begin{array}{c}
a_{p_{1}}\\
\vdots\\
a_{p_{P}}\\
a_{c_{1}}\\
\vdots\\
a_{c_{C}}\\
0
\end{array}\right]=\left[\begin{array}{c}
0\\
\vdots\\
0\\
0\\
\vdots\\
0\\
0
\end{array}\right]
\]
leads to
\[
\left[\begin{array}{c}
A_{p_{1}}v_{p_{1}}\\
\vdots\\
A_{p_{P}}v_{p_{P}}\\
A_{c_{1}}v_{c_{1}}\\
\vdots\\
A_{c_{C}}v_{c_{C}}\\
\sum_{p\in P}V_{p}+\sum_{c\in C}V_{c}
\end{array}\right]-\left[\begin{array}{c}
a_{p_{1}}\\
\vdots\\
a_{p_{P}}\\
a_{c_{1}}\\
\vdots\\
a_{c_{C}}\\
0
\end{array}\right]=\left[\begin{array}{c}
0\\
\vdots\\
0\\
0\\
\vdots\\
0\\
0
\end{array}\right],
\]
as required.

We continue with the expression $-\pi-Qx-B^{\top}\eta-A^{\top}\mu=0$.
We first focus on $-\pi-Qx$ only. Writing the equation in matrix
form 
\[
\begin{array}{l}
-\left[\begin{array}{c}
\mathbb{E}^{\mathbb{P}}\left[\pi_{0,p_{1}}\right]\\
\vdots\\
\mathbb{E}^{\mathbb{P}}\left[\pi_{0,p_{P}}\right]\\
0\\
\vdots\\
0\\
0
\end{array}\right]-\left[\begin{array}{ccccccc}
\lambda_{p_{1}}Q_{p_{1}} & 0 &  &  &  &  & M_{p_{1}}^{\top}\\
0 & \ddots & 0 &  &  &  & \vdots\\
 & 0 & \lambda_{p_{P}}Q_{p_{P}} & 0 &  &  & M_{p_{P}}^{\top}\\
 &  & 0 & \lambda_{c_{1}}Q_{c_{1}} & 0 &  & I\\
 &  &  & 0 & \ddots & 0 & \vdots\\
 &  &  &  & 0 & \lambda_{c_{C}}Q_{c_{C}} & I\\
M_{p_{1}} & \cdots & M_{p_{P}} & I & \cdots & I & 0
\end{array}\right]\left[\begin{array}{c}
v_{p_{1}}\\
\vdots\\
v_{p_{P}}\\
v_{c_{1}}\\
\vdots\\
v_{c_{C}}\\
\mathbb{E}^{\mathbb{P}}\left[\Pi\right]
\end{array}\right]\\
\\
=-\left[\begin{array}{c}
\mathbb{E}^{\mathbb{P}}\left[\pi_{0,p_{1}}\right]\\
\vdots\\
\mathbb{E}^{\mathbb{P}}\left[\pi_{0,p_{P}}\right]\\
0\\
\vdots\\
0\\
0
\end{array}\right]-\left[\begin{array}{c}
\lambda_{p_{1}}Q_{p_{1}}v_{p_{1}}+M_{p_{1}}^{\top}\mathbb{E}^{\mathbb{P}}\left[\Pi\right]\\
\vdots\\
\lambda_{p_{P}}Q_{p_{P}}v_{p_{P}}+M_{p_{P}}^{\top}\mathbb{E}^{\mathbb{P}}\left[\Pi\right]\\
\lambda_{c_{1}}Q_{c_{1}}v_{c_{1}}+\mathbb{E}^{\mathbb{P}}\left[\Pi\right]\\
\vdots\\
\lambda_{c_{C}}Q_{c_{C}}v_{c_{C}}+\mathbb{E}^{\mathbb{P}}\left[\Pi\right]\\
\sum_{p\in P}V_{p}+\sum_{c\in C}V_{c}
\end{array}\right],
\end{array}
\]
and noting that $-\mathbb{E}^{\mathbb{P}}\left[\pi_{0,p}\right]-M_{p}^{\top}\mathbb{E}^{\mathbb{P}}\left[\Pi\right]=-\mathbb{E}^{\mathbb{P}}\left[\pi_{p}\right]$
for all $p\in P$, this leads to
\[
-\pi-Qx=-\left[\begin{array}{c}
\lambda_{p_{1}}Q_{p_{1}}v_{p_{1}}+\mathbb{E}^{\mathbb{P}}\left[\pi_{p_{1}}\right]\\
\vdots\\
\lambda_{p_{P}}Q_{p_{P}}v_{p_{P}}+\mathbb{E}^{\mathbb{P}}\left[\pi_{p_{P}}\right]\\
\lambda_{c_{1}}Q_{c_{1}}v_{c_{1}}+\mathbb{E}^{\mathbb{P}}\left[\Pi\right]\\
\vdots\\
\lambda_{c_{C}}Q_{c_{C}}v_{c_{C}}+\mathbb{E}^{\mathbb{P}}\left[\Pi\right]\\
\sum_{p\in P}V_{p}+\sum_{c\in C}V_{c}
\end{array}\right].
\]
Similarly, writing the remainder $-B^{\top}\eta-A^{\top}\mu$ in matrix
form leads to
\[
-\left[\begin{array}{cccccc}
B_{p_{1}}^{\top} & 0\\
0 & \ddots & 0\\
 & 0 & B_{p_{P}}^{\top} & 0\\
 &  & 0 & B_{c_{1}}^{\top} & 0\\
 &  &  & 0 & \ddots & 0\\
 &  &  &  & 0 & B_{c_{C}}^{\top}\\
0 &  & 0 & 0 &  & 0
\end{array}\right]\left[\begin{array}{c}
\eta_{p_{1}}\\
\vdots\\
\eta_{p_{P}}\\
\eta_{c_{1}}\\
\vdots\\
\eta_{c_{C}}
\end{array}\right]=-\left[\begin{array}{c}
B_{p_{1}}^{\top}\eta_{p_{1}}\\
\vdots\\
B_{p_{P}}^{\top}\eta_{p_{P}}\\
B_{c_{1}}^{\top}\eta_{c_{1}}\\
\vdots\\
B_{c_{C}}^{\top}\eta_{c_{C}}\\
0
\end{array}\right]
\]
and 
\[
-\left[\begin{array}{ccccccc}
A_{p_{1}}^{\top} & 0 &  &  &  &  & M_{p}^{\top}\\
0 & \ddots & 0 &  &  &  & \vdots\\
 & 0 & A_{p_{P}}^{\top} & 0 &  &  & M_{p}^{\top}\\
 &  & 0 & A_{c_{1}}^{\top} & 0 &  & I\\
 &  &  & 0 & \ddots & 0 & \vdots\\
 &  &  &  & 0 & A_{c_{C}}^{\top} & I\\
0 & \cdots & 0 & 0 & \cdots & 0 & 0
\end{array}\right]\left[\begin{array}{c}
\mu_{p_{1}}\\
\vdots\\
\mu_{p_{P}}\\
\mu_{c_{1}}\\
\vdots\\
\mu_{c_{C}}\\
\mu_{M}
\end{array}\right]=-\left[\begin{array}{c}
A_{p_{1}}^{\top}\mu_{p_{1}}\\
\vdots\\
A_{p_{P}}^{\top}\mu_{p_{P}}\\
A_{c_{1}}^{\top}\mu_{c_{1}}\\
\vdots\\
A_{c_{C}}^{\top}\mu_{c_{C}}\\
0
\end{array}\right]
\]
where $\mu_{M}=0$ was used. Writing it all together 
\[
-\left[\begin{array}{c}
\lambda_{p_{1}}Q_{p_{1}}v_{p_{1}}+\mathbb{E}^{\mathbb{P}}\left[\pi_{p_{1}}\right]\\
\vdots\\
\lambda_{p_{P}}Q_{p_{P}}v_{p_{P}}+\mathbb{E}^{\mathbb{P}}\left[\pi_{p_{P}}\right]\\
\lambda_{c_{1}}Q_{C}v_{c_{1}}+\mathbb{E}^{\mathbb{P}}\left[\Pi\right]\\
\vdots\\
\lambda_{c_{C}}Q_{C}v_{c_{C}}+\mathbb{E}^{\mathbb{P}}\left[\Pi\right]\\
\sum_{p\in P}V_{p}+\sum_{c\in C}V_{c}
\end{array}\right]-\left[\begin{array}{c}
B_{p_{1}}^{\top}\eta_{p_{1}}\\
\vdots\\
B_{p_{P}}^{\top}\eta_{p_{P}}\\
B_{c_{1}}^{\top}\eta_{c_{1}}\\
\vdots\\
B_{c_{C}}^{\top}\eta_{c_{C}}\\
0
\end{array}\right]-\left[\begin{array}{c}
A_{p_{1}}^{\top}\mu_{p_{1}}\\
\vdots\\
A_{p_{P}}^{\top}\mu_{p_{P}}\\
A_{c_{1}}^{\top}\mu_{c_{1}}\\
\vdots\\
A_{c_{C}}^{\top}\mu_{c_{C}}\\
0
\end{array}\right]=\left[\begin{array}{c}
0\\
\vdots\\
0\\
0\\
\vdots\\
0\\
0
\end{array}\right]
\]
gives the required conditions. 

The proofs for $\eta^{\top}\left(Bx-b\right)=0$, $Bx-b\leq0$, and
$\eta\geq0$ are trivial.\end{proof}

Since the additional constraints $\mu_{M}=0$ on the dual variables
of Problem \ref{eq:KKT_all-3} cannot be handled by most of the available
quadratic programming solvers, we have to reformulate the problem
in a dual form. We start by formulating the optimization problem out
of the KKT conditions (\ref{eq:KKT_all-3}) as 
\begin{equation}
\begin{array}{cl}
\underset{x}{\text{max}} & -\pi^{\top}x-\frac{1}{2}x^{\top}Qx\\
\\
\text{s.t.} & Ax=a\\
\\
 & Bx\leq b\\
\\
 & \mu_{M}=0
\end{array}\label{eq:op_all-2}
\end{equation}
and by defining the Lagrangian as

\[
\mathcal{L}\left(x,\mu,\eta\right)=\left\{ \begin{array}{rl}
-\frac{1}{2}x^{\top}Qx-\pi^{\top}x-\left(Ax-a\right)^{\top}\mu-\left(Bx-b\right)^{\top}\eta; & \text{if }\eta\geq0\\
\\
-\infty; & \text{otherwise}.
\end{array}\right.
\]
By the virtue of Lemma \ref{prop:pos_semi_def}, $Q\succeq0$, and
$\mathcal{L}\left(x,\mu,\eta\right)$ is therefore a smooth and convex
function. The unconstrained minimizer can be determined by solving
$\mathcal{D}_{x}\mathcal{L}\left(x,\mu,\eta\right)=0$. Calculating
\[
\mathcal{D}_{x}\mathcal{L}\left(x,\mu,\eta\right)=-Qx-\pi-A^{\top}\mu-B^{\top}\eta
\]
and inserting $\pi$ back to the Lagrangian, an equivalent formulation
is obtained as follows,
\[
\mathcal{L}\left(x,\mu,\eta\right)=\left\{ \begin{array}{rl}
\frac{1}{2}x^{\top}Qx+a^{\top}\mu+b^{\top}\eta & \text{if }\eta\geq0\text{ \text{and }}-Qx-\pi-A^{\top}\mu-B^{\top}\eta=0,\\
\\
-\infty & \text{otherwise}
\end{array}\right.
\]
Relating the latter to a maximization optimization problem, the following
formulation is obtained 
\begin{equation}
\begin{array}{cl}
\underset{x,\mu,\eta}{\text{max}} & -\frac{1}{2}x^{\top}Qx-\mu^{\top}a-\eta^{\top}b\\
\\
\text{s.t.} & Qx+A^{\top}\mu+B^{\top}\lambda+\pi=0\\
\\
 & \eta\geq0\\
\\
 & \mu_{M}=0.
\end{array}\label{eq:op_all-1-1}
\end{equation}
Problem (\ref{eq:op_all-1-1}) is equivalent to Problem (\ref{eq:op_all-2}),
but it can be solved using any quadratic programming algorithm. The
numerical results in this paper are calculated using Gurobi \cite{gurobioptimization2014gurobioptimizer}.

\section{Extensions of the model\label{sec:Extensions}}

In this section we describe a few realistic extensions of the model
from Section \ref{sec:Problem-description}, that are needed to make
the model applicable in practice. Note that all extensions can be
incorporated into the quadratic programming framework, and thus the
reformulation of Section \ref{sec:Algorithm} still applies.

\subsection{Future contracts\label{sub:Future-contracts}}

In the model above, we assumed that the market participants trade
only forward contracts. A more realistic market setting would include
also future contracts. In this section we explain how the model can
be extended in this regard. 

Let $\acute{\Pi}$ and $\grave{\Pi}$ denote prices of forward and
future contracts, respectively. A no-arbitrage argument allows one
to calculate $\mathbb{E}^{\mathbb{P}}\left[\grave{\Pi}\right]$ as
a solution of the following triangular system of linear equations
for each $j\in J$, 
\begin{equation}
\begin{array}{l}
\mathbb{E}^{\mathbb{P}}\left[\acute{\Pi}\left(t_{i},T_{j}\right)\right]=\\
\\
\begin{cases}
\mathbb{E}^{\mathbb{P}}\left[\grave{\Pi}\left(t_{i},T_{j}\right)\right] & i=\max\left\{ I_{j}\right\} \\
\\
\mathbb{E}^{\mathbb{P}}\left[\grave{\Pi}\left(t_{i},T_{j}\right)\right]+\sum_{k=i}^{\max\left\{ I_{j}\right\} -1}\left(\left[\mathbb{E}^{\mathbb{P}}\left[\grave{\Pi}\left(t_{k+1},T_{j}\right)\right]-\mathbb{E}^{\mathbb{P}}\left[\grave{\Pi}\left(t_{k},T_{j}\right)\right]\right]\frac{e^{-rt_{k+1}}}{e^{-rT_{j}}}\right) & i<\max\left\{ I_{j}\right\} .
\end{cases}
\end{array}\label{eq:bi}
\end{equation}
Recall from the introduction that $t_{\max\left\{ I_{j}\right\} }=T_{j}$.

\subsection{Tradable contracts}

In the previous sections, we have implicitly assumed that each contract
covers only one delivery period. As described in Subsection \ref{sec:Organization-of-market},
this is clearly not true in reality where contracts often cover a
longer delivery period. For example, season ahead contracts cover
the delivery over every half hour in the next season. Similarly, a
month ahead contract covers every half hour in the next month. One
can also distinguish between peak contracts which cover only delivery
periods from 7am to 7pm and off-peak contracts which cover delivery
periods from 7pm to 7am. Other blocks are also possible.

A contract that covers many delivery periods $J'\subseteq J$ and
is traded at $t_{i}$, $i\in\cap_{j'\in J'}I_{j'}$ can be incorporated
in our model by enforcing
\begin{equation}
\mathbb{E}^{\mathbb{P}}\left[\Pi\left(t_{i},T_{j'}\right)\right]=\mathbb{E}^{\mathbb{P}}\left[\Pi\left(t_{i},T_{j''}\right)\right]
\end{equation}
and 
\begin{equation}
V_{k}\left(t_{i},T_{j'}\right)=V_{k}\left(t_{i},T_{j''}\right)
\end{equation}
for all $k\in P\cup C$ and all $j',j''\in J'$.

\subsection{Costs of trading\label{sub:Costs-of-trading}}

In the previous sections we assumed that the players can change their
position without paying any fees. The reality is of course different.
Cost of trading is an important component involved in the trading
of power. The precise formulation of the trading costs depends on
the market micro-structure and it is a research area itself. In this
paper we adapt the view of \cite{almgren2001optimal} where they propose
that a cost of trading one unit of power can be approximated by a
linear function 
\begin{equation}
h\left(V_{p}\left(t_{i},T_{j}\right)\right)=\epsilon_{ij}\:\text{sign}\left(V_{p}\left(t_{i},T_{j}\right)\right)+\upsilon_{ij}V_{p}\left(t_{i},T_{j}\right).\label{eq:linear_costs}
\end{equation}
The first term represents the fixed costs of trading that occur due
to the bid-ask spread and trading fees. A good estimate for $\epsilon_{ij}$
is the sum of a half of the bid-offer spread and the trading fees.
The second term approximates the micro-structure of the order book.
Selling a large volume $V_{p}\left(t_{i},T_{j}\right)$ exhausts the
supply of liquidity, which causes a short-term decrease in the price.
The factor $\upsilon_{ij}$ hence represents a half of the change
in price caused by selling volume $V_{p}\left(t_{i},T_{j}\right)$.
We assume that the decrease in price is only temporary and that the
price returns the equilibrium level at the next trading time $t_{i+1}$. 

Equation (\ref{eq:linear_costs}) represents costs of trading a unit
of power. Costs of trading a volume $V_{p}\left(t_{i},T_{j}\right)$
is then
\begin{equation}
V_{p}\left(t_{i},T_{j}\right)h\left(V_{p}\left(t_{i},T_{j}\right)\right)=\epsilon_{ij}\left|V_{p}\left(t_{i},T_{j}\right)\right|+\upsilon_{ij}V_{p}\left(t_{i},T_{j}\right)^{2}.\label{eq:linear_costs-1}
\end{equation}
See Subsection \ref{sec:Organization-of-market} for a discussion
of the bid-ask spread in the UK electricity market.

\section{Numerical results\label{sec:Simulations}}

In this section we describe numerical results. First we investigate
the calibration of power plant parameters from the historical production.
Then we discuss the properties of the term structure of electricity
with the help of a simple artificial example. Finally, the suitability
of our model for realistically modeling the entire UK electricity
market is presented.

\subsection{Calibration\label{sub:Calibration}}

If our model is applied in practice, one has to estimate the physical
characteristics of the power plants such as capacity, ramp up and
ramp down constraints, efficiency, and carbon emission intensity factor.

In the UK all power plants are required to submit their available
capacity as well as ramp up and ramp down constraints to the system
operator on a half hourly basis. This data is publicly available at
the Elexon website%
\footnote{http://www.bmreports.com/%
}.

A more challenging problem is to estimate the efficiency and the carbon
emission intensity factor of each power plant. As we explained in
Section \ref{sec:Organization-of-market}, all market participants
submit their expected production/consumption to the system operator
one hour before the delivery. Our goal is to find the efficiency and
the carbon emission intensity factor that best describe the historical
production of each power plant. 

We first normalized the historical production of each power plant
by calculating the ratio between the production and the available
capacity of a power plant in each half hour. We denote the normalized
historical production by $\tilde{W}_{p,l,r}\left(T_{j}\right)$ for
each delivery period $j\in J$ and power plant $r\in R^{p,l}$. Normalization
makes sure that $\tilde{W}_{p,l,r}\left(T_{j}\right)\in\left[0,1\right]$.
For the purpose of calibration we assume that producers are risk neutral
and set $\lambda_{p}=0$ for all $p\in P$. Furthermore, we neglect
the ramp up and ramp down constraints (\ref{eq:pb0}). With these
simplifications, we can see that a power plant will produce at time
$T_{j}$ if and only if the income from selling electricity at the
spot price is greater or equal to the costs of purchasing the required
fuel and emission certificates at the current spot price. In other
words, for a power plant that runs on fuel $l\in L$ and produces
electricity at time $T_{j}$,
\begin{equation}
\Pi\left(T_{j},T_{j}\right)-c^{p,l,r}G_{l}\left(T_{j},T_{j}\right)-g^{p,l,r}G_{em}\left(T_{j},T_{j}\right)\geq0\label{eq:e2}
\end{equation}
must hold for production to take place.

It is immediately clear why (\ref{eq:e2}) must hold when only spot
contracts are available. Let us investigate why (\ref{eq:e2}) holds
also if forward and future electricity contracts are available on
the market. At any trading time $t_{i}$, $i\in I_{j}$, a rational
producer could enter into a short electricity forward contract and
simultaneously into a long fuel and emission forward contract if
\begin{equation}
\Pi\left(t_{i},T_{j}\right)-c^{p,l,r}G_{l}\left(t_{i},T_{j}\right)-g^{p,l,r}G_{em}\left(t_{i},T_{j}\right)\geq0.\label{eq:prodIneq-1}
\end{equation}
At delivery time $T_{j}$, this producer has two options:
\begin{itemize}
\item To acquire the delivery of the fuel and emission certificates bought
at trading time $t_{i}$ and produce electricity. In this case, she
observes the following profit 
\begin{equation}
\widehat{P_{1}}\left(T_{j}\right)=\Pi\left(t_{i},T_{j}\right)-c^{p,l,r}G_{l}\left(t_{i},T_{j}\right)-g^{p,l,r}G_{em}\left(t_{i},T_{j}\right).
\end{equation}

\item To produce no electricity and instead close the forward electricity,
fuel, and emission contracts. In this case, she observes the following
profit 
\begin{equation}
\begin{array}{rcl}
\widehat{P_{2}}\left(T_{j}\right) & = & \left[\Pi\left(t_{i},T_{j}\right)-\Pi\left(T_{j},T_{j}\right)\right]-c^{p,l,r}\left[G_{l}\left(t_{i},T_{j}\right)-G_{l}\left(T_{j},T_{j}\right)\right]\\
\\
 &  & -g^{p,l,r}\left[G_{em}\left(t_{i},T_{j}\right)-G_{em}\left(T_{j},T_{j}\right)\right].
\end{array}
\end{equation}

\end{itemize}
Power plant $r\in R^{p,l}$ will run at $T_{j}$ if and only if
\begin{equation}
\widehat{P_{1}}\left(T_{j}\right)\geq\widehat{P_{2}}\left(T_{j}\right).\label{eq:e1}
\end{equation}
With some reordering of the terms, it is easy to see that inequality
(\ref{eq:e1}) is equivalent to inequality (\ref{eq:e2}).

To adjust inequality (\ref{eq:e2}) for the neglected risk premium
and trading costs, we add another term $\tilde{c}^{p,l,r}\geq0$ and
then (\ref{eq:e2}) reads
\begin{equation}
\Theta\left(c^{p,l,r},g^{p,l,r},\tilde{c}^{p,l,r};T_{j}\right)\geq0,\label{eq:prodIneq}
\end{equation}
where
\begin{equation}
\Theta\left(c^{p,l,r},g^{p,l,r},\tilde{c}^{p,l,r};T_{j}\right)=\Pi\left(T_{j},T_{j}\right)-c^{p,l,r}G_{l}\left(T_{j},T_{j}\right)-g^{p,l,r}G_{em}\left(T_{j},T_{j}\right)-\tilde{c}^{p,l,r}.\label{eq:mr}
\end{equation}

In last step of the calibration, we use $\Theta\left(c^{p,l,r},g^{p,l,r},\tilde{c}^{p,l,r};T_{j}\right)$
in the logistic regression. The efficiency $c^{p,l,r}$ and carbon
emission intensity factor $g^{p,l,r}$ that best explain the historical
production of power plant $r\in R^{p,l}$ are found as an optimal
solution to 
\begin{equation}
\min_{c^{p,l,r},g^{p,l,r},\tilde{c}^{p,l,r}}\sum_{j\in J}\left(\tilde{W}_{p,l,r}\left(T_{j}\right)-\frac{1}{1+\exp\left(-\Theta\left(c^{p,l,r},g^{p,l,r},\tilde{c}^{p,l,r};T_{j}\right)\right)}\right)^{2}.\label{eq:optPar}
\end{equation}
For each power plant we used over 5000 training samples obtained from
the period between 1/1/2012 and 1/1/2013.

\subsection{Term structure of the price}

In this subsection we study the term structure of the electricity
price. We focus on a simple problem with only one delivery period
and five trading periods. The trading periods named by a number from
1 to 5 correspond to a two months ahead, a month ahead, a week ahead,
a day ahead, and the spot price, respectively. To keep the effects
of various parameters of the model on the term structure of the electricity
price clearly visible, we set the risk free interest rate $\hat{r}$
to zero.

Figure \ref{fig:f1} shows how the term structure of the electricity
price and the number of contracts traded depends on the risk aversion
of the players involved. We can see that the market is usually in
normal backwardation (i.e. the term structure is upward sloping) and
that the slope as well as the price increases when the risk aversion
of the players increases. This increase in the price is due to the
increased risk premium of the producers. The risk premium for risky
contracts is higher than the risk premium of less risky contracts.

\begin{figure}
\begin{centering}
\includegraphics[bb=35bp 180bp 545bp 600bp,clip,scale=0.4]{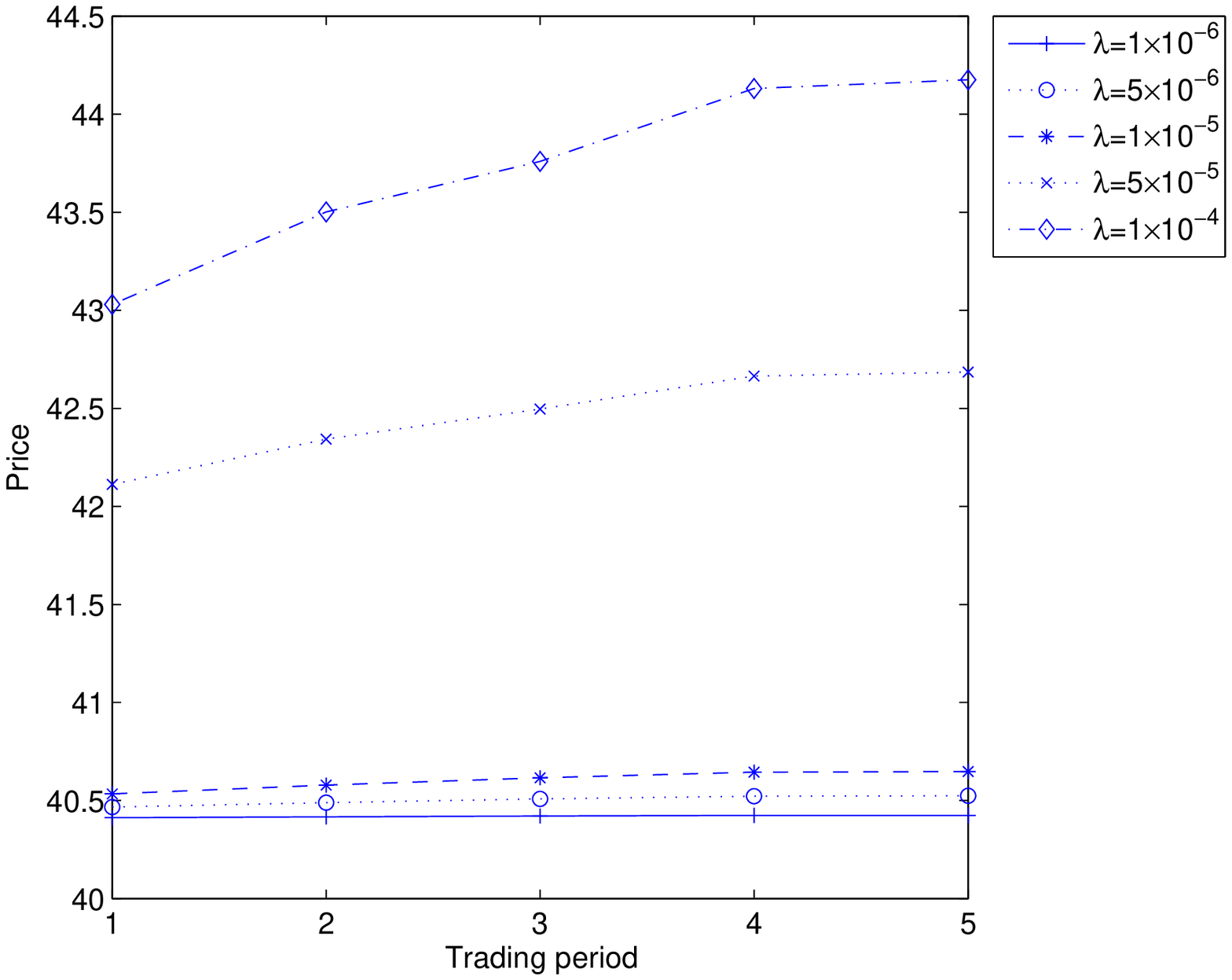}\includegraphics[bb=35bp 180bp 545bp 600bp,clip,scale=0.4]{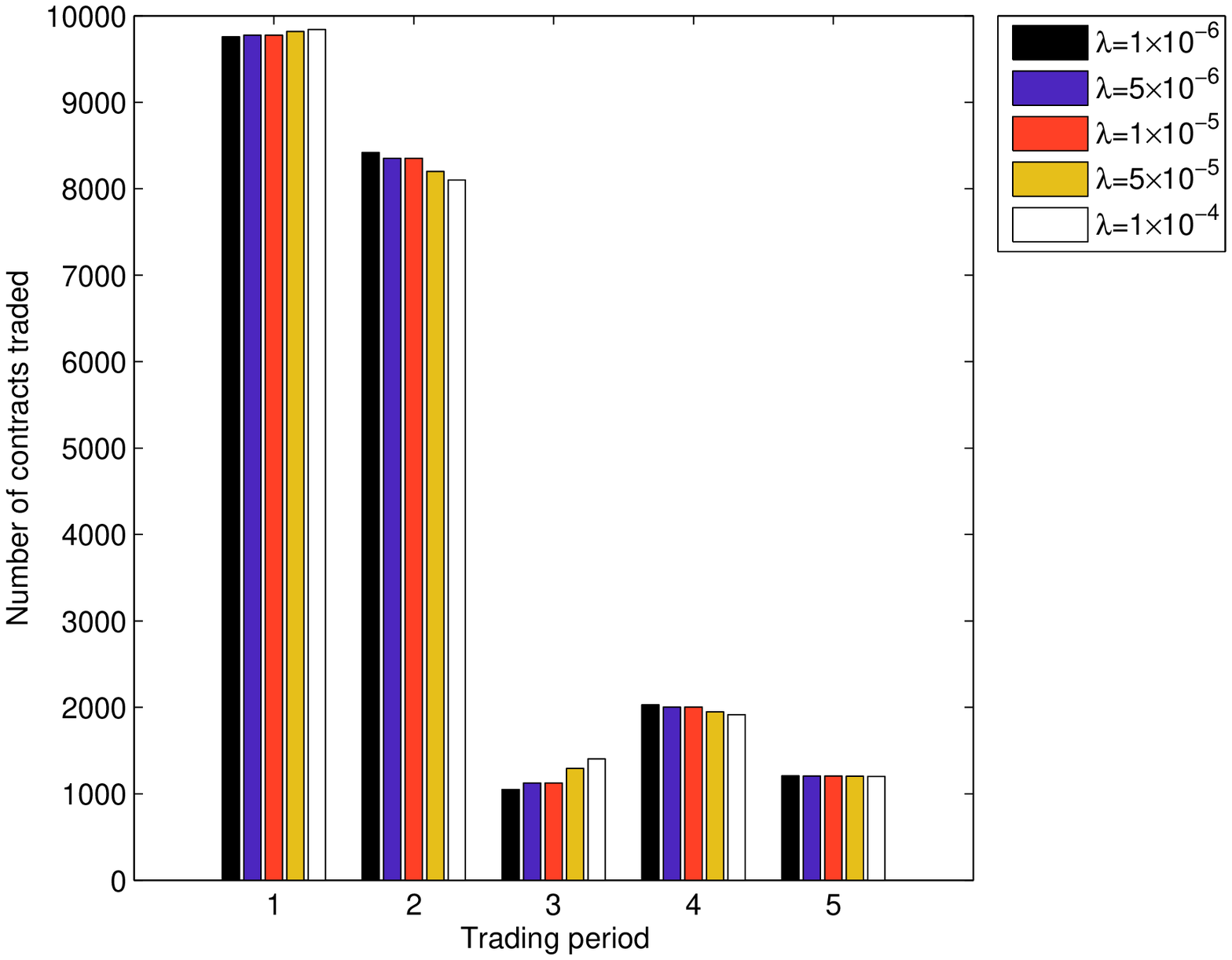}
\par\end{centering}

\caption{\label{fig:f1}$\upsilon_{ij}=0$, $\epsilon_{ij}=0$ for all $i\in\left\{ 1,...,5\right\} $
and $j\in\left\{ 1\right\} $, and increasing $\lambda_{k}$ for all
$k\in P\cup C$. The trading periods named by a number from 1 to 5
correspond to a two months ahead, a month ahead, a week ahead, a day
ahead, and the spot price, respectively.}

\end{figure}

In Figure \ref{fig:f2} we study the case when only the risk aversion
of the producers is increased while the risk aversion of the consumers
is kept constant. As we can see the risk aversion of the producers
does not have any significant impact on the shape of the term structure
of the price. However, the price of all contracts increased significantly
due to the increased risk premium of the producers. 

\begin{figure}
\begin{centering}
\includegraphics[bb=35bp 180bp 545bp 600bp,clip,scale=0.4]{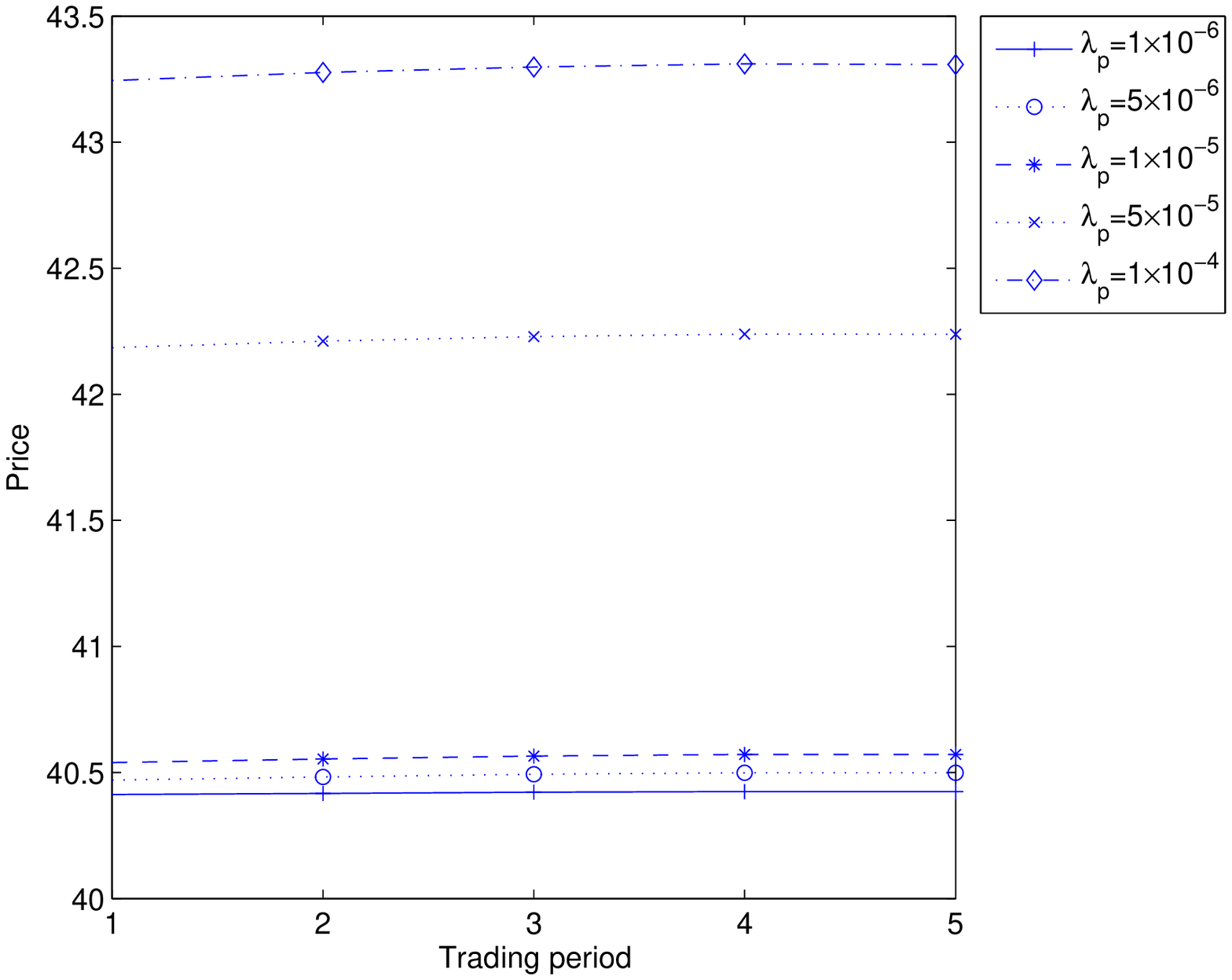}\includegraphics[bb=35bp 180bp 545bp 600bp,clip,scale=0.4]{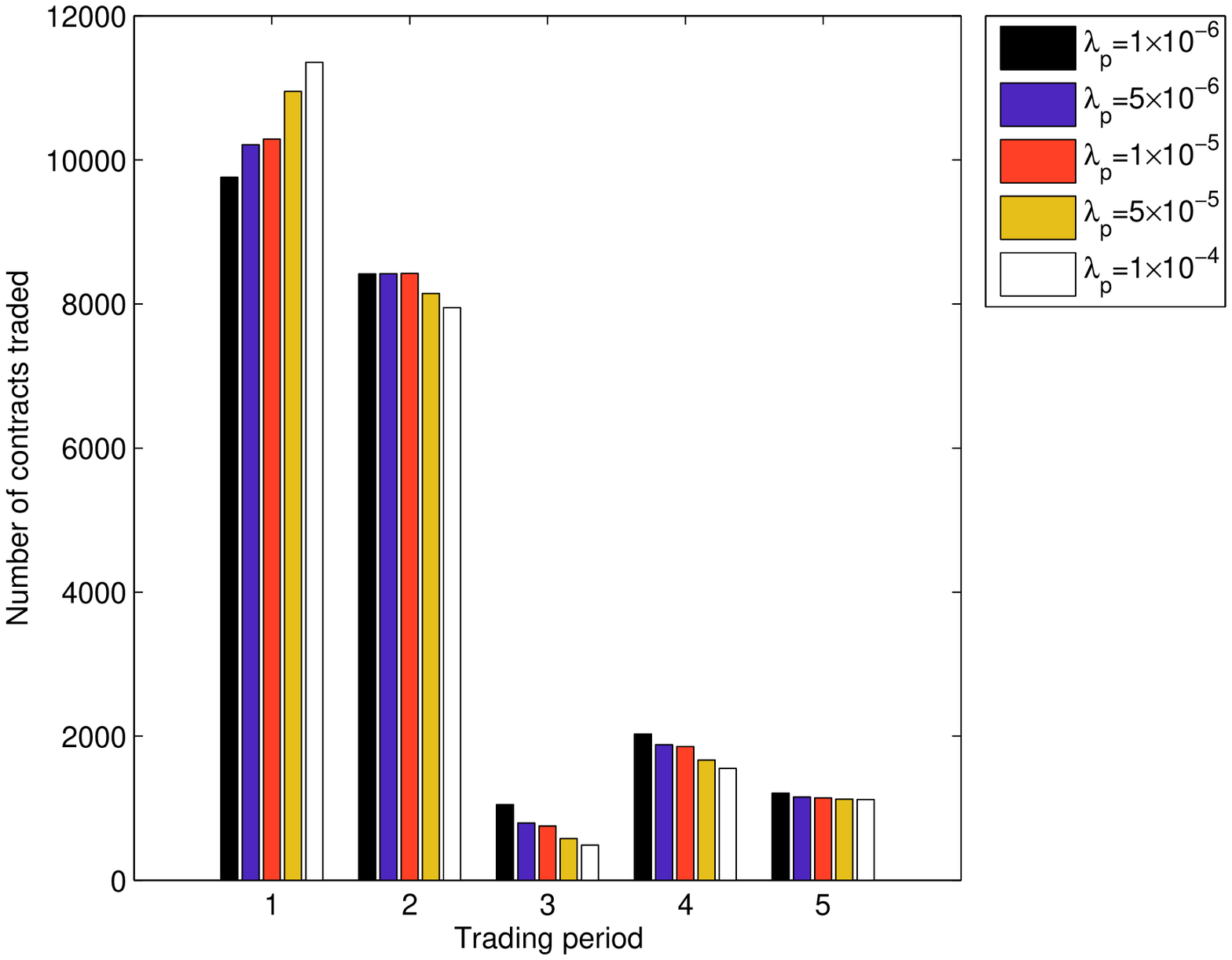}
\par\end{centering}

\caption{\label{fig:f2}$\upsilon_{ij}=0$, $\epsilon_{ij}=0$ for all $i\in\left\{ 1,...,5\right\} $
and $j\in\left\{ 1\right\} $, $\lambda_{c}=10^{-6}$ for all $c\in C$,
and increasing $\lambda_{p}$ for all $p\in P$. }
\end{figure}

In Figure \ref{fig:f3} we study the case when only the risk aversion
of the consumers is increased while the risk aversion of the producers
is kept constant. We can see that this has a small impact on the term
structure of the price. In contrast to the producers, the increased
risk premium of the consumers is not reflected in an increased price.
Thus, the consumers can only maintain their profitability by increasing
the fixed price they charge the end users (i.e. they increase the
constant price $s_{c}$ defined in (\ref{eq:23})).

\begin{figure}
\begin{centering}
\includegraphics[bb=35bp 180bp 545bp 600bp,clip,scale=0.4]{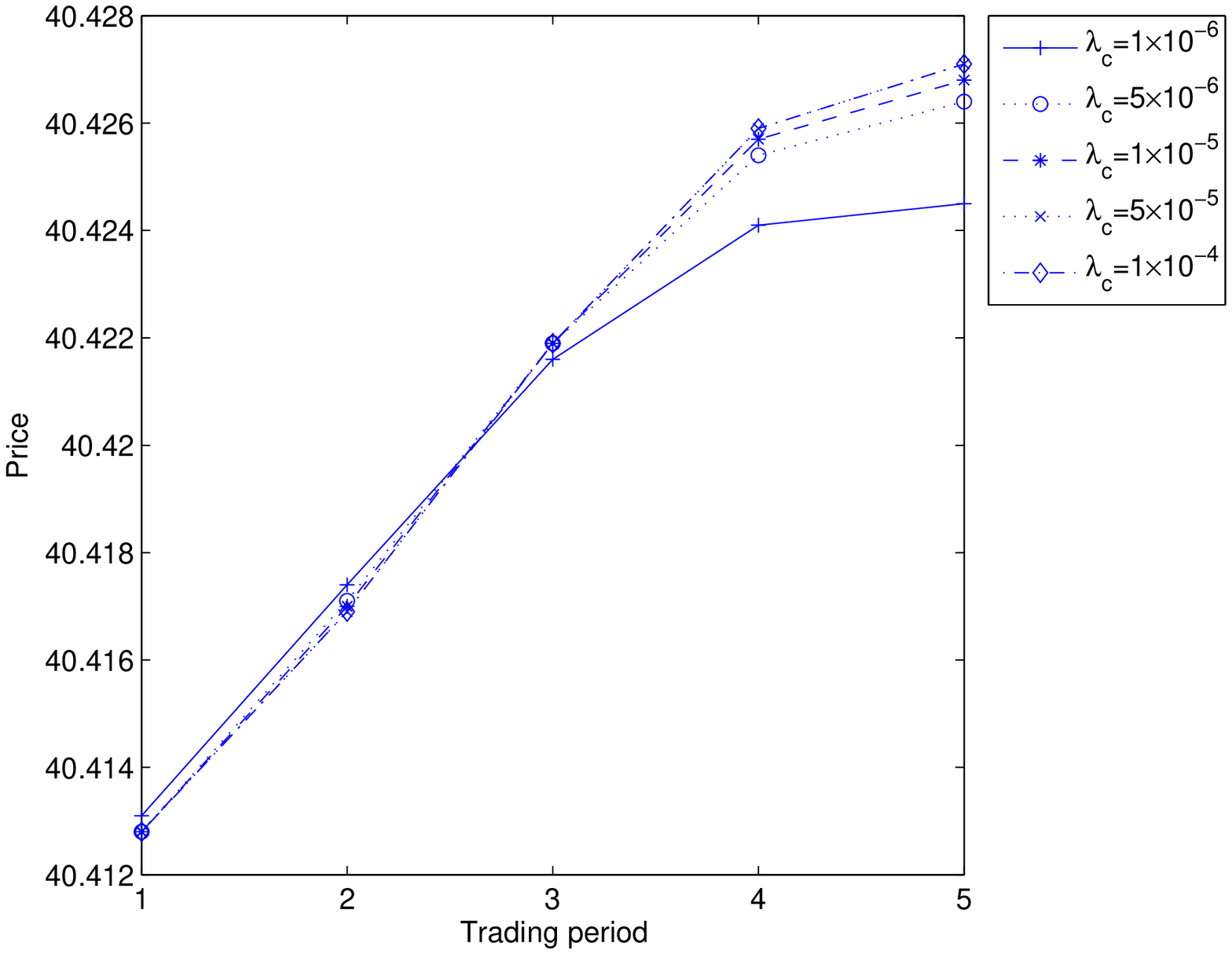}\includegraphics[bb=35bp 180bp 545bp 600bp,clip,scale=0.4]{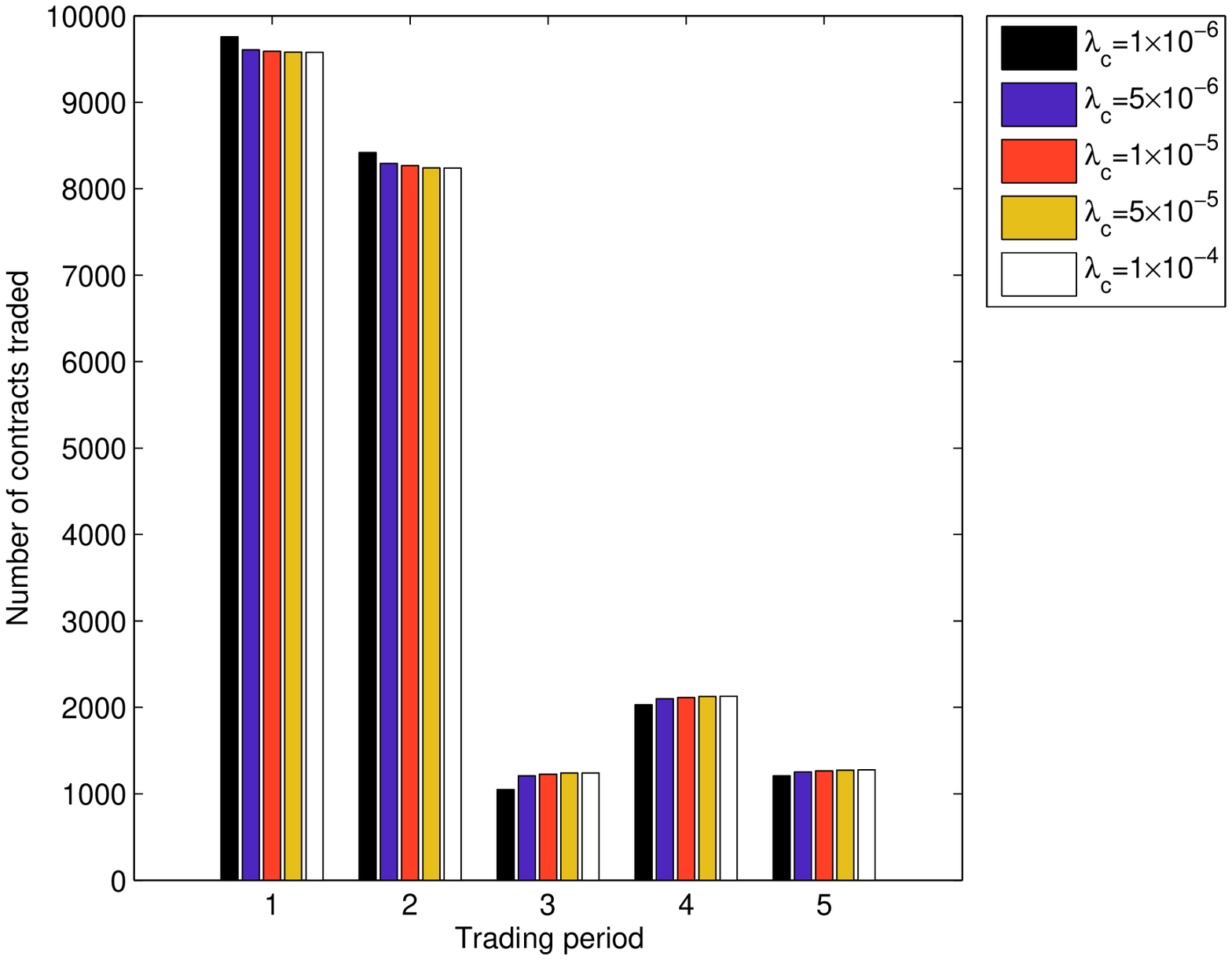}
\par\end{centering}

\caption{\label{fig:f3}$\upsilon_{ij}=0$, $\epsilon_{ij}=0$ for all $i\in\left\{ 1,...,5\right\} $
and $j\in\left\{ 1\right\} $, $\lambda_{p}=10^{-6}$ for all $p\in P$,
and increasing $\lambda_{c}$ for all $c\in C$.}
\end{figure}

One must increase the consumers' risk aversion parameter $\lambda_{c}$
significantly to observe a visual impact. Such hypothetical case is
depicted in Figure \ref{fig:f3-1}. We can see that by increasing
the risk aversion of consumers, the term structure of electricity
prices changes from the normal backwardation to contango. An increased
risk aversion increases the consumers' interest in less risky contracts
and decreases the interest in risky contracts. Since the contracts
closer to delivery are usually more risky, this causes the term structure
to flip from normal backwardation to contango. A small kink at the
fourth trading period is a consequence of a calibration of the covariance
matrix on historical values.

\begin{figure}
\begin{centering}
\includegraphics[bb=35bp 180bp 545bp 600bp,clip,scale=0.4]{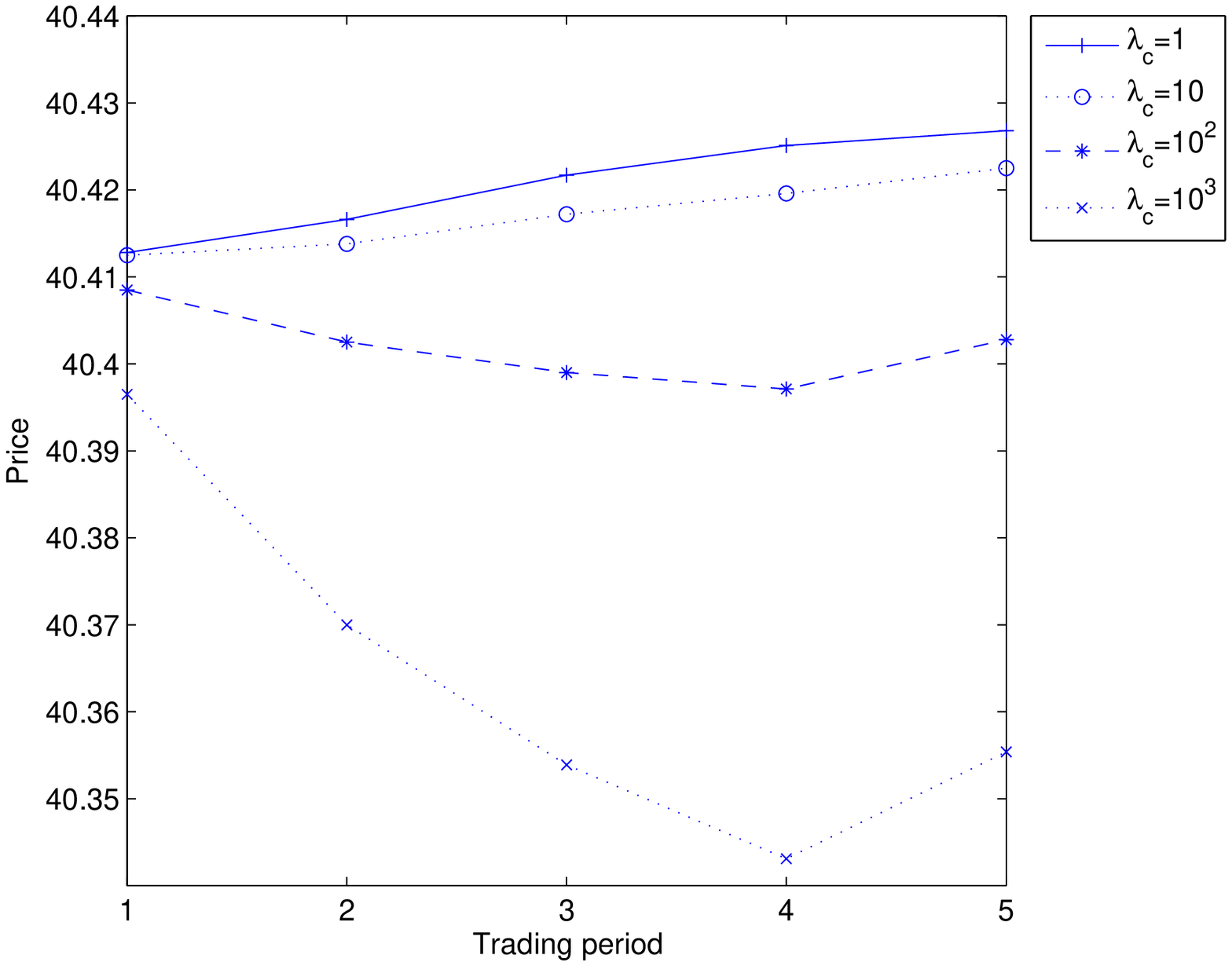}\includegraphics[bb=35bp 180bp 545bp 600bp,clip,scale=0.4]{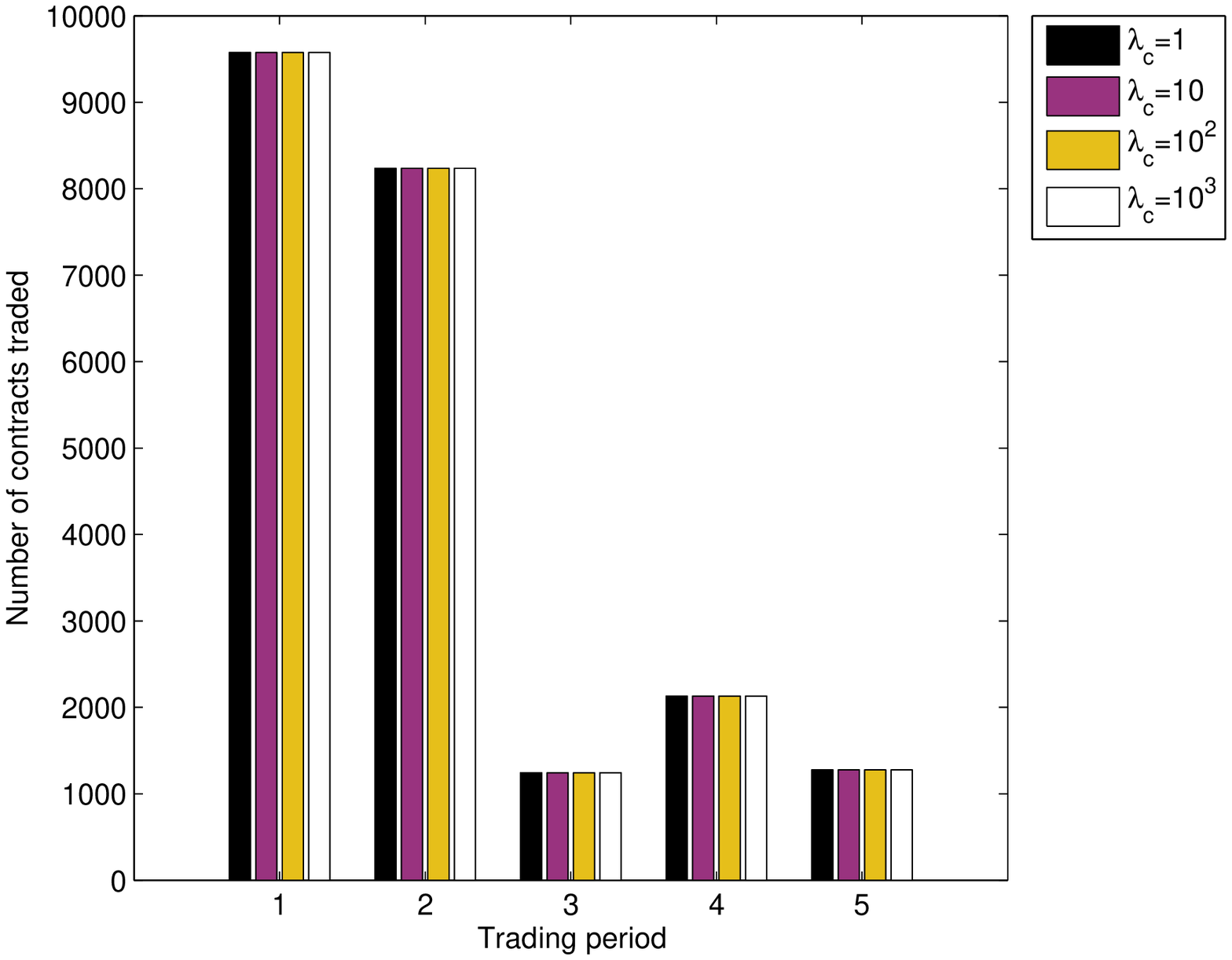}
\par\end{centering}

\begin{centering}
\caption{\label{fig:f3-1}$\upsilon_{ij}=0$, $\epsilon_{ij}=0$ for all $i\in\left\{ 1,...,5\right\} $
and $j\in\left\{ 1\right\} $, $\lambda_{p}=10^{-6}$ for all $p\in P$,
and increasing $\lambda_{c}$ for all $c\in C$.}

\par\end{centering}

\end{figure}

Intuitively, it may not be immediately clear why the change in the
risk aversion of producers and consumers has such a different impact
on the term structure of the electricity price. The level of the price
is defined by producers only. Let us consider a simplified setting
when there is only one trading period and only one delivery period.
Moreover, we have just one producer and one consumer. In such a setting,
the profit of the consumer can be calculated as 
\begin{equation}
P_{c}\left(V_{c},\Pi\right)=e^{-\hat{r}T_{1}}\left(-\Pi\left(t_{1},T_{1}\right)+s_{c}\right)D\left(T_{1}\right)
\end{equation}
where the constraint $V_{c}\left(t_{1},T_{1}\right)=D\left(T_{1}\right)$
was used. We can see that the profit function does not depend on the
consumer's decision variables $V_{c}$. Thus the consumer does not
have any power to influence the price. She has to buy a sufficient
amount of electricity to satisfy the demand of end users regardless
of the electricity price. On the other hand, the producer has much
more flexibility. Her profit can be calculated as
\begin{equation}
P_{p}\left(V_{p},\Pi\right)=e^{-\hat{r}T_{1}}\left[-\Pi\left(t_{1},T_{1}\right)V_{p}\left(t_{1},T_{1}\right)-O_{p}\left(t_{1},T_{1}\right)G_{em}\left(t_{1},T_{1}\right)-\sum_{l\in L}G_{l}\left(t_{1},T_{1}\right)F_{p,l}\left(t_{1},T_{1}\right)\right].
\end{equation}
An obvious feasible solution to her optimization problem (\ref{eq:opt_prod})
is 
\begin{equation}
V_{p}\left(t_{1},T_{1}\right)=O_{p}\left(t_{1},T_{1}\right)=F_{p,l}\left(t_{1},T_{1}\right)=0
\end{equation}
for all $l\in L$. This leads to the objective value $\Phi_{p}=0$.
Thus, clearly for any given price $\Pi\left(t_{1},T_{1}\right)$,
the objective value satisfies $\Phi_{p}\geq0$. A producer will only
turn on the power plants if the electricity price is high enough to
cover all production costs, trading costs, and the risk premium. A
similar reasoning can also be extended to a multiperiod setting.

In Figures \ref{fig:f4}, \ref{fig:f5}, \ref{fig:f6}, and \ref{fig:f7}
we study the impact of the liquidity on the term structure of the
electricity price. In Figures \ref{fig:f4}, and \ref{fig:f5} the
impact of the market micro-structure (i.e. the quadratic term $\upsilon$)
is examined while in Figures \ref{fig:f6}, and \ref{fig:f7} the
impact of the bid-ask spread and the trading fees (i.e. the linear
term $\epsilon$) is depicted. 

Increasing of $\upsilon$ for all the contracts simultaneously (see
Figure \ref{fig:f4}), increases the price without changing the shape
of the term structure. It has a large impact on the optimal trading
strategy of the players. When $\upsilon$ is large players spread
the number of contracts traded in each trading period equally among
all the trading periods available.

\begin{figure}
\begin{centering}
\includegraphics[bb=35bp 180bp 545bp 600bp,clip,scale=0.4]{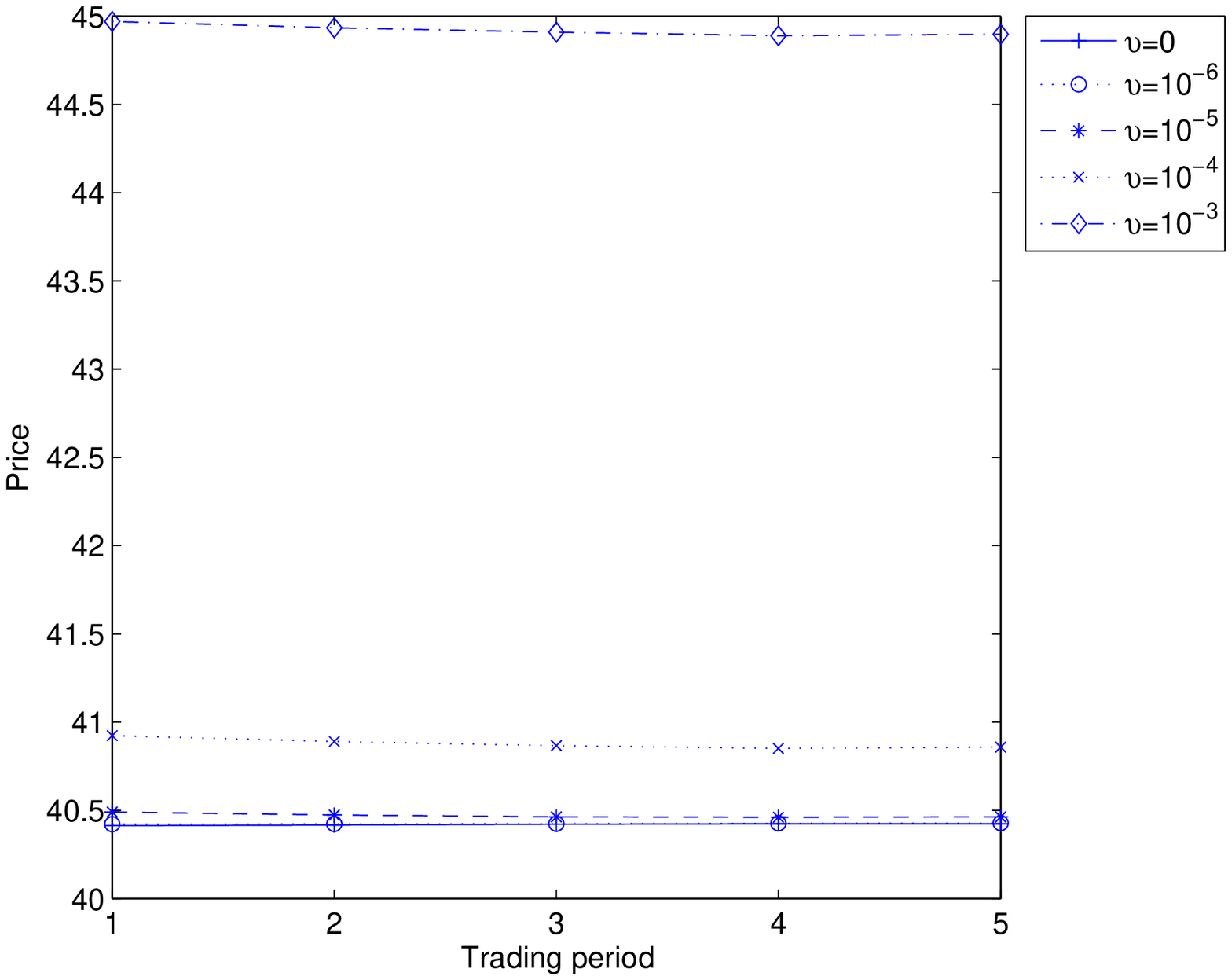}\includegraphics[bb=35bp 180bp 545bp 600bp,clip,scale=0.4]{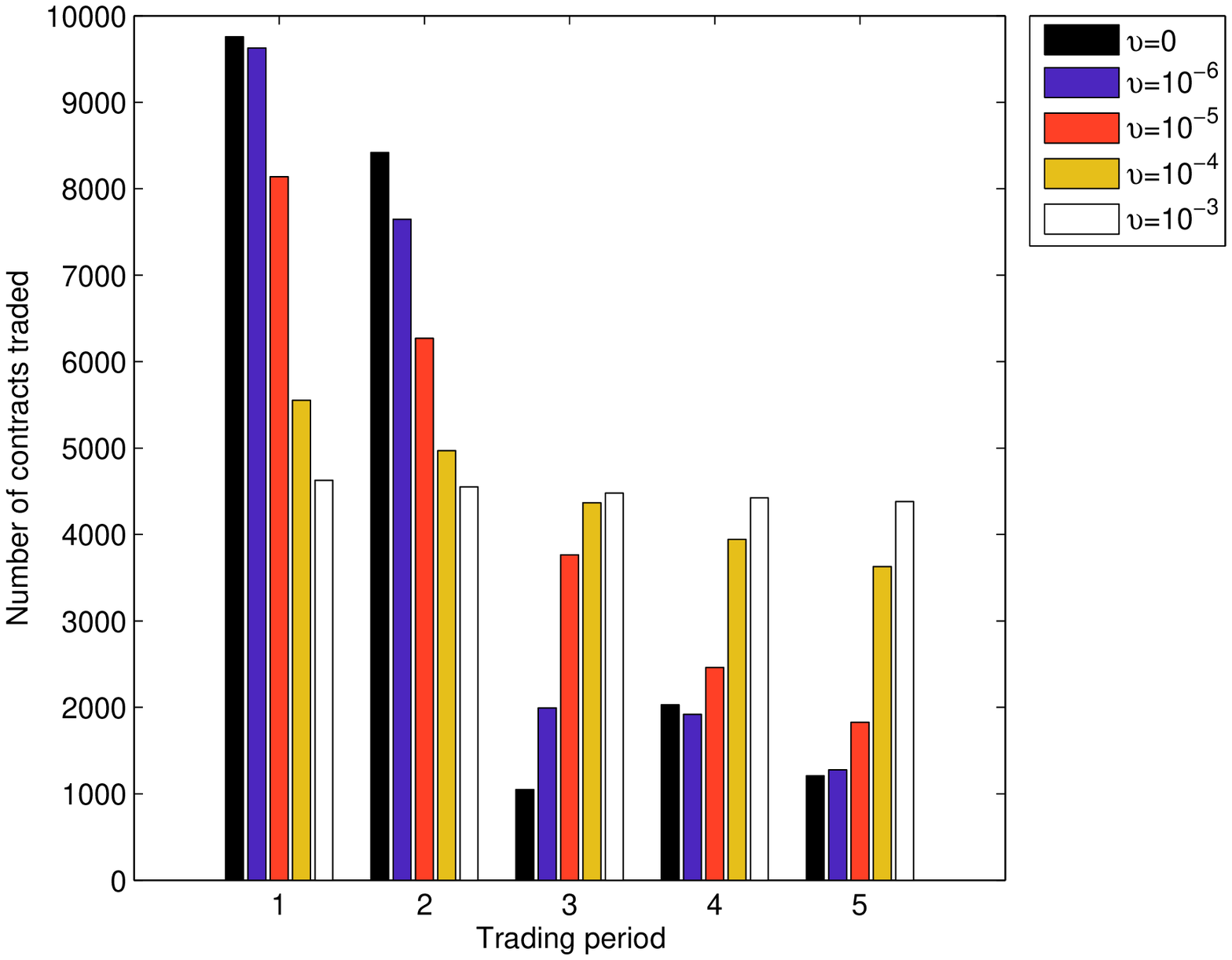}
\par\end{centering}

\caption{\label{fig:f4} $\lambda_{k}=10^{-6}$ for all $k\in P\cup C$, $\epsilon_{ij}=0$
and increasing $\upsilon_{ij}$ for all $i\in\left\{ 1,...,5\right\} $
and $j\in\left\{ 1\right\} $.}
\end{figure}

When $\upsilon$ is changed only for the first trading period (i.e.
the two month ahead contract), this significantly changes the term
structure of the price. As we can see in Figure \ref{fig:f5}, a large
increase in $\upsilon$ changes the term structure from normal backwardation
to contango (i.e. downward sloping). As expected, the players also
trade a much smaller number of contracts in the first trading period.

\begin{figure}
\begin{centering}
\includegraphics[bb=35bp 180bp 545bp 600bp,clip,scale=0.4]{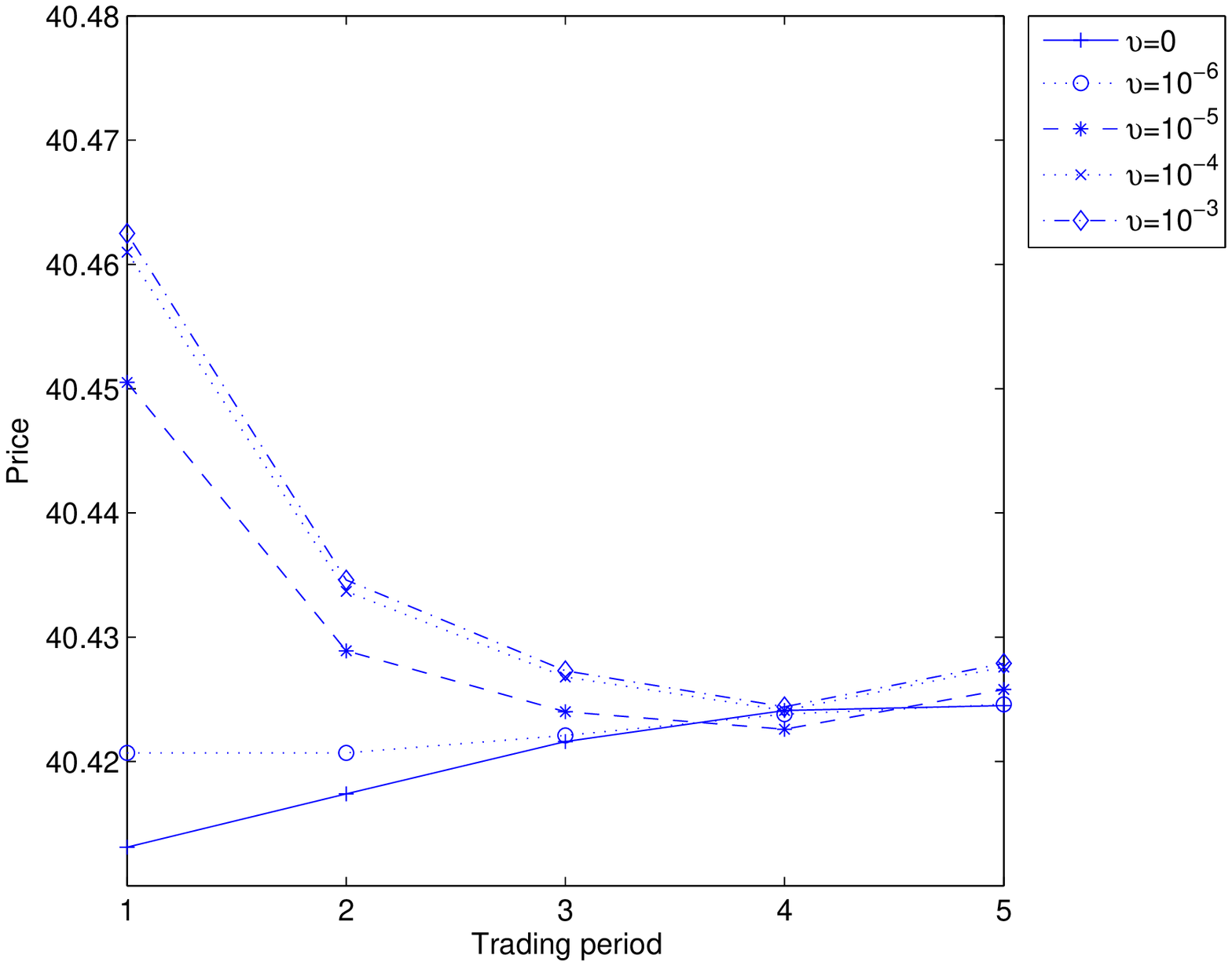}\includegraphics[bb=35bp 180bp 545bp 600bp,clip,scale=0.4]{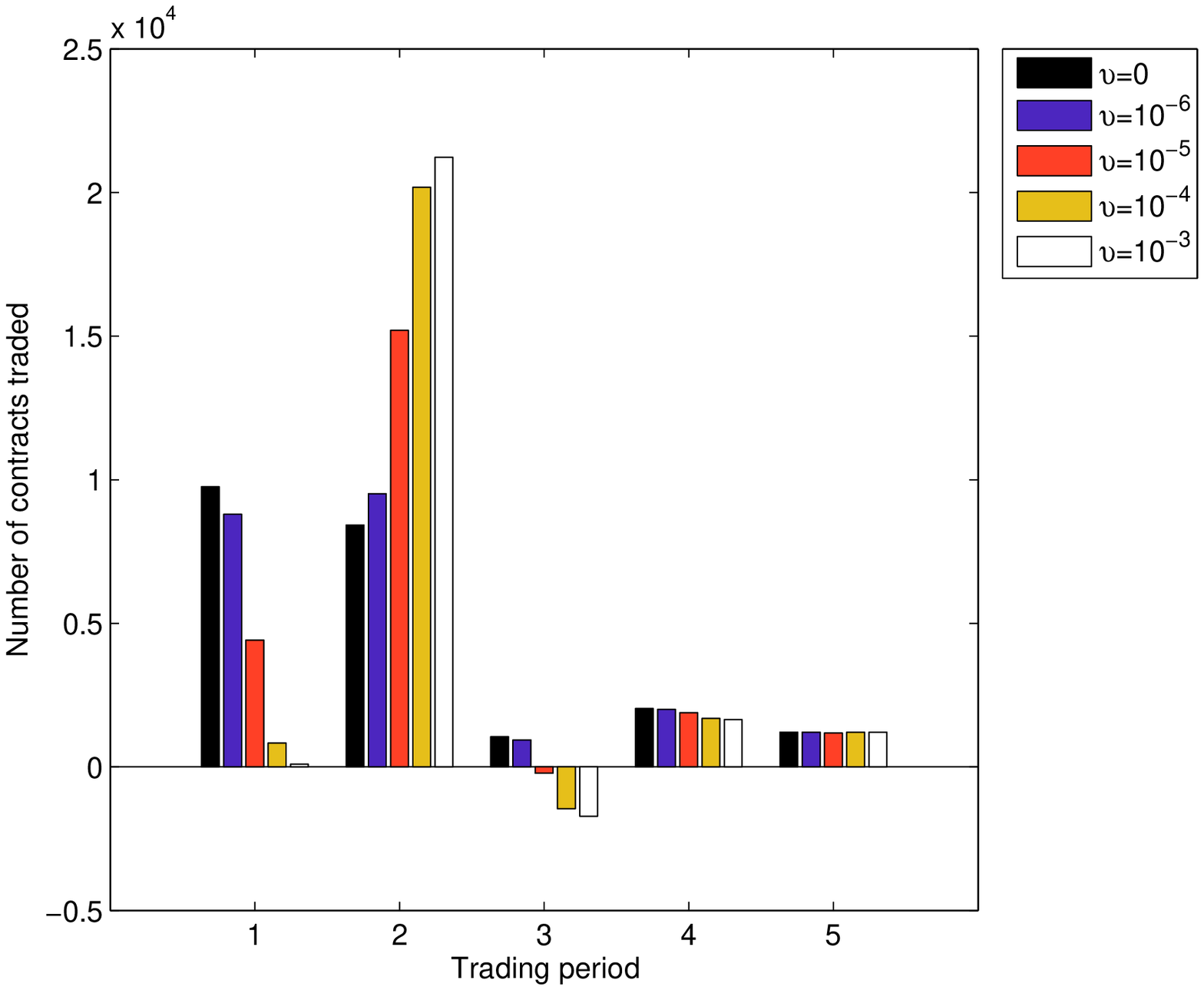}
\par\end{centering}

\caption{\label{fig:f5}$\epsilon_{ij}=0$ for all $i\in\left\{ 1,...,5\right\} $
and $j\in\left\{ 1\right\} $, $\lambda_{k}=10^{-6}$ for all $k\in P\cup C$,
and increasing $\upsilon_{11}$. For the other trading periods $i\in\left\{ 2,...,5\right\} $
and $j\in\left\{ 1\right\} $, $\upsilon_{ij}=0$.}
\end{figure}

An effect of a simultaneous change in the linear trading costs $\epsilon$
for all the contracts is depicted in Figure \ref{fig:f6}. We can
see that the linear trading costs have no impact on the trading strategy
and on the shape of the term structure. However, the price of all
contracts is increased to cover the losses of the producers caused
by the increased linear cost of trading. The consumers cover the losses
caused by the increased linear cost of trading through increased electricity
prices they charge to the end users. The simultaneous change in the
linear trading costs $\epsilon$ for all the contracts does not impact
the optimal trading strategy of the players.

\begin{figure}
\begin{centering}
\includegraphics[bb=35bp 180bp 545bp 600bp,clip,scale=0.4]{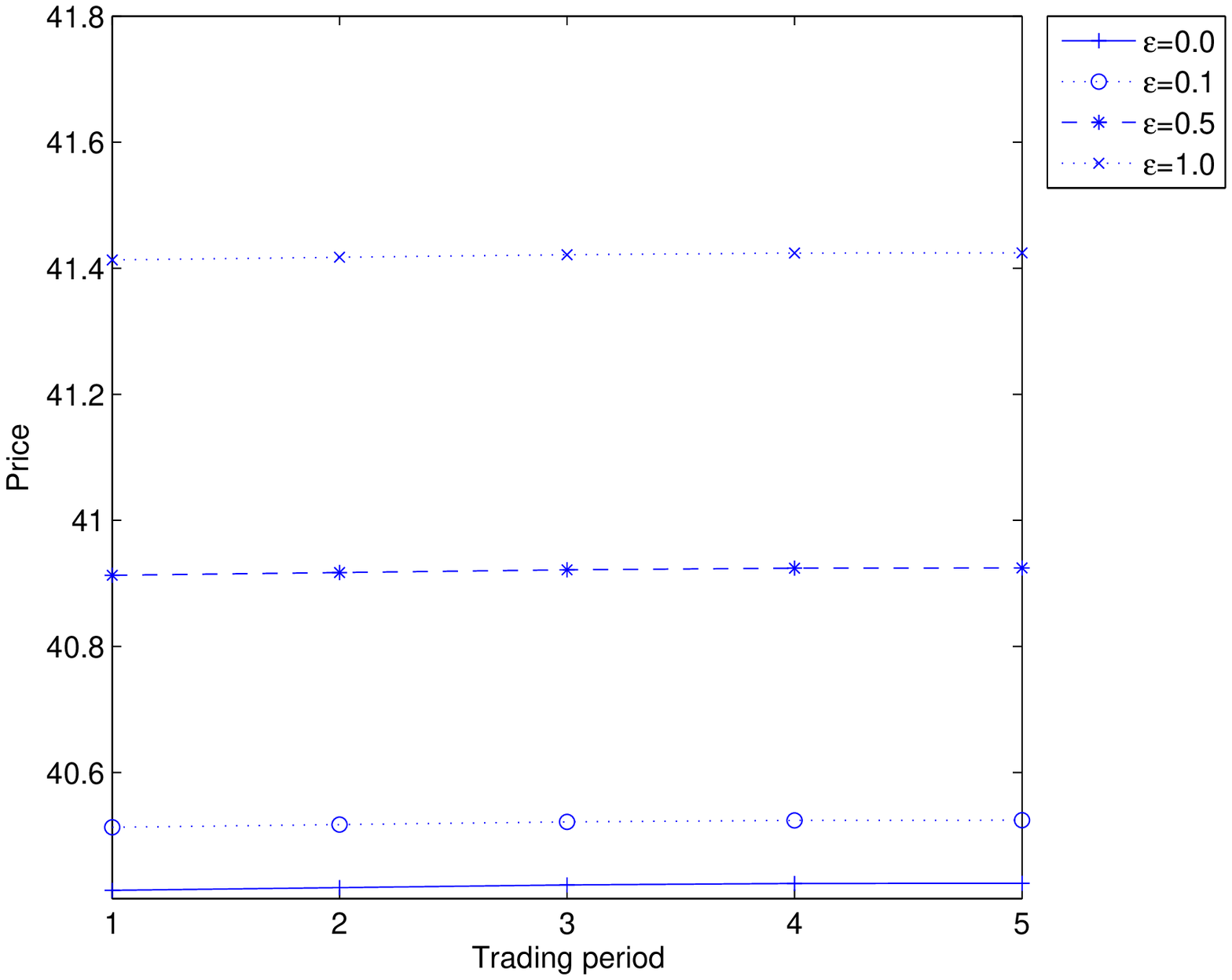}\includegraphics[bb=35bp 180bp 545bp 600bp,clip,scale=0.4]{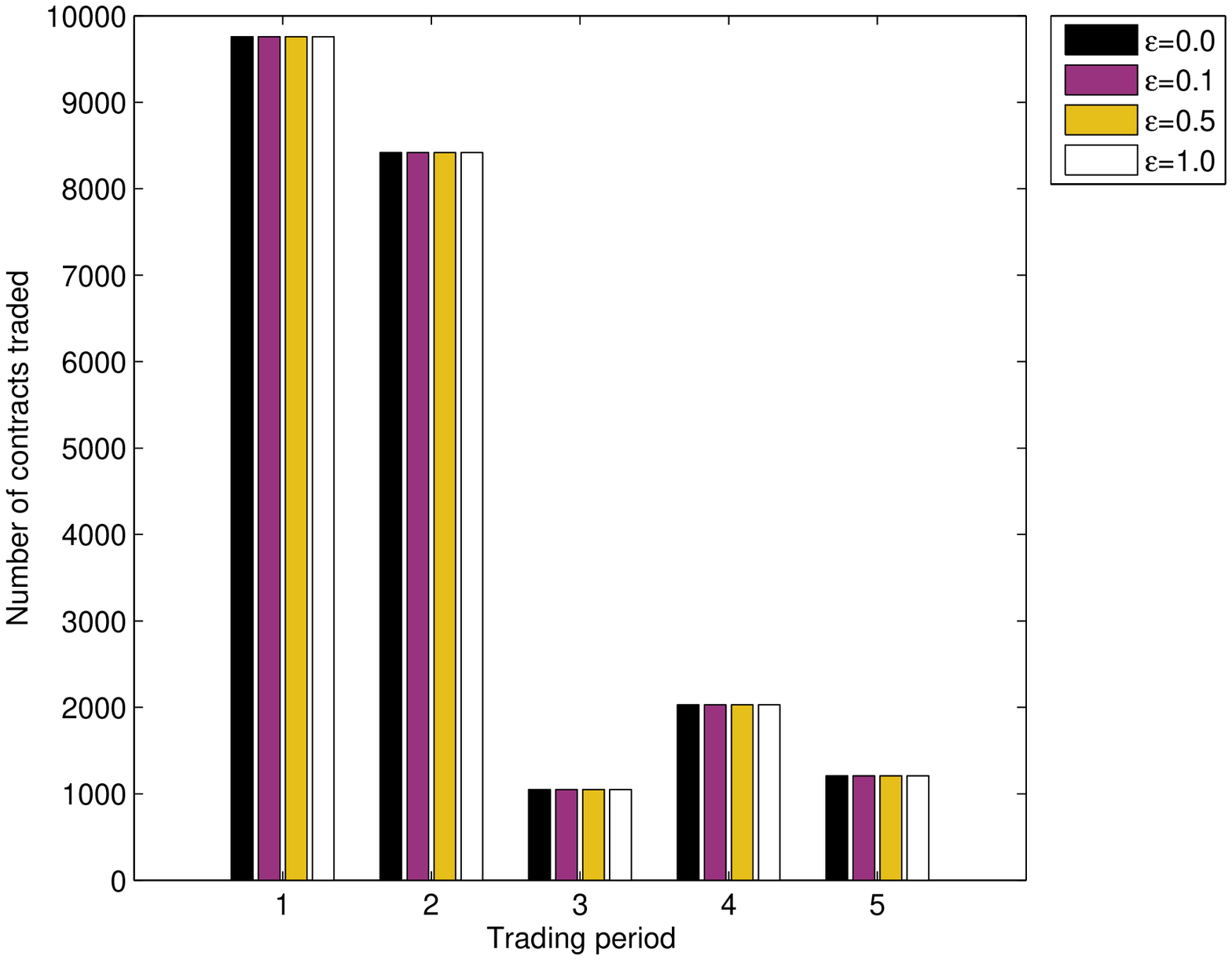}
\par\end{centering}

\caption{\label{fig:f6} $\lambda_{k}=10^{-6}$ for all $k\in P\cup C$, $\upsilon_{ij}=0$
and increasing $\epsilon_{ij}$ for all $i\in\left\{ 1,...,5\right\} $
and $j\in\left\{ 1\right\} $.}
\end{figure}

When $\epsilon$ is changed only for the first trading period (i.e.
the two month ahead contract), this, in contrast to the changes in
$\upsilon$, significantly affects only the price in the first trading
period. The price of the two month ahead contract is increased to
cover the losses of the producers caused by the increased linear cost
of trading. If $\epsilon$ is increased significantly, this makes
the two month ahead contract so unattractive that no contracts are
traded (see the right part of Figure \ref{fig:f7}).

\begin{figure}
\begin{centering}
\includegraphics[bb=35bp 180bp 545bp 600bp,clip,scale=0.4]{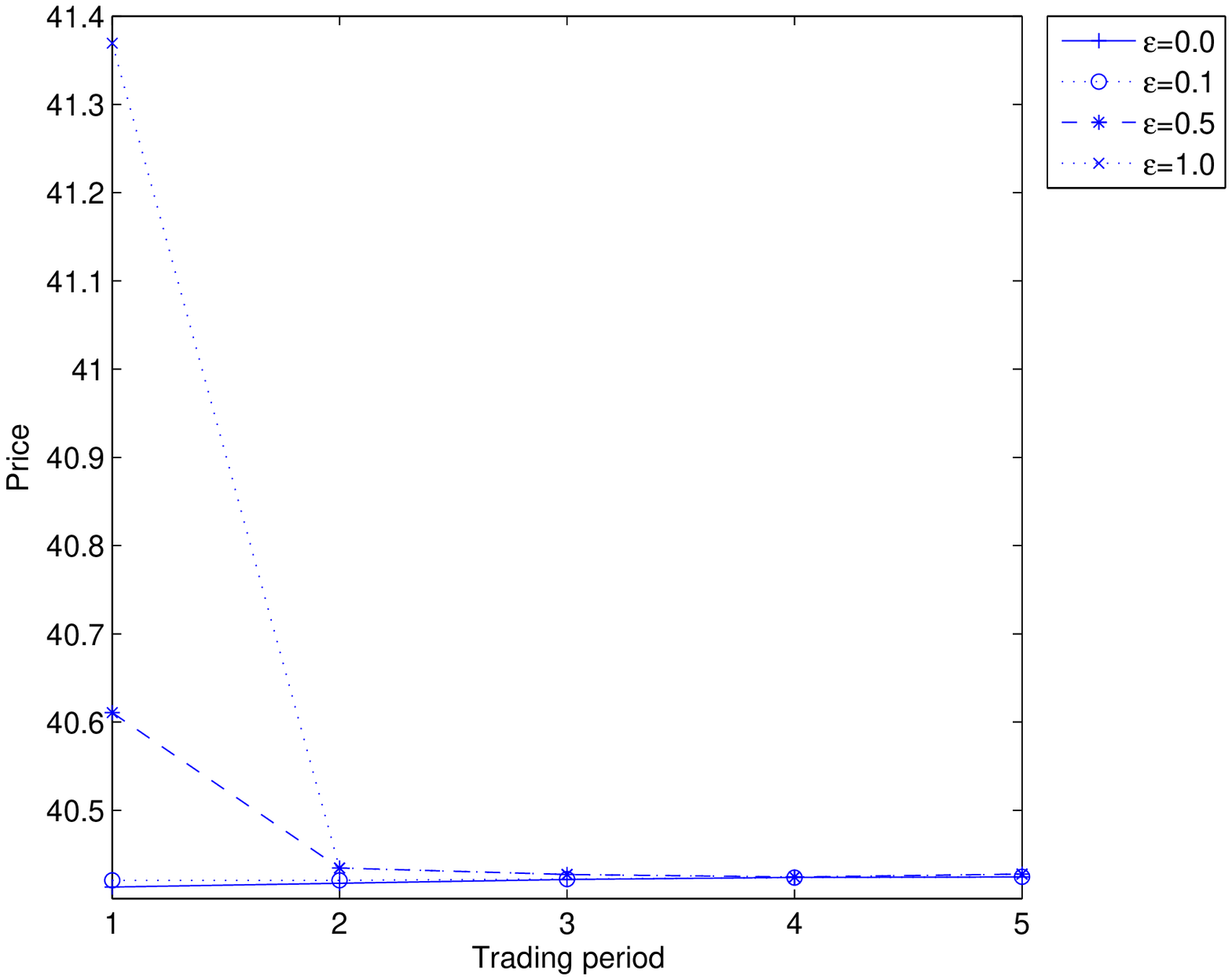}\includegraphics[bb=35bp 180bp 545bp 600bp,clip,scale=0.4]{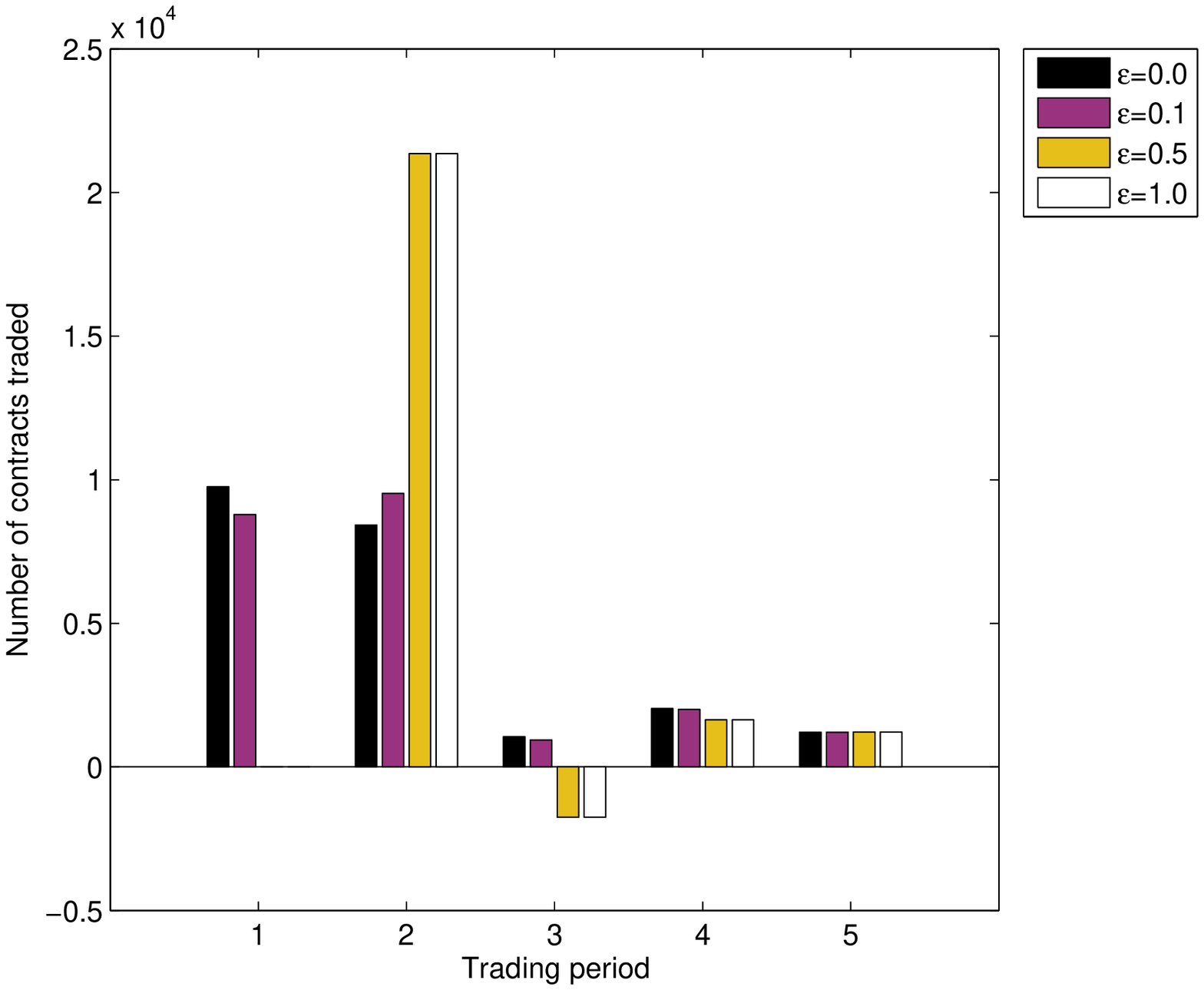}
\par\end{centering}

\caption{\label{fig:f7}$\upsilon_{ij}=0$ for all $i\in\left\{ 1,...,5\right\} $
and $j\in\left\{ 1\right\} $, $\lambda_{k}=10^{-6}$ for all $k\in P\cup C$,
and increasing $\epsilon_{11}$. For the other trading periods $i\in\left\{ 2,...,5\right\} $
and $j\in\left\{ 1\right\} $, $\epsilon_{ij}=0$.}
\end{figure}

In the reminder of this section, we investigate the effect of the
term structure of fuel and emission prices on the term structure of
electricity prices. Initially, we set
\begin{equation}
\begin{array}{ccccccccc}
\mathbb{E}^{\mathbb{P}}\left[G_{1}\left(t_{1},T_{1}\right)\right] & = & \mathbb{E}^{\mathbb{P}}\left[G_{1}\left(t_{2},T_{1}\right)\right] & = & \ldots & = & \mathbb{E}^{\mathbb{P}}\left[G_{1}\left(t_{5},T_{1}\right)\right] & = & 69.30\:\left[\text{GBp/therm}\right],\\
\\
\mathbb{E}^{\mathbb{P}}\left[G_{2}\left(t_{1},T_{1}\right)\right] & = & \mathbb{E}^{\mathbb{P}}\left[G_{2}\left(t_{2},T_{1}\right)\right] & = & \ldots & = & \mathbb{E}^{\mathbb{P}}\left[G_{2}\left(t_{5},T_{1}\right)\right] & = & 57.87\:\left[\text{GBP/tonne}\right],\\
\\
\mathbb{E}^{\mathbb{P}}\left[G_{em}\left(t_{1},T_{1}\right)\right] & = & \mathbb{E}^{\mathbb{P}}\left[G_{em}\left(t_{2},T_{1}\right)\right] & = & \ldots & = & \mathbb{E}^{\mathbb{P}}\left[G_{em}\left(t_{5},T_{1}\right)\right] & = & 3.883\:\left[\text{GBP/tonne}\right],
\end{array}
\end{equation}
where $l=1$ and $l=2$ denote gas and coal prices, respectively.
Electricity price is quoted in GBP/MWh.

We conducted two types of experiments: In the first type, we performed
parallel shifts of the term structure of fuel/emission prices. In
the second type, we changed the shape of the term structure of fuel/emission
prices. Since results for all fuels as well as for emissions are very
similar, we present them for gas only.

Figure \ref{fig:f7-1} depicts the effect of parallel shifts in the
term structure of gas prices. Expectedly, an increase/decrease in
gas prices causes an increase/decrease in electricity prices.

\begin{figure}
\begin{centering}
\includegraphics[bb=35bp 180bp 545bp 600bp,clip,scale=0.4]{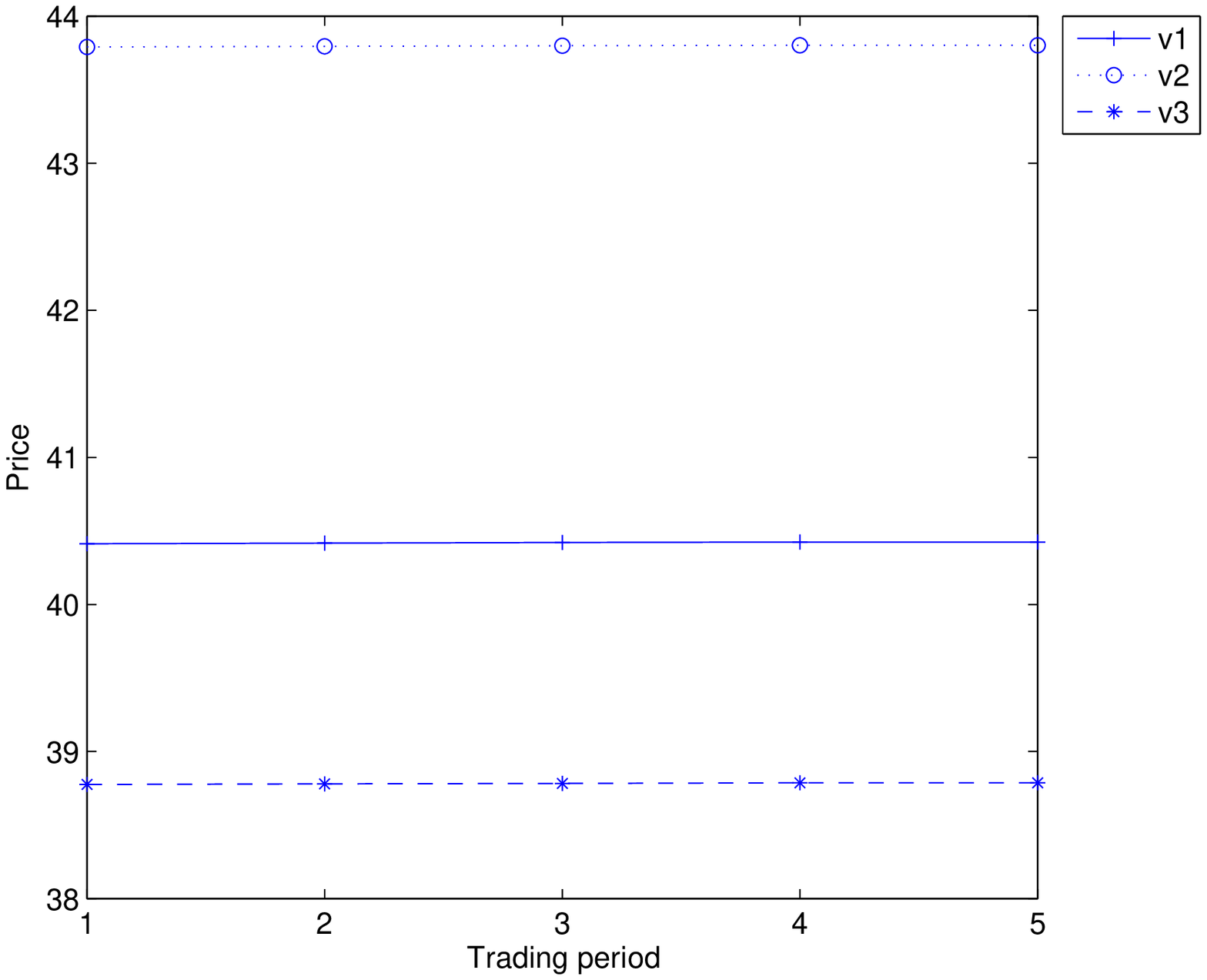}\includegraphics[bb=35bp 180bp 545bp 600bp,clip,scale=0.4]{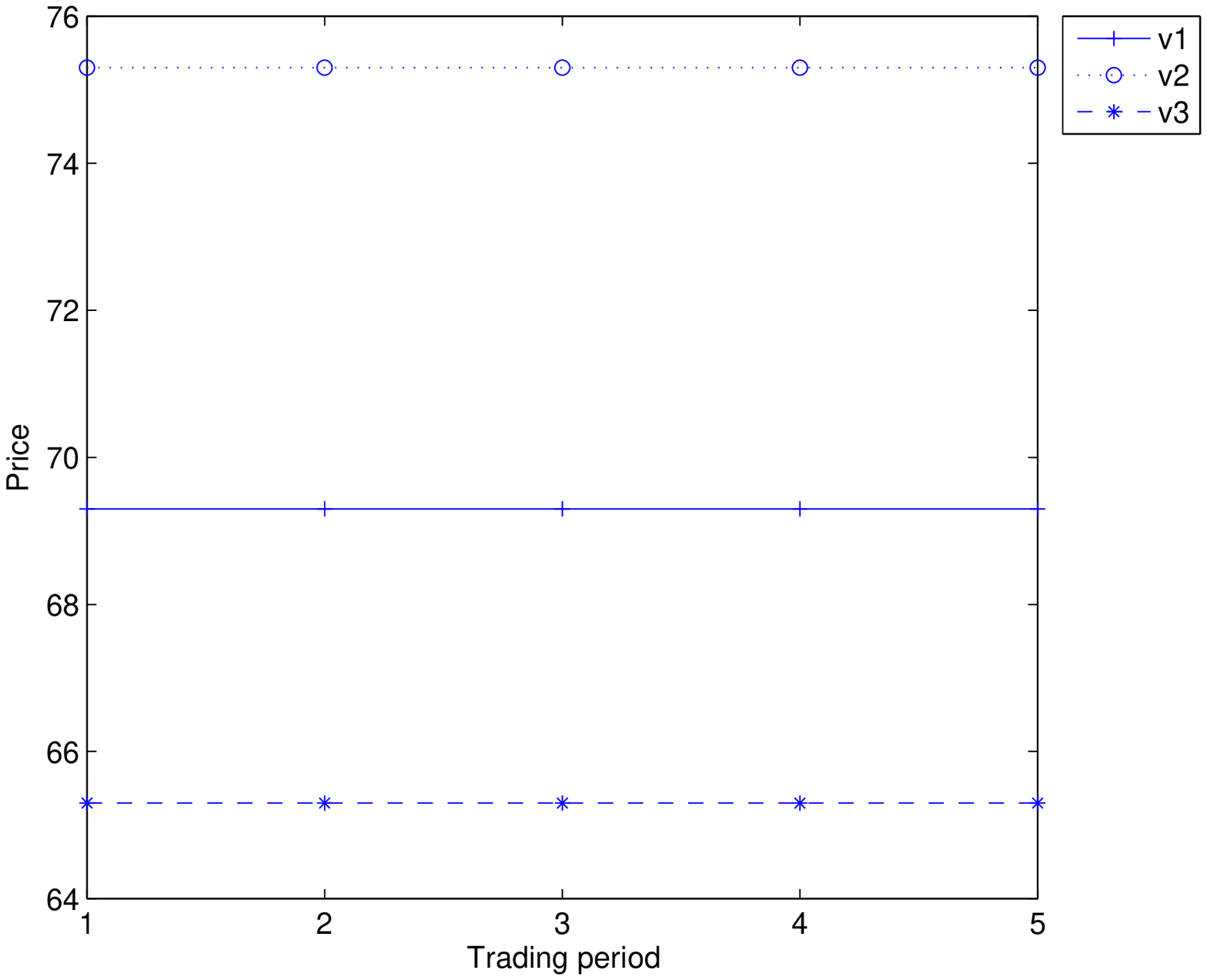}
\par\end{centering}

\caption{\label{fig:f7-1}Effect of parallel shifts in the term structure of
gas prices (right) on the term structure of electricity prices (left). }
\end{figure}

More interesting is the second experiment, where we calculated the
term structure of electricity prices for three different shapes of
the term structure of gas prices. In Figure \ref{fig:f7-1-1}, we
examined a constant term structure of gas prices, normal backwardation
and contango. We can see that the term structure of gas/emission prices
has a large impact on the term structure of electricity prices.

\begin{figure}
\begin{centering}
\includegraphics[bb=35bp 180bp 545bp 600bp,clip,scale=0.4]{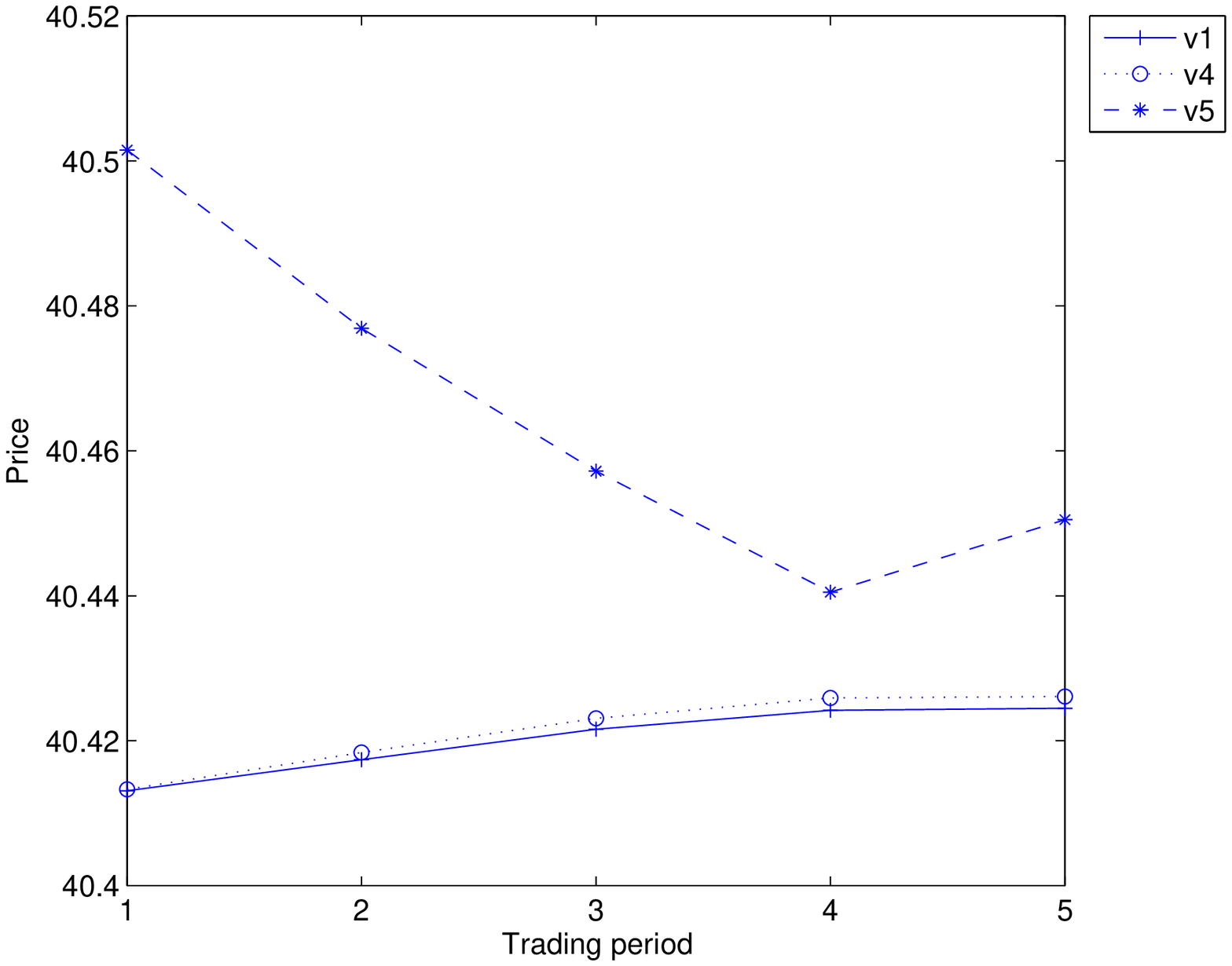}\includegraphics[bb=35bp 180bp 545bp 600bp,clip,scale=0.4]{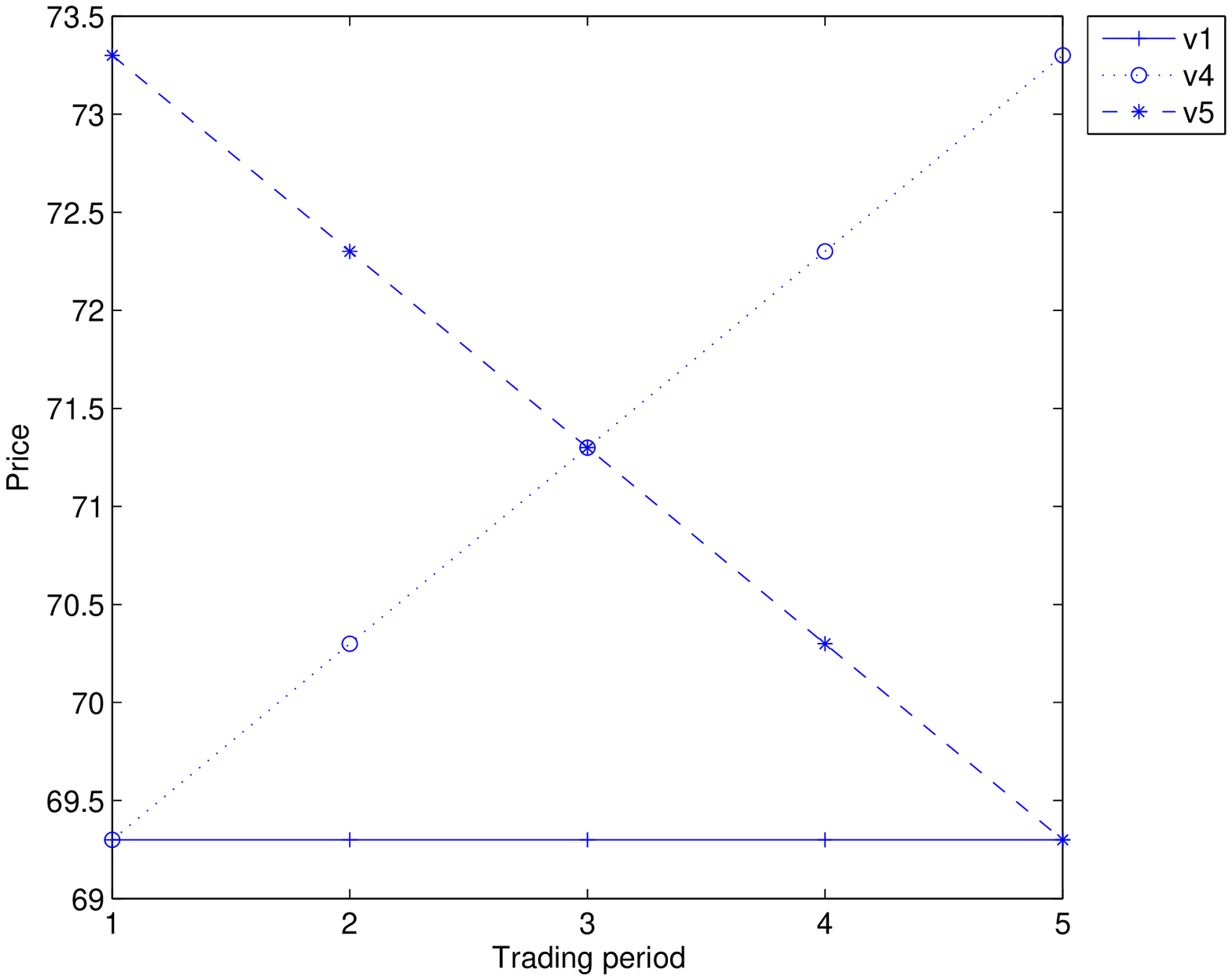}
\par\end{centering}

\caption{\label{fig:f7-1-1}Dependence of the term structure of electricity
prices (left) on the shape of the term structure of gas prices (right).}
\end{figure}

\subsection{The UK power system}

In this subsection we apply our model to the entire system of the
UK power plants. We focus on the coal, gas, and oil power plants,
because these power plants adapt their production to cover the changes
in demand and are thus responsible for setting the price. Nuclear
power plants do not have to be modeled explicitly because their ramp
up and ramp down constraints are so tight that their production is
almost constant over time. They usually deviate from the maximum production
only for maintenance reasons. Renewable sources and interconnectors
are not modeled explicitly, because they require a different treatment
not covered in this paper. In this subsection, we define demand $D\left(T_{j}\right)$
for all $j\in J$ as
\begin{equation}
D\left(T_{j}\right):=D_{act}\left(T_{j}\right)-P_{renw}\left(T_{j}\right)-P_{inter}\left(T_{j}\right)
\end{equation}
where $D_{act}\left(T_{j}\right)$ denotes the actual demand in the
UK power system, $P_{renw}\left(T_{j}\right)$ denotes the production
from all renewable sources including wind, solar, biomass, hydro and
pumped storage, and $P_{inter}\left(T_{j}\right)$ denotes the inflow
of power in to the UK power system through interconnectors. To make
this model useful in practice one has to model each of these terms,
but this exceeds the scope of this paper.

We calibrate the model and for each power plant $r\in R^{p,l}$, $l\in L$
and $p\in P$, we estimate the efficiency $c^{p,l,r}$, the carbon
emission intensity factor $g^{p,l,r}$, the maximum capacity $\overline{W}_{max}^{p,l,r}$,
the ramp up rate $\triangle\overline{W}_{max}^{p,l,r}$ and the ramp
down rate $\triangle\overline{W}_{min}^{p,l,r}$ using the historical
production as described in Subsection \ref{sub:Calibration}. The
covariance matrix was also estimated from the historical data using
the shrinkage approach described in \cite{ledoit2003improved}. 

Our goal is to calculate the electricity spot price with the information
available on 11/2/2013. We are interested in a delivery period from
4/4/2013 00:00:00 to 8/4/2013 00:00:00. We assume that there are two
types of power contract available. The first is a month ahead contract
traded on 15/3/2013 17:00:00 and covers the delivery over all four
days. The second type is a spot contract that requires an immediate
delivery and is traded for each half hour separately. We use future
prices of coal, gas, and oil as available on 11/2/2013. Since the
historical demand forecast is not available, we used the realized
demand instead, which is a standard practice in the literature. To
use this model in practice, one could use a demand forecast available
at the Elexon web page or develop a new approach. Since we do not
have the information about the ownership of the power plants, we assumed
that there is only one producer who owns all power plants connected
to the UK grid and only one consumer that is responsible for satisfying
the demand of the end users. In reality, market participants have
more information about the ownership that can be incorporated into
the model.

The numerical results in Figures \ref{fig:f8} - \ref{fig:f15} are
all calculated using $\lambda_{k}=10^{-5}$ for all $k\in P\cup C$,
and $\epsilon_{ij}=0.1$ and $\upsilon_{ij}=10^{-4}$ for all contracts.
The figure on the left hand side depicts the calculated energy mix
between coal and gas power plants, while the figure on the right hand
side depicts the actually observed energy mix. Both figures contain
also the spot price calculated by our model and the actually observed
spot price.

Figure \ref{fig:f8} shows that our model predicts the energy mix
very closely. Moreover, the daily pattern of the electricity price
predicted by our model is similar to the actually observed one. 

\begin{figure}
\begin{centering}
\includegraphics[bb=35bp 180bp 545bp 600bp,clip,scale=0.43]{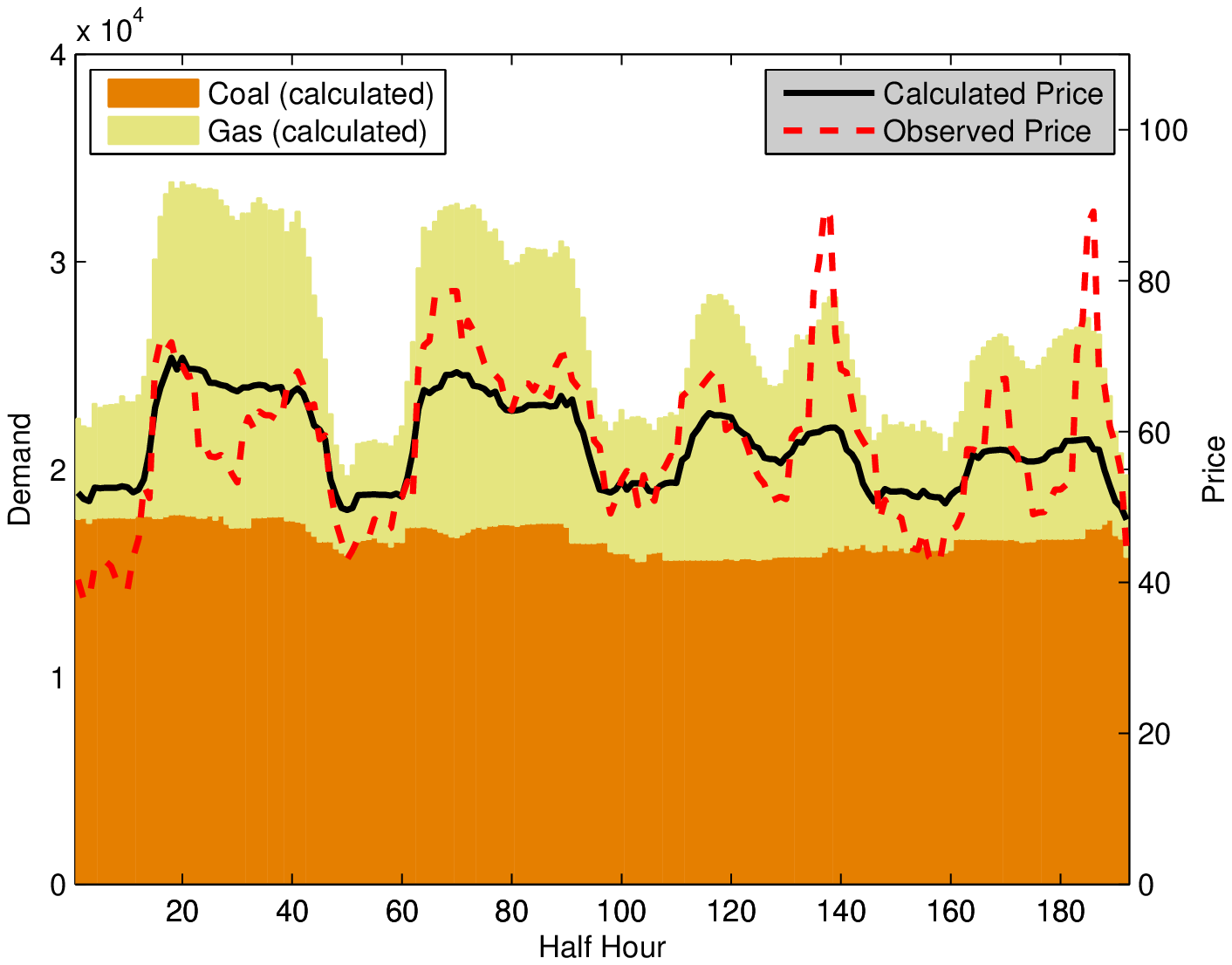}\includegraphics[bb=35bp 180bp 545bp 600bp,clip,scale=0.43]{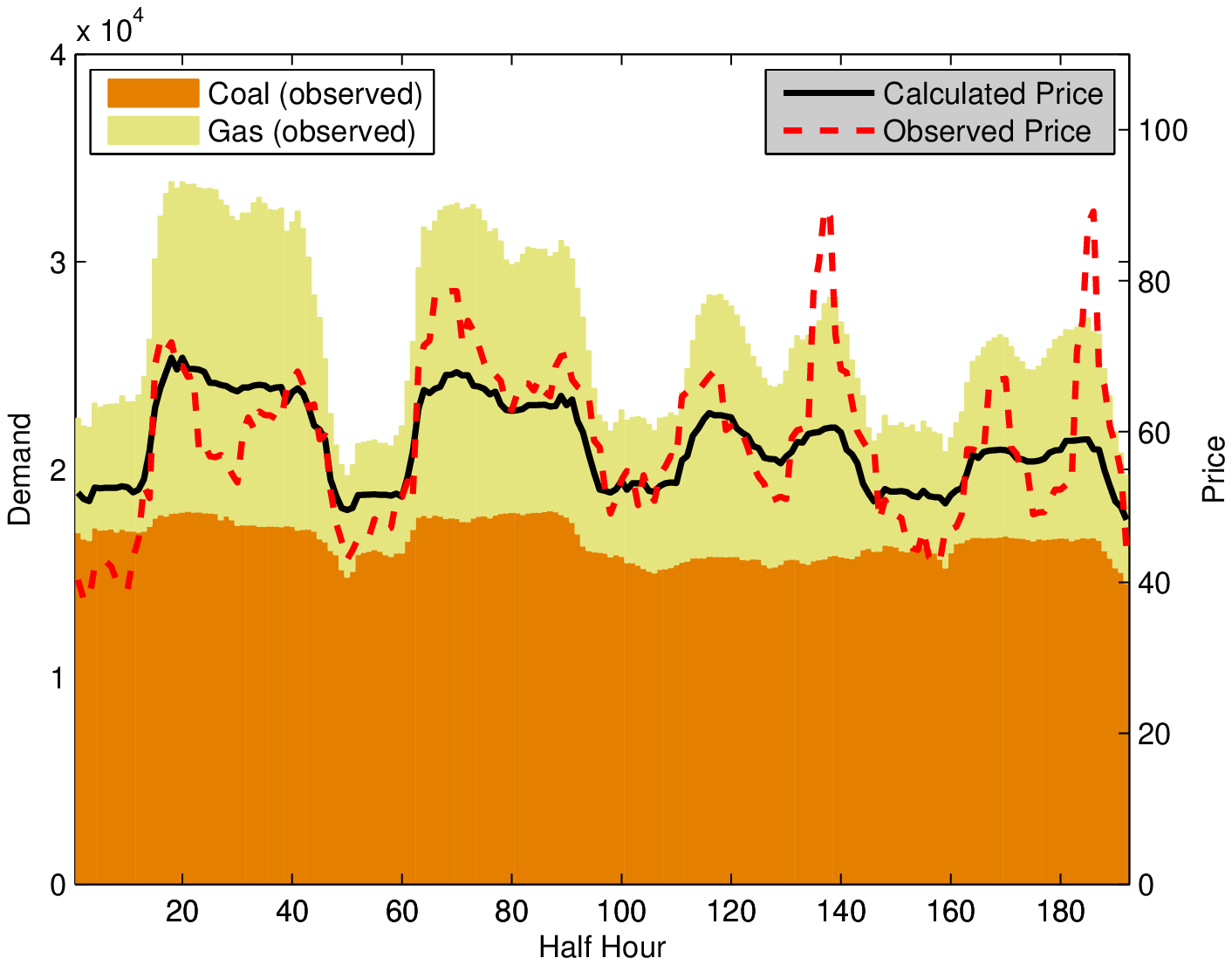}
\par\end{centering}

\caption{\label{fig:f8}$\lambda_{k}=10^{-7}$ for all $k\in P\cup C$, $\epsilon=0.1$,
and $\upsilon=10^{-4}$.}
\end{figure}

Figure \ref{fig:f11} shows how the results change when we tighten
the ramp up and ramp down constraints by 20\%. We can see that the
price is more serrated and slightly higher, because more expensive
power plants must be turned on to cover the changes in demand. 

\begin{figure}
\begin{centering}
\includegraphics[bb=35bp 180bp 545bp 600bp,clip,scale=0.43]{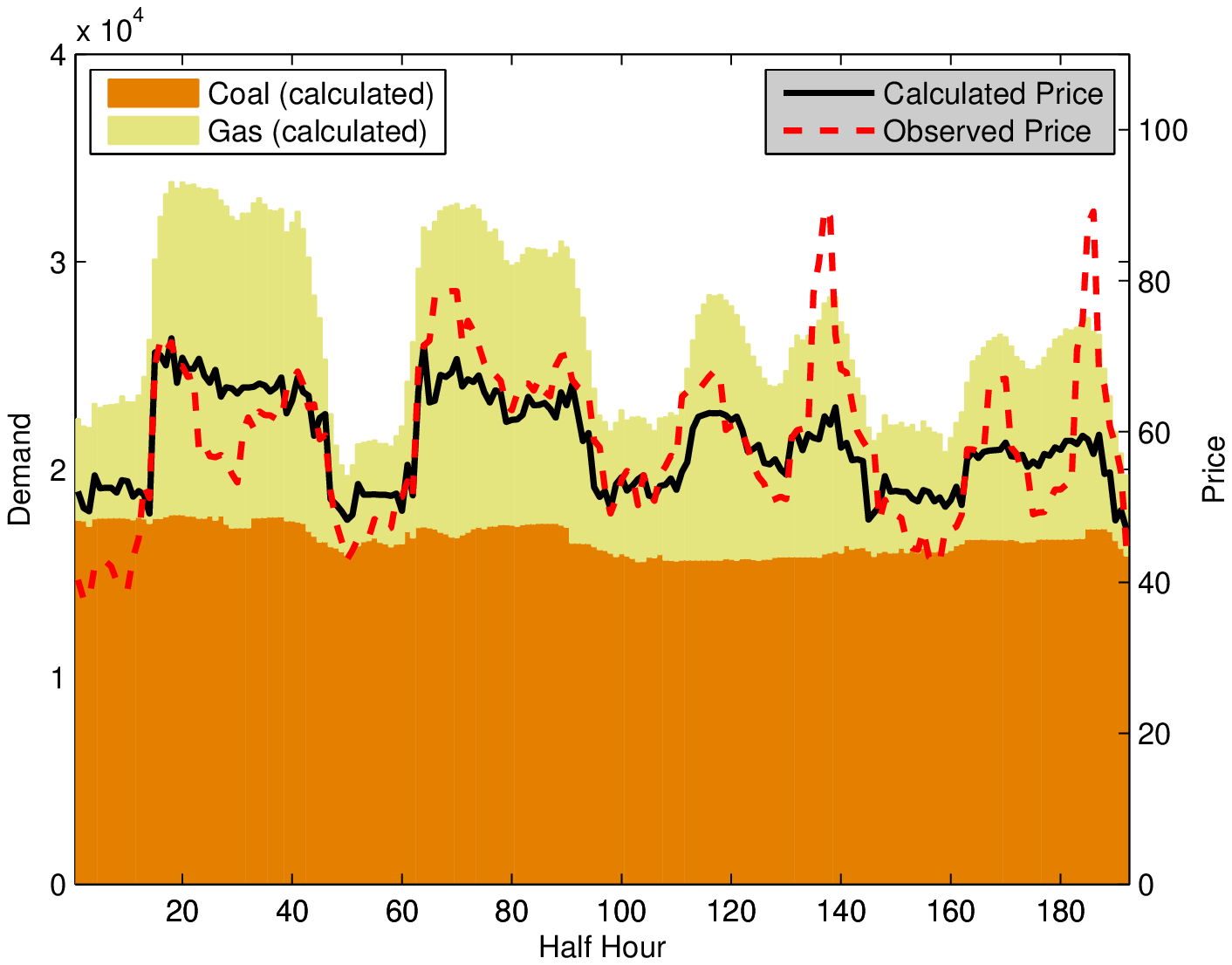}\includegraphics[bb=35bp 180bp 545bp 600bp,clip,scale=0.43]{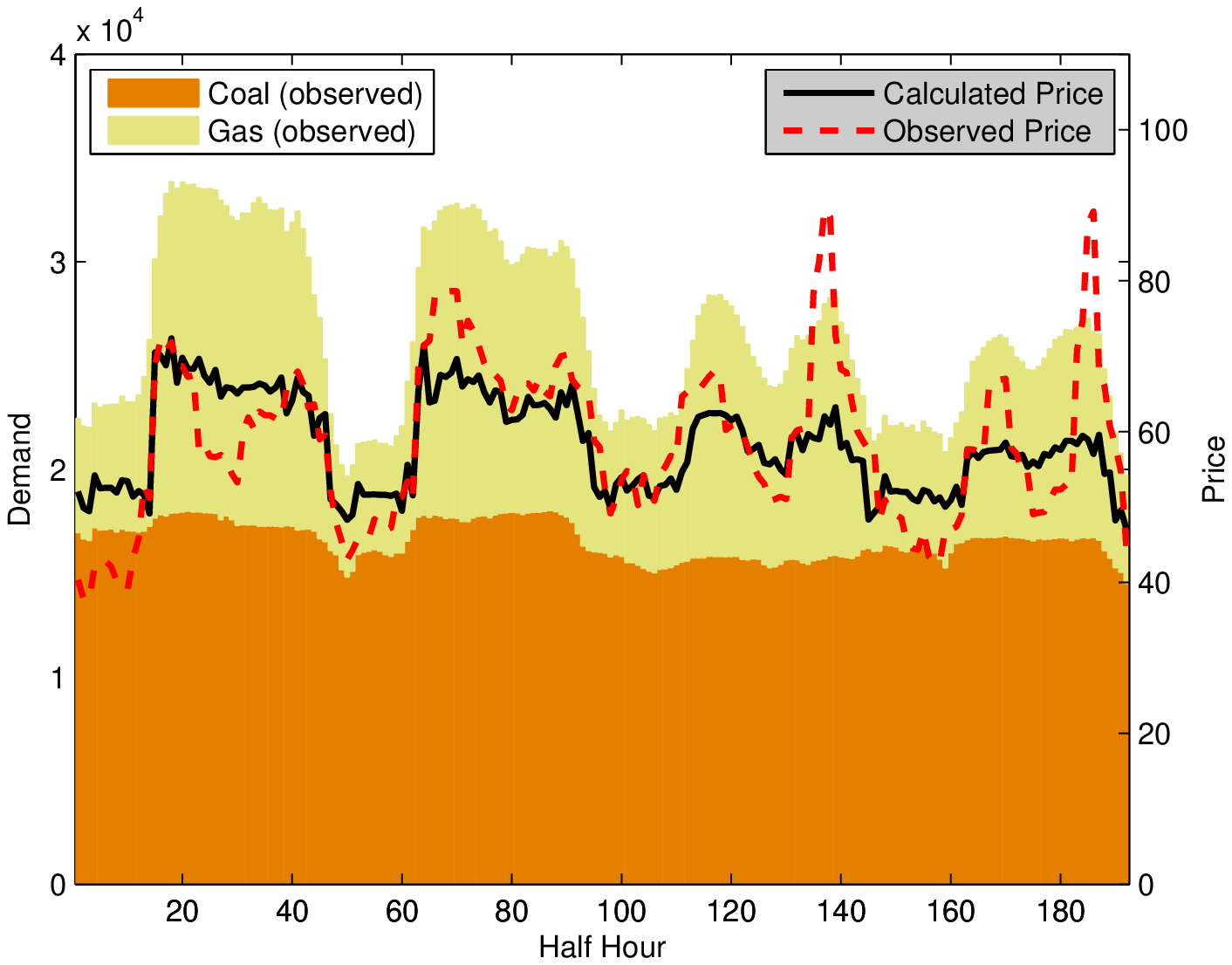}
\par\end{centering}

\caption{\label{fig:f11}Ramp up and ramp down constraints are tightened by
20\%.}
\end{figure}

Similarly, Figure \ref{fig:f12} shows how the results change when
we tighten the ramp up and ramp down constraints by 50\%. 

\begin{figure}
\begin{centering}
\includegraphics[bb=35bp 180bp 545bp 600bp,clip,scale=0.43]{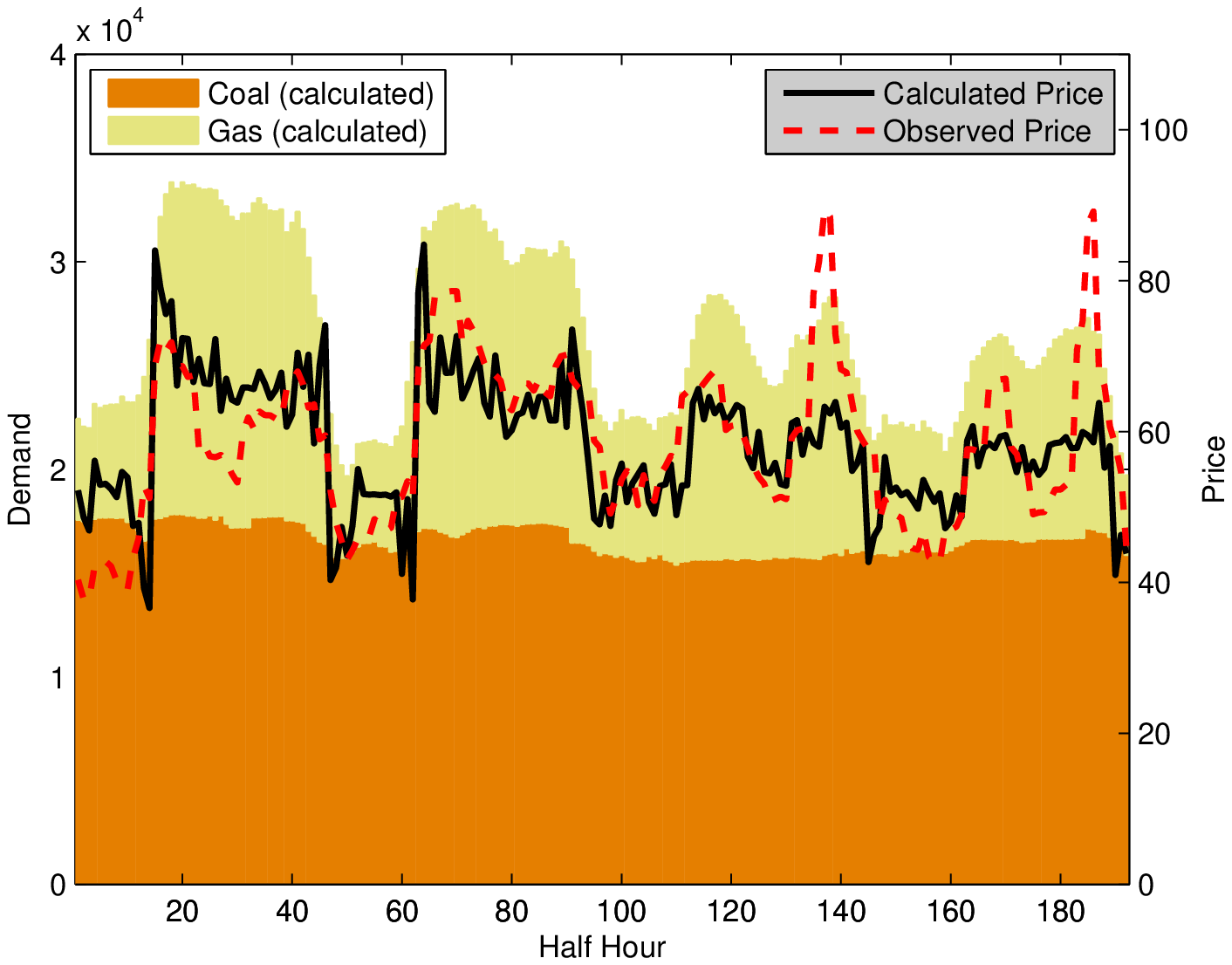}\includegraphics[bb=35bp 180bp 545bp 600bp,clip,scale=0.43]{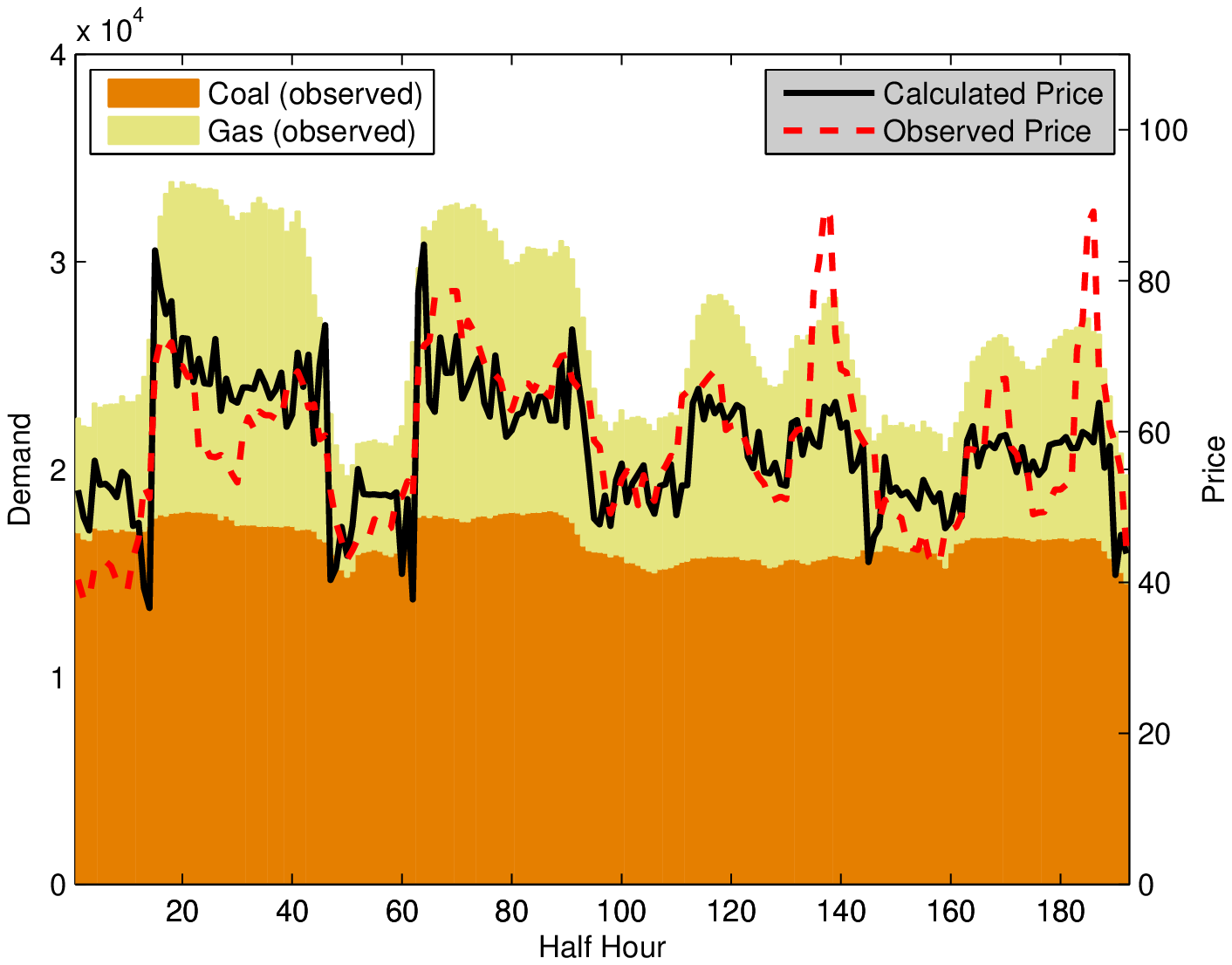}
\par\end{centering}

\caption{\label{fig:f12}Ramp up and ramp down constraints are tightened by
50\%.}
\end{figure}

Figure \ref{fig:f8} reveals a few problems of our model. Firstly,
we can see that the daily variation in the calculated price is smaller
than in the observed one. Secondly, the two spikes in the observed
price are not captured in our model. Our hypothesis is that the first
problem occurs, because our model does not incorporate the start-up
costs of the power plants correctly. Inclusion of the start-up costs
exceeds the scope of this paper and will be addressed separately.
Some preliminary results are shown in Figure \ref{fig:f15}. Our hypothesis
for the explanation of the second problem is the error in the demand
forecast. Spikes in the spot price occur when there is an unexpected
change in demand. Since only a few (usually rather inefficient Open
Cycle Gas Turbine) power plants are flexible enough to cover the demand,
they require a high electricity price to be turned on. Thus, the spikes
cannot be forecast two months before the delivery. An investigation
of the predictive power of our model to forecast spikes closer to
delivery is left for future work.

\begin{figure}
\begin{centering}
\includegraphics[bb=35bp 180bp 545bp 600bp,clip,scale=0.43]{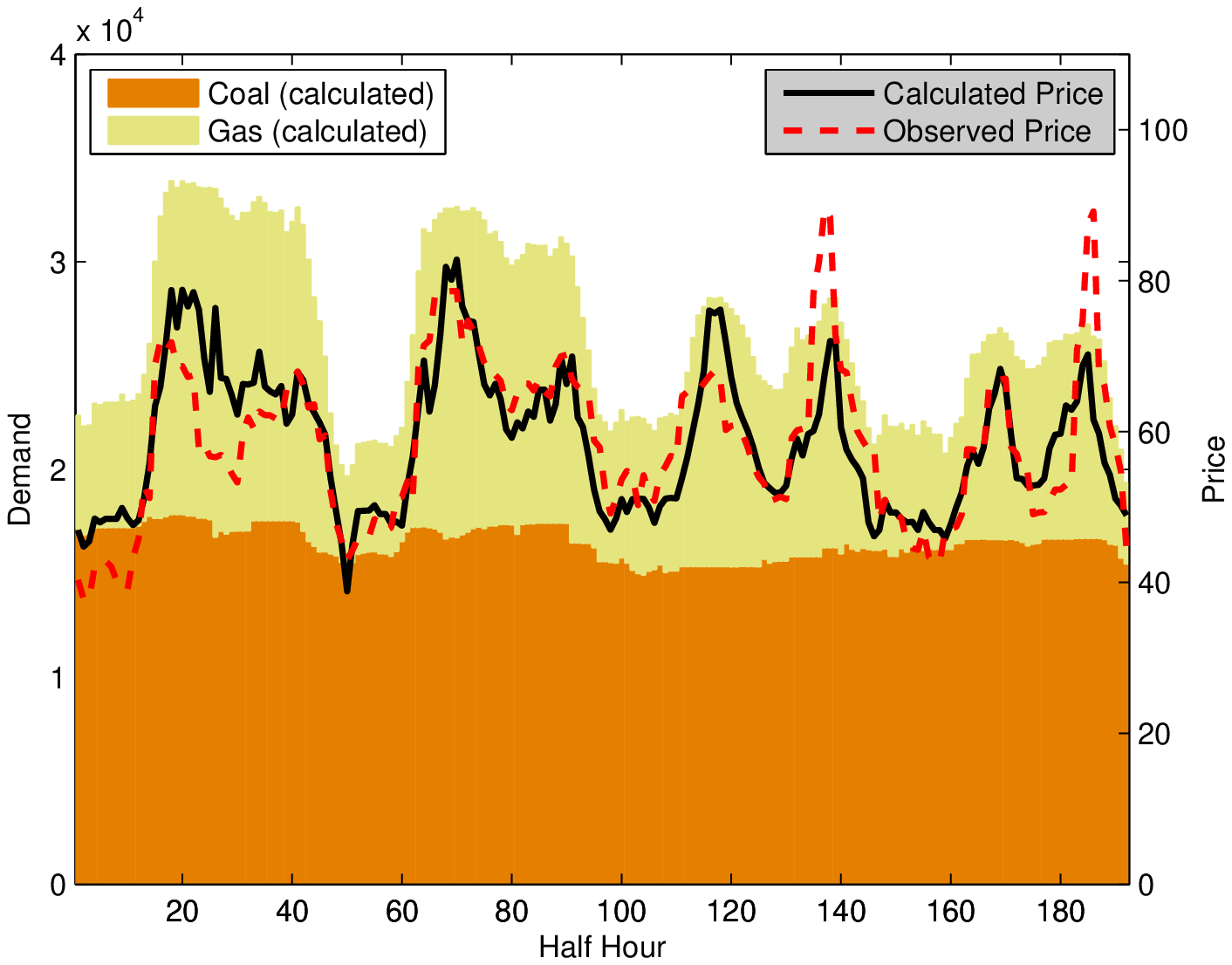}\includegraphics[bb=35bp 180bp 545bp 600bp,clip,scale=0.43]{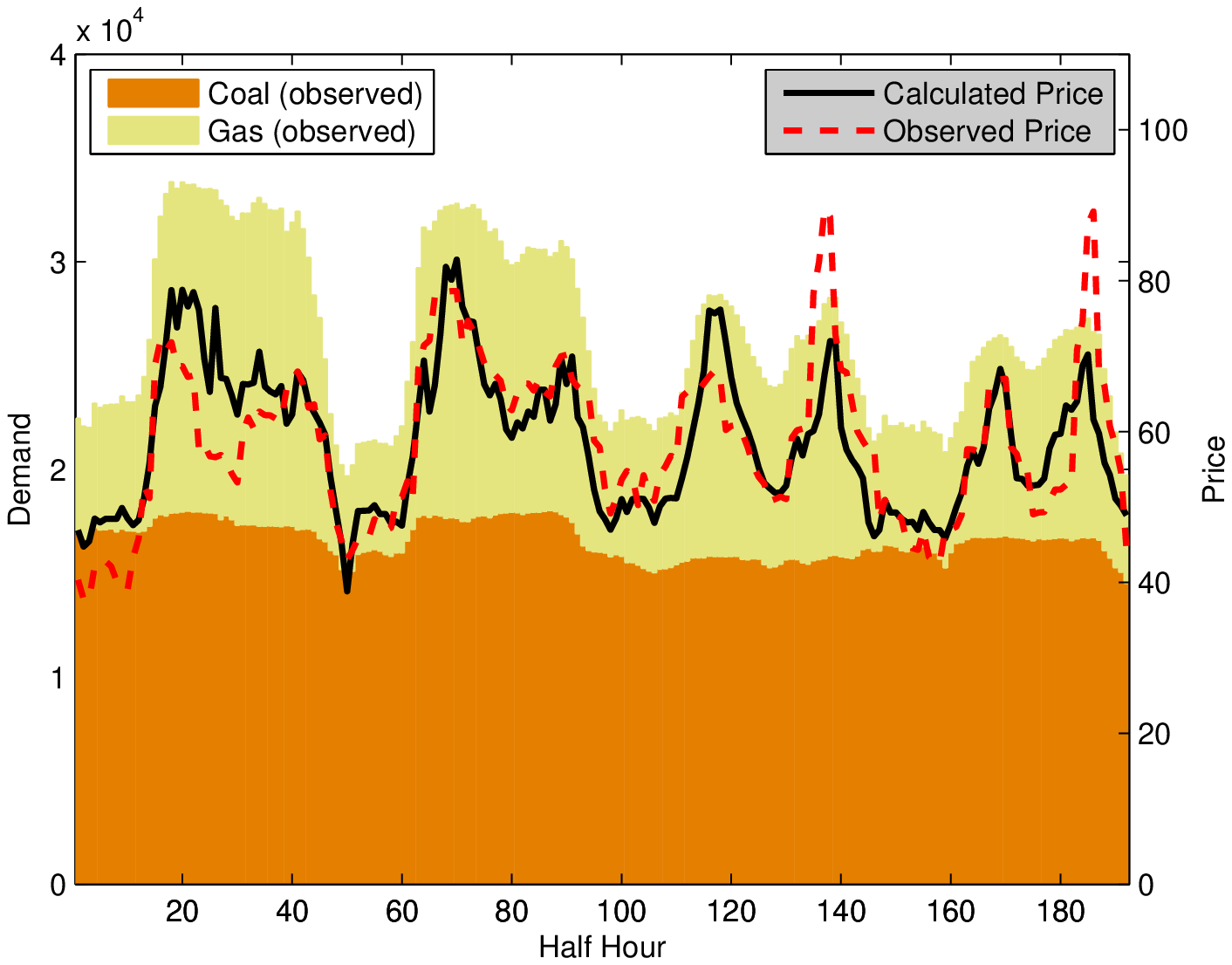}
\par\end{centering}

\caption{\label{fig:f15}Preliminary result after including the startup costs.}
\end{figure}

\section{Conclusions\label{sec:Conclusions}}

In this paper we proposed a tractable quadratic programming reformulation
for calculating the equilibrium term structure of electricity prices
in a market with multiple consumers and producers. Reformulation can
be used for solving general equilibrium optimization problems that
include inequality constraints and can be seen as an extension of
the shadow price approach. 

We have extended the term structure electricity price model proposed
in \cite{troha2014theexistence} to a more realistic setting by taking
into account transaction costs and liquidity considerations as well
as realistic electricity contracts. 

The section on numerical results first shows how to calibrate various
parameters of the model. We investigated how different parameters
affect the equilibrium electricity price. Interestingly, we found
that consumers have very little power to influence the electricity
price. They have to satisfy the demand of the end users regardless
of the price level. To maintain their profitability, they propagate
all increased costs to the end users. Producers, on the other hand,
have much more power to affect the electricity prices. They have a
large impact on the price level and a slightly smaller impact on the
term structure itself. The market micro-structure and liquidity of
the contracts can significantly affect the term structure of the electricity
price. In an extreme case, changes in the liquidity even altered the
term structure from normal backwardation to contango. The term structure
of electricity prices is also considerably affected by the term structure
of fuels and emissions.

We investigated the predictive power of our model when applied to
the realistic system of UK power plants. The results show that our
model predicts the prices quite well and that the predicted price
exhibits the main features of the electricity price. Numerical examples
show the effect of tightened ramp up and ramp down constraints. We
have also identified two areas where a further improvement is needed.
Some very promising preliminary numerical results of the further improvements
are also included.

\bibliographystyle{siam}
\bibliography{mat_fin_bib,transfer}

\end{document}